\documentclass[journal]{IEEEtran}
\usepackage{amsthm,amssymb,amsmath,rangecite}
\usepackage{commath,amsmath,graphicx,epstopdf,amsthm,amssymb,float,cite,color,array}
\usepackage[ruled,vlined]{algorithm2e}
\newtheorem{Theorem}{Theorem}

\newtheorem{proposition}[Theorem]{Proposition}
\newtheorem{definition}{Definition}

\newtheorem{lemma}[Theorem]{Lemma}
\newcommand{\ud}{\,\mathrm{d}}
\usepackage{lscape}
\usepackage{cite}
\usepackage{url}
\usepackage{hyperref}

\newcommand{\avec}{{\bf{a}}}
\newcommand{\bvec}{{\bf{b}}}
\newcommand{\cvec}{{\bf{c}}}
\newcommand{\dvec}{{\bf{d}}}

\newcommand{\fvec}{{\bf{f}}}

\newcommand{\uvec}{{\bf{u}}}

\newcommand{\xvec}{{\bf{x}}}

\newcommand{\nvec}{{\bf{n}}}

\newcommand{\gvec}{{\bf{g}}}

\newcommand{\hvec}{{\bf{h}}}

\newcommand{\etavec}{{\bf{\eta}}}

\newcommand{\zerovec}{{\bf{0}}}

\newcommand{\Gammamat}{{\bf{\Gamma}}}

\newcommand{\Amat}{{\bf{A}}}
\newcommand{\Bmat}{{\bf{B}}}
\newcommand{\Cmat}{{\bf{C}}}
\newcommand{\Dmat}{{\bf{D}}}

\newcommand{\Fmat}{{\bf{F}}}

\newcommand{\Hmat}{{\bf{H}}}
\newcommand{\Jmat}{{\bf{J}}}

\newcommand{\Imat}{{\bf{I}}}

\newcommand{\Pmat}{{\bf{P}}}

\newcommand{\Smat}{{\bf{S}}}
\newcommand{\Tmat}{{\bf{T}}}

\newcommand{\Umat}{{\bf{U}}}
\newcommand{\Vmat}{{\bf{V}}}
\newcommand{\Wmat}{{\bf{W}}}

\newcommand{\define}{\stackrel{\triangle}{=}}





\def\psivec{{\mbox{\boldmath $\psi$}}}

\def\tauvec{{\mbox{\boldmath $\tau$}}}

\def\upsilonvec{{\mbox{\boldmath $\upsilon$}}}

\def\xivec{{\mbox{\boldmath $\xi$}}}

\def\gammavec{{\mbox{\boldmath $\gamma$}}}

\def\etavec{{\mbox{\boldmath $\eta$}}}

\def\thetavec{{\mbox{\boldmath $\theta$}}}

\def\lambdavec{{\mbox{\boldmath $\lambda$}}}

\def\deltavec{{\mbox{\boldmath $\delta$}}}

\def\thetavecsmall{{\mbox{\boldmath {\scriptsize $\theta$}}}}

\def\etavecsmall{{\mbox{\boldmath {\scriptsize $\eta$}}}}

\newcommand{\be}{\begin{equation}}
\newcommand{\ee}{\end{equation}}
\newcommand{\beqna}{\begin{eqnarray}}
\newcommand{\eeqna}{\end{eqnarray}}


\begin{document}
\title{Cram$\acute{\text{e}}$r-Rao Bound for Constrained Parameter Estimation Using Lehmann-Unbiasedness}
\author{Eyal~Nitzan,~\IEEEmembership{Student~Member,~IEEE,}
Tirza~Routtenberg,~\IEEEmembership{Senior~Member,~IEEE,}
and~Joseph~Tabrikian,~\IEEEmembership{Senior~Member,~IEEE}
\thanks{This research was partially supported by THE ISRAEL SCIENCE FOUNDATION (grant No. 1160/15 and grant No. 1173/16).}
\thanks{{\footnotesize{E. Nitzan, T. Routtenberg, and J. Tabrikian are with the Department of Electrical and Computer Engineering Ben-Gurion University of the Negev Beer-Sheva 84105, Israel, e-mail: eyalni@ee.bgu.ac.il, tirzar@bgu.ac.il, joseph@bgu.ac.il.}}}
}

\maketitle
\nopagebreak

\begin{abstract}
The constrained Cram$\acute{\text{e}}$r-Rao bound (CCRB) is a lower bound on the mean-squared-error (MSE) of estimators that satisfy some unbiasedness conditions. Although the CCRB unbiasedness conditions are satisfied asymptotically by the constrained maximum likelihood (CML) estimator, in the non-asymptotic region these conditions are usually too strict and the commonly-used estimators, such as the CML estimator, do not satisfy them. Therefore, the CCRB may not be a lower bound on the MSE matrix of such estimators. In this paper, we propose a new definition for unbiasedness under constraints, denoted by C-unbiasedness, which is based on using Lehmann-unbiasedness with a weighted MSE (WMSE) risk and taking into account the parametric constraints. In addition to C-unbiasedness, a Cram$\acute{\text{e}}$r-Rao-type bound on the WMSE of C-unbiased estimators, denoted as Lehmann-unbiased CCRB (LU-CCRB), is derived. This bound is a scalar bound that depends on the chosen weighted combination of estimation errors. It is shown that C-unbiasedness is less restrictive than the CCRB unbiasedness conditions. Thus, the set of estimators that satisfy the CCRB unbiasedness conditions is a subset of the set of C-unbiased estimators and the proposed LU-CCRB may be an informative lower bound in cases where the corresponding CCRB is not. In the simulations, we examine linear and nonlinear estimation problems under nonlinear constraints in which the CML estimator is shown to be C-unbiased and the LU-CCRB is an informative lower bound on the WMSE, while the corresponding CCRB on the WMSE is not a lower bound and is not informative in the non-asymptotic region.
\end{abstract}

\begin{IEEEkeywords}
Non-Bayesian parameter estimation, weighted mean-squared-error (WMSE), parametric constraints, constrained Cram$\acute{\text{e}}$r-Rao bound (CCRB), Lehmann-unbiasedness
\end{IEEEkeywords}
\section{Introduction} \label{sec:Introduction}
In the non-Bayesian framework, the Cram$\acute{\text{e}}$r-Rao bound (CRB) \cite{Rao_paper,crlb},\cite{KAY} provides a lower bound on the mean-squared-error (MSE) matrix of any mean-unbiased estimator and is used as a benchmark for parameter estimation performance analysis. In some cases, scalar risks for multi-parameter estimation are of interest, for example due to tractability or complexity issues. Corresponding Cram$\acute{\text{e}}$r-Rao-type bounds for this case, can be found in e.g. \cite{HERO_UNIFORM_CRB,UNIFORM_ELDAR,YONINA}. In constrained parameter estimation \cite{Hero_constraint}, the unknown parameter vector satisfies given parametric constraints. In some cases, the CRB for constrained parameter estimation can be obtained by a reparameterization of the original problem. However, this approach may be intractable and may hinder insights into the original unconstrained problem \cite{Stoica_Ng}. In addition, mean-unbiased estimators may not exist for the reparameterized problem, as occurs in cases where the resulting distribution is periodic \cite{TODROS_WINNIK}, \cite{PHASE_KAY}.\\
\indent
In the pioneering work in \cite{Hero_constraint}, the constrained CRB (CCRB) was derived for constrained parameter estimation without reparameterizing the original problem. A simplified derivation of the CCRB was presented in \cite{Marzetta}. The CCRB was extended for various cases, such as parameter estimation with a singular Fisher information matrix (FIM) in \cite{Stoica_Ng}, complex vector parameter estimation in \cite{CCRB_complex}, biased estimation in \cite{BenHaim}, and sparse parameter vector estimation in \cite{sparse_con}. Alternative derivations of the CCRB from a model fitting perspective and via norm minimization were presented in \cite{Moore_fitting} and \cite{normC}, respectively. A hybrid Bayesian and non-Bayesian CCRB and the CCRB under misspecified models were derived in \cite{CCRB_hybrid} and \cite{CCRB_misspecified}, respectively. Computations of the CCRB in various applications can be found, for example, in \cite{HERO_EMISSION,SADLER_sim,WIJNHOLDS_sim,ROUTTENBERG_sim,HERO_HYPER,MENNI_sim}. In addition to the CCRB, Cram$\acute{\text{e}}$r-Rao-type bounds for estimation of parameters constrained to lie on a manifold were derived in \cite{HENDRIKS,XAVIER_BARROSO,SMITH,BOUMAL}. The constrained Bhattacharyya bound was derived in \cite{HERO_ICASSP} and the constrained Hammersley-Chapman-Robbins (HCR) bound was derived in \cite{Hero_constraint} by using the classical HCR bound in which the test-points were taken from the constrained set.\\
\indent
A popular estimator for constrained parameter estimation is the constrained maximum likelihood (CML) estimator \cite{Marzetta,Moore_fitting,AITCHISON_SILVEY,SILVEY_LAGRANGE,CROWDER,OSBORNE,LUO_BOUCHARD,Moore_phd,Moore_scoring}. This estimator is obtained by maximizing the likelihood function subject to parametric constraints. It is shown in \cite{Marzetta,Moore_fitting} for nonsingular and singular FIM, respectively, that if there exists a mean-unbiased estimator satisfying the constraints that achieves the CCRB, then this estimator is a stationary point of the constrained likelihood maximization. Asymptotic properties of the CML estimator can be found in \cite{AITCHISON_SILVEY,SILVEY_LAGRANGE,CROWDER,OSBORNE,LUO_BOUCHARD,Moore_phd,Moore_scoring}, under different assumptions. In particular, under mild assumptions, the CML estimator asymptotically satisfies the CCRB unbiasedness conditions and attains the CCRB for both linear and nonlinear constraints, as shown in \cite{OSBORNE} and \cite{Moore_scoring}, respectively. However, in the non-asymptotic region the CML estimator may not satisfy the CCRB unbiasedness conditions \cite{SOMEKH_LESHEM,SSP_EYAL} and therefore, the CCRB may not be an informative lower bound for CML performance in the non-asymptotic region. Other estimation methods for constrained parameter estimation are based on minimax criteria, e.g. \cite{PILZ,ELDAR_ROBUST}, and least squares criteria, e.g. \cite{Moore_scoring,STOICA_LINEAR,UHLICH}.\\
\indent
It is well known that unrestricted minimization of the non-Bayesian MSE yields the trivial, parameter-dependent estimator. In order to avoid this loophole, mean-unbiasedness of estimators is usually imposed \cite{point_est,TTB1}, {\it i.e.} only estimators with zero bias are considered. In early works on constrained parameter estimation \cite{Stoica_Ng,Marzetta,Moore_fitting,Moore_scoring}, the CCRB was assumed to be a lower bound for estimators that satisfy the constraints and have zero bias in the constrained set. It was shown in \cite{SOMEKH_LESHEM} that zero-bias requirement may be too strict. In addition, it was shown (e.g. \cite{BenHaim,Moore_phd}) that the CCRB can be derived without requiring the estimator to satisfy the constraints. The unbiasedness conditions of the CCRB were thoroughly discussed in \cite{sparse_con} and were shown to be less restrictive than the unbiasedness conditions of the conventional CRB. However, the CCRB unbiasedness conditions may still be too strict for commonly-used estimators, such as the CML estimator.\\
\indent
In this paper, the concept of unbiasedness in the Lehmann sense under parametric constraints, named C-unbiasedness, is developed. The Lehmann-unbiasedness \cite{point_est,LEHMANN_CONCEPT} generalizes the mean-unbiasedness to arbitrary cost functions and arbitrary parameter space. It has been used in various works for derivation of performance bounds under different cost functions \cite{PCRB,BAR1,CYCLIC,SELECTION}. Using the C-unbiasedness concept, we derive a new constrained Cram$\acute{\text{e}}$r-Rao-type lower bound, named Lehmann-unbiased CCRB (LU-CCRB), on the weighted MSE (WMSE) \cite{EXTENDED_ZZ,PILZ,ELDAR_WEIGHTED_MSE,GRUBER} of any C-unbiased estimator. It is shown that for linear constraints and/or in the asymptotic region, the proposed LU-CCRB coincides with the corresponding CCRB. In the simulations, the CML estimator is shown to be C-unbiased for orthogonal linear estimation problem under norm constraint and for complex amplitude estimation with amplitude constraint and unknown frequency. Therefore, the LU-CCRB is a lower bound for CML performance in these cases. In contrast, the corresponding CCRB on the WMSE is not a lower bound in the considered cases, in the non-asymptotic region, and is shown to be significantly higher than the WMSE of the CML estimator. These results demonstrate that the LU-CCRB provides an informative WMSE lower bound in cases, where the corresponding CCRB on the WMSE, and consequently also the matrix CCRB, are not lower bounds.\\
\indent
The WMSE is a scalar risk for multi-parameter estimation that allows the consideration of any weighted sum of squared linear combinations of the estimation errors. In particular, the MSE matrix trace is a special case of the WMSE. Unlike the CCRB, which is a matrix lower bound, the proposed LU-CCRB is a family of scalar bounds, which provides a different lower bound for each weighted sum of squared linear combinations of the estimation errors under corresponding C-unbiasedness condition. An early derivation of C-unbiasedness and lower bounds on a projected MSE matrix appear in the conference paper \cite{SAM2012constraint}. In this work, we focus on WMSE rather than the projected MSE.\\
\indent
The remainder of the paper is organized as follows. In Section \ref{sec:Notations and background}, we define the notations and present relevant background for this paper. The C-unbiasedness and the LU-CCRB are derived in Sections \ref{sec:Unbiasedness under constraints} and \ref{sec:LU-CCRB}, respectively. Our simulations appear in Section \ref{sec:Examples}. In Section \ref{sec:Conclusion}, we give our conclusions.

\section{Notations and background} \label{sec:Notations and background}
\subsection{Notations and constrained model}
Throughout this paper, we denote vectors by boldface lowercase letters and matrices by boldface uppercase letters. The $m$th element of the vector $\avec$ and the $(m,k)$th element of the matrix $\Amat$ are denoted by $a_m$ and $[\Amat]_{m,k}$, respectively. A subvector of $\avec$ with indices $l,l+1,\ldots,l+K$ is denoted by $[\avec]_{l:l+K}$. The identity matrix of dimension $K\times K$ is denoted by $\Imat_K$ and $\zerovec$ denotes a vector/matrix of zeros. The notations ${\rm{Tr}}(\cdot)$ and ${\rm{vec}}(\cdot)$ denote the trace and vectorization operators, where the vectorization operator stacks the columns of its input matrix into a column vector. The notations $(\cdot)^T$, $(\cdot)^{-1}$, and $(\cdot)^{\dagger}$ denote the transpose, inverse, and Moore-Penrose pseudo-inverse, respectively. The notation $\Amat\succeq\Bmat$ implies that $\Amat-\Bmat$ is a positive semidefinite matrix. The column and null spaces of a matrix are denoted by ${\mathcal{R}}(\cdot)$ and ${\mathcal{N}}(\cdot)$, respectively. The matrices $\Pmat_\Amat=\Amat\Amat^{\dagger}=\Amat(\Amat^T\Amat)^{\dagger}\Amat^T$ and $\Pmat_\Amat^\bot=\Imat_M-\Pmat_\Amat$ are the orthogonal projection matrices onto ${\mathcal{R}}(\Amat)$ and ${\mathcal{N}}(\Amat^T)$, respectively \cite{CAMPBELL}. The notation $\Amat\otimes\Bmat$ is the Kronecker product of the matrices $\Amat$ and $\Bmat$. The gradient of a vector function $\gvec$ of $\thetavec$, $\nabla_{\thetavecsmall}\gvec(\thetavec)$, is a matrix in which $[\nabla_{\thetavecsmall}\gvec(\thetavec)]_{m,k}=\frac{{\partial}g_m(\thetavecsmall)}{{\partial}\theta_k}$. The real and imaginary parts of an argument are denoted by $\text{Re}\{\cdot\}$ and $\text{Im}\{\cdot\}$, respectively, and $j\define{\sqrt{-1}}$. The notation $\angle{\cdot}$ stands for the phase of a complex scalar, which is assumed to be restricted to the interval $[-\pi,\pi)$.\\
\indent
Let $(\Omega_\xvec,{\cal{F}},P_\thetavecsmall)$ denote a probability space, where $\Omega_\xvec$ is the observation space, ${\cal{F}}$ is the $\sigma$-algebra on $\Omega_\xvec$, and $\left\{P_\thetavecsmall \right\}$ is a family of probability measures parameterized by the deterministic unknown parameter vector $\thetavec \in {\mathbb{R}}^M$. Each probability measure $P_\thetavecsmall$ is assumed to have an associated probability density function (pdf), $f_\xvec(\cdot;\thetavec)$, such that
the expected value of any measurable function $Z:\Omega_\xvec\rightarrow {\mathbb{R}}$  with respect to (w.r.t.) $P_\thetavecsmall$ satisfies ${\rm{E}}[Z(\xvec);\thetavec]\define \int_{\Omega_\xvec} Z(\gammavec) f_\xvec(\gammavec;\thetavec){\ud} \gammavec$. For simplicity of notations, we omit $\thetavec$ from the notation of expectation and denote it by ${\rm{E}}[\cdot]$, whenever the value of $\thetavec$ is clear from the context. The conditional expectation given event $\mathcal{A}$ and parameterized by $\thetavec$ is denoted by ${\rm{E}}[\cdot|\mathcal{A};\thetavec]$.\\
\indent
We suppose that $\thetavec$ is restricted to the set
\be \label{set_def}
\Theta_\fvec=\{\thetavec\in{\mathbb{R}}^M:\fvec(\thetavec)=\zerovec\},
\ee
where $\fvec:{\mathbb{R}}^M\rightarrow {\mathbb{R}}^K $ is a continuously differentiable function. It is assumed that $0\leq K<M$ and that the matrix, $\Fmat(\thetavec)=\nabla_{\thetavecsmall}\fvec(\thetavec)\in{\mathbb{R}}^{K\times M}$,  
has full row rank for any $\thetavec\in\Theta_\fvec$, {\it i.e.} the constraints are not redundant. Thus, for any $\thetavec\in\Theta_\fvec$ there exists a matrix $\Umat(\thetavec)\in{\mathbb{R}}^{M\times(M-K)}$, such that
\be \label{one}
\Fmat(\thetavec)\Umat(\thetavec)=\zerovec
\ee
and
\be \label{two}
\Umat^T(\thetavec)\Umat(\thetavec)=\Imat_{M-K}.
\ee
The case $K=0$ implies an unconstrained estimation problem in which $\Umat(\thetavec)=\Imat_M$. Under the assumption that each element of $\Umat(\thetavec)$ is differentiable w.r.t. $\thetavec,~\forall \thetavec\in\Theta_\fvec$, we define
\be\label{Vmat_define}
\Vmat_m(\thetavec)\define\nabla_{\thetavecsmall}\uvec_m(\thetavec),
\ee
where $\uvec_m(\thetavec)$ is the $m$th column of $\Umat(\thetavec),~\forall m=1,\ldots,M-K$.\\
\indent
An estimator of $\thetavec$ based on a random observation vector $\xvec\in\Omega_\xvec$ is denoted by $\hat{\thetavec}:\Omega_\xvec\rightarrow{\mathbb{R}}^M$,  where $\hat{\thetavec}(\xvec)$ does not necessarily satisfy the constraints. For the sake of simplicity, in the following $\hat{\thetavec}(\xvec)$ is replaced by $\hat{\thetavec}$. The bias of an estimator is denoted by
\be\label{bias_definition}
\bvec_{\hat{\thetavecsmall}}(\thetavec)\define{\rm{E}}[\hat{\thetavec}-\thetavec].
\ee
Under the assumption that each element of $\bvec_{\hat{\thetavecsmall}}(\thetavec)$ is differentiable w.r.t. $\thetavec,~\forall \thetavec\in\Theta_\fvec$, we define the bias gradient 
\be\label{bias_gradient}
\Dmat_{\hat{\thetavecsmall}}(\thetavec)\define\nabla_{\thetavecsmall}\bvec_{\hat{\thetavecsmall}}(\thetavec).
\ee

\subsection{WMSE and CCRB}\label{subsec:WMSE and CCRB}
In this paper, we are interested in estimation under a weighted squared-error (WSE) cost function \cite{EXTENDED_ZZ,PILZ,ELDAR_WEIGHTED_MSE,GRUBER},
\begin{equation}\label{WSE}
C_{\text{WSE}}(\hat{\thetavec},\thetavec)=(\hat{\thetavec}-\thetavec)^T\Wmat(\hat{\thetavec}-\thetavec),
\end{equation}
where $\Wmat\in\mathbb{R}^{M\times M}$ is a positive semidefinite weighting matrix. The WMSE risk is obtained by taking the expectation of \eqref{WSE} and is given by
\be\label{WMSE}
{\text{WMSE}}_{\hat{\thetavecsmall}}(\thetavec)\define{\rm{E}}\left[C_{\text{WSE}}(\hat{\thetavec},\thetavec)\right]={\rm{E}}\left[(\hat{\thetavec}-\thetavec)^T\Wmat(\hat{\thetavec}-\thetavec)\right].
\ee
The WMSE is in fact a family of scalar risks for estimation of an unknown parameter vector, where for each $\Wmat$ we obtain a different risk. Therefore, the WMSE allows flexibility in the design of estimators and the derivation of performance bounds. For example, by choosing $\Wmat=\Imat_M$ we obtain the special case of the MSE matrix trace. Another example is when one may wish to consider the estimation of each element of the unknown parameter vector separately. Moreover, $\Wmat$ can compensate for possibly different units of the parameter vector elements. Another example is estimation in the presence of nuisance parameters, where we are only interested in the MSE for estimation of a subvector of the unknown parameter vector (see e.g. \cite{SSP_EYAL}, \cite[p. 461]{point_est}) and thus, $\Wmat$ includes zero elements for the nuisance parameters.\\
\indent
Let 
\be\label{l_define}
\upsilonvec(\xvec,\thetavec)\define\nabla_{\thetavecsmall}^T\log f_\xvec(\xvec;\thetavec)
\ee
and the FIM
\be\label{FIM}
\Jmat(\thetavec)\define{\rm{E}}\left[\upsilonvec(\xvec,\thetavec)\upsilonvec^T(\xvec,\thetavec)\right].
\ee
At $\thetavec_0\in\Theta_\fvec$, under the assumption
\begin{multline}\label{yonina_assume}
\mathcal{R}\left(\Umat(\thetavec_0)\Umat^T(\thetavec_0)\right)\\
\subseteq\mathcal{R}\left(\Umat(\thetavec_0)\Umat^T(\thetavec_0)\Jmat(\thetavec_0)\Umat(\thetavec_0)\Umat^T(\thetavec_0)\right),
\end{multline}
the CCRB is given by \cite{Stoica_Ng,BenHaim}
\be \label{CCRBmat}
\Bmat_{\text{CCRB}}(\thetavec_0)\define\Umat(\thetavec_0)\left(\Umat^T(\thetavec_0)\Jmat(\thetavec_0)\Umat(\thetavec_0)\right)^{\dagger}\Umat^T(\thetavec_0).
\ee
The CCRB is an MSE matrix lower bound that can be reformulated as a WMSE lower bound by multiplying the bound by the weighting matrix, $\Wmat$, and taking the trace. That is, based on the matrix CCRB from \eqref{CCRBmat} we obtain the following WMSE lower bound
\begin{multline} \label{W_CCRB_W}
B_{\text{CCRB}}(\thetavec_0,\Wmat)\define{\text{Tr}}\left(\Bmat_{\text{CCRB}}(\thetavec_0)\Wmat\right)\\
={\text{Tr}}\left(\left(\Umat^T(\thetavec_0)\Jmat(\thetavec_0)\Umat(\thetavec_0)\right)^{\dagger}\left(\Umat^T(\thetavec_0)\Wmat\Umat(\thetavec_0)\right)\right).
\end{multline}
Computations of the CCRB on the WMSE with different weighting matrices can be found in e.g. \cite{Stoica_Ng,Moore_scoring,SSP_EYAL}. In the following, we refer to the WMSE lower bound in \eqref{W_CCRB_W} as the CCRB for the considered choice of weighting matrix.\\
\indent
It is known that the CRB is a local bound, which is a lower bound for estimators whose bias and bias gradient vanish at a considered point (see e.g. \cite{sparse_con}), that is, locally mean-unbiased estimators in the vicinity of this point. In \cite{sparse_con}, local $\mathcal{X}$-unbiasedness is defined as follows:
\begin{definition}
\label{X_unbiased_definition}
The estimator $\hat{\thetavec}$ is said to be a locally $\mathcal{X}$-unbiased estimator in the vicinity of $\thetavec_0\in\Theta_\fvec$ if it satisfies
\be \label{point_wise_chi_unb}
\bvec_{\hat{\thetavecsmall}}(\thetavec_0)=\zerovec
\ee
and
\be \label{local_wise_chi_unb}
\Dmat_{\hat{\thetavecsmall}}(\thetavec_0)\Umat(\thetavec_0)=\zerovec.
\ee
\end{definition}
It is shown in \cite{sparse_con} that the CCRB is a lower bound for locally $\mathcal{X}$-unbiased estimators, where local $\mathcal{X}$-unbiasedness is a weaker restriction than local mean-unbiasedness. As a result, the CCRB is always lower than or equal to the CRB. In the following section, we derive a different unbiasedness definition for constrained parameter estimation, named C-unbiasedness, whose local definition is less restrictive than local $\mathcal{X}$-unbiasedness.

\section{Unbiasedness under constraints} \label{sec:Unbiasedness under constraints}
In non-Bayesian parameter estimation, direct minimization of the risk w.r.t. the estimator results in a trivial estimator. Accordingly, one needs to exclude such estimators by additional restrictions on the considered set of estimators. A common restriction on estimators is mean-unbiasedness, which is used for derivation of the CRB. In the following, we propose a novel unbiasedness restriction for constrained parameter estimation, named C-unbiasedness, which is based on Lehmann's definition of unbiasedness. It is shown that local C-unbiasedness is a weaker restriction than the local unbiasedness restrictions of the CCRB. Therefore, local C-unbiasedness allows for a larger set of estimators to be considered.

\subsection{Lehmann-unbiasedness} \label{subsec:Lehmann-unbiasedness}
Lehmann \cite{point_est,LEHMANN_CONCEPT} proposed a generalization of the unbiasedness concept based on the considered cost function and parameter space, as presented in the following definition.
\begin{definition}
\label{unbiased_definition}
The estimator $\hat{\thetavec}$ is said to be a uniformly unbiased estimator of $\thetavec\in\Omega_\thetavecsmall$ in the Lehmann sense \cite{point_est,LEHMANN_CONCEPT} w.r.t. the cost function $C(\hat{\thetavec},\thetavec)$ 
if 
\be \label{defdef}
{\rm{E}}[C(\hat{\thetavec},\etavec);\thetavec] \geq {\rm{E}}[C(\hat{\thetavec},\thetavec);\thetavec],~\forall \etavec,\thetavec\in \Omega_\thetavecsmall,
\ee
where $\Omega_\thetavecsmall$ is the parameter space.
\end{definition}
The Lehmann-unbiasedness definition implies that an estimator is unbiased if on the average it is ``closest'' to the true parameter, $\thetavec$, rather than to any other value in the parameter space, $\etavec\in\Omega_\thetavecsmall$. The measure of closeness between the estimator and the parameter is the cost function, $C(\hat{\thetavec},\thetavec)$. For example, in \cite{LEHMANN_CONCEPT} it is shown that under the scalar squared-error cost function, $C(\hat{\theta},\theta)=(\hat{\theta}-\theta)^2,~\theta\in\mathbb{R}$, the Lehmann-unbiasedness  in \eqref{defdef} is reduced to the conventional mean-unbiasedness, ${\rm{E}}[\hat{\theta}-\theta]=0$, $\forall\theta\in\mathbb{R}$. Lehmann-unbiasedness conditions for various cost functions can be found in \cite{PCRB,BAR1,CYCLIC,SELECTION}.\\
\indent
In non-Bayesian estimation theory, two types of unbiasedness are usually considered: uniform unbiasedness in which the estimator is unbiased at any point in the parameter space and local unbiasedness (see e.g. \cite{localU}) in which the estimator is assumed to be unbiased only in the vicinity of the true parameter $\thetavec_0$. In the following definition, we extend the original uniform definition of Lehmann-unbiasedness in \eqref{defdef} to local Lehmann-unbiasedness.
\begin{definition}\label{Def_local_Cunb}
\label{Loc_unbiased_definition}
The estimator $\hat{\thetavec}$ is said to be a locally Lehmann-unbiased estimator in the vicinity of $\thetavec_0\in\Omega_\thetavecsmall$ w.r.t. the cost function $C(\hat{\thetavec},\thetavec)$ 
if 
\be \label{Loc_defdef2}
{\rm{E}}[C(\hat{\thetavec},\etavec);\thetavec] \geq {\rm{E}}[C(\hat{\thetavec},\thetavec);\thetavec],~\forall \etavec\in \Omega_\thetavecsmall,
\ee
for any $\thetavec\in\Omega_\thetavecsmall$, s.t. $|\theta_m-\theta_{0,m}|<\varepsilon_m,~\varepsilon_m\to0,~\forall m=1,\ldots,M$.
\end{definition}

\subsection{Uniform C-unbiasedness} \label{subsec:Uniform C-unbiasedness}
In the following, the uniform C-unbiasedness is derived by combining the uniform Lehmann-unbiasedness condition from \eqref{defdef} w.r.t. the WSE cost function and the parametric constraints.
\begin{proposition} \label{Cunbias_prop}
A necessary condition for an estimator $\hat{\thetavec}:\Omega_\xvec\rightarrow{\mathbb{R}}^M$
to be a uniformly unbiased estimator of $\thetavec\in{\mathbb{R}}^M$ in the Lehmann sense w.r.t. the WSE cost function and under the constrained set in \eqref{set_def} is
\be \label{unbiased_cond_nec}
\Umat^T(\thetavec)\Wmat \bvec_{\hat{\thetavecsmall}}(\thetavec)=\zerovec,~\forall \thetavec\in\Theta_\fvec.
\ee
\end{proposition}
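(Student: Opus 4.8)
The plan is to unfold Definition~\ref{unbiased_definition} for the specific case in which the cost function is the WSE cost $C_{\text{WSE}}$ from \eqref{WSE} and the parameter space $\Omega_\thetavecsmall$ is the constrained set $\Theta_\fvec$ from \eqref{set_def}. First I would substitute $C_{\text{WSE}}$ into the defining inequality \eqref{defdef} and expand the left-hand integrand by writing $\hat{\thetavec}-\etavec=(\hat{\thetavec}-\thetavec)+(\thetavec-\etavec)$. Using the symmetry of $\Wmat$ and taking the expectation w.r.t.\ $\thetavec$, the quadratic cross-term yields a factor ${\rm{E}}[\hat{\thetavec}-\thetavec;\thetavec]=\bvec_{\hat{\thetavecsmall}}(\thetavec)$ by the bias definition \eqref{bias_definition}, while the pure $(\thetavec-\etavec)$ term is deterministic and the leading quadratic term has expectation equal to ${\rm{E}}[C_{\text{WSE}}(\hat{\thetavec},\thetavec);\thetavec]$, the very quantity on the right-hand side. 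After this cancellation the Lehmann condition reduces to
\be
2(\thetavec-\etavec)^T\Wmat\bvec_{\hat{\thetavecsmall}}(\thetavec)+(\thetavec-\etavec)^T\Wmat(\thetavec-\etavec)\geq0,\quad\forall\etavec,\thetavec\in\Theta_\fvec.
\ee

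The key idea is then to extract a necessary condition by testing this inequality only against values of $\etavec$ lying on $\Theta_\fvec$ that approach $\thetavec$. Since $\fvec$ is continuously differentiable and $\Fmat(\thetavec)$ has full row rank, $\Theta_\fvec$ is a smooth $(M-K)$-dimensional manifold whose tangent space at $\thetavec$ equals $\mathcal{N}(\Fmat(\thetavec))=\mathcal{R}(\Umat(\thetavec))$, the last equality following from \eqref{one} and a dimension count. Hence for any $\alphavec\in\mathbb{R}^{M-K}$ there is a feasible curve $\etavec(t)\in\Theta_\fvec$ with $\etavec(0)=\thetavec$ and $\etavec(t)-\thetavec=t\,\Umat(\thetavec)\alphavec+o(t)$ as $t\to0^+$. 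Substituting $\thetavec-\etavec(t)=-t\,\Umat(\thetavec)\alphavec+o(t)$ into the reduced inequality, the quadratic term is $O(t^2)$ while the linear term dominates, giving $-2t\,\alphavec^T\Umat^T(\thetavec)\Wmat\bvec_{\hat{\thetavecsmall}}(\thetavec)+o(t)\geq0$.

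Finally I would divide by $t>0$, let $t\to0^+$, and obtain $\alphavec^T\Umat^T(\thetavec)\Wmat\bvec_{\hat{\thetavecsmall}}(\thetavec)\leq0$. Replacing $\alphavec$ by $-\alphavec$---legitimate because $\mathcal{R}(\Umat(\thetavec))$ is a linear subspace, so $-\Umat(\thetavec)\alphavec$ is again an admissible tangent direction---yields the reverse inequality, and therefore $\alphavec^T\Umat^T(\thetavec)\Wmat\bvec_{\hat{\thetavecsmall}}(\thetavec)=0$ for every $\alphavec\in\mathbb{R}^{M-K}$. This forces $\Umat^T(\thetavec)\Wmat\bvec_{\hat{\thetavecsmall}}(\thetavec)=\zerovec$, which is exactly \eqref{unbiased_cond_nec}, and since $\thetavec\in\Theta_\fvec$ was arbitrary the condition holds throughout the constrained set. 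The main obstacle I anticipate is the rigorous justification of this tangent-space step: one must confirm that every element of $\mathcal{R}(\Umat(\thetavec))$ is realized as the initial velocity of an admissible curve in $\Theta_\fvec$ and that the remainder is genuinely $o(t)$, which is precisely where the smoothness of $\fvec$ and the full-rank assumption on $\Fmat(\thetavec)$ enter through the implicit function theorem; the remaining steps are elementary algebra together with a one-sided limit.
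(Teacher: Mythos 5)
Your proof is correct, and it reaches the same underlying first-order condition as the paper, but by a somewhat different and arguably cleaner route. The paper recasts \eqref{const_L} as the requirement that $\etavec=\thetavec$ minimize $\mathrm{E}[(\hat{\thetavec}-\etavec)^T\Wmat(\hat{\thetavec}-\etavec);\thetavec]$ over $\Theta_\fvec$, then invokes a textbook necessary condition for equality-constrained minimization (projecting the gradient onto $\mathcal{R}(\Umat(\etavec_{min}))$) and an assumption that integration over $\xvec$ and differentiation w.r.t. $\etavec$ can be interchanged to arrive at \eqref{unbiased_cond_nec_proof}. You instead expand $\hat{\thetavec}-\etavec=(\hat{\thetavec}-\thetavec)+(\thetavec-\etavec)$ and compute the expectation in closed form, so the objective is an exact quadratic in $\etavec$ and the Lehmann inequality reduces without approximation to $2(\thetavec-\etavec)^T\Wmat\bvec_{\hat{\thetavecsmall}}(\thetavec)+(\thetavec-\etavec)^T\Wmat(\thetavec-\etavec)\geq0$; you then derive the stationarity condition from first principles by testing along feasible curves with velocities spanning $\mathcal{R}(\Umat(\thetavec))=\mathcal{N}(\Fmat(\thetavec))$ and exploiting the sign symmetry of the tangent space. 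What your route buys is self-containedness and the elimination of the interchange-of-integration-and-differentiation hypothesis (only finiteness of the weighted second moment is implicitly needed for the cancellation); what the paper's route buys is brevity, since the projected-gradient condition is quoted rather than rederived. Your flagged obstacle, the realizability of every tangent direction by an admissible curve, is indeed the one point requiring care, and it is settled exactly as you say by the implicit function theorem together with the full-row-rank assumption on $\Fmat(\thetavec)$ and the dimension count showing $\mathcal{R}(\Umat(\thetavec))=\mathcal{N}(\Fmat(\thetavec))$.
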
 
\begin{proof}
By substituting $\Omega_\thetavecsmall=\Theta_\fvec$ and the WSE cost function from \eqref{WSE} in \eqref{defdef}, one obtains that Lehmann-unbiasedness in the constrained setting is reduced to:
\begin{multline} \label{const_L}
{\rm{E}}\left[(\hat{\thetavec}-\etavec)^T\Wmat(\hat{\thetavec}-\etavec);\thetavec\right]\\
\geq{\rm{E}}\left[(\hat{\thetavec}-\thetavec)^T\Wmat(\hat{\thetavec}-\thetavec);\thetavec\right],~\forall \etavec,\thetavec\in \Theta_\fvec.
\end{multline}
The condition in \eqref{const_L} is equivalent to requiring $\etavec_{min}=\thetavec$, where $\etavec_{min}$ is the minimizer of the following constrained minimization problem
\be \label{minQ}
\underset{\etavecsmall}{\min}~ {\rm{E}}[(\hat{\thetavec}-\etavec)^T\Wmat(\hat{\thetavec}-\etavec);\thetavec]~~\text{s.t.}~~\fvec\left(\etavec\right)=\zerovec.
\ee
By using a necessary condition for constrained minimization (see e.g. Eq. (1.62) in \cite{BETTS}), it can be shown that the minimizer of \eqref{minQ}, $\etavec_{min}$, must satisfy
\be \label{unbiased_cond_nec_proof_book}
\Umat^T(\etavec_{min})\left.\nabla_{\etavecsmall}^T{\rm{E}}[(\hat{\thetavec}-\etavec)^T\Wmat(\hat{\thetavec}-\etavec);\thetavec]\right|_{\etavecsmall=\etavecsmall_{min}}=\zerovec.
\ee
Under the assumption that integration w.r.t. $\xvec\in\Omega_\xvec$ and derivatives
w.r.t. $\etavec$ can be reordered, the condition in \eqref{unbiased_cond_nec_proof_book} is equivalent to
\be \label{unbiased_cond_nec_proof}
\Umat^T(\etavec_{min})\Wmat {\rm{E}}[\hat{\thetavec}-\etavec_{min};\thetavec]=\zerovec.
\ee
Finally, by substituting $\etavec_{min}=\thetavec$ and \eqref{bias_definition} in \eqref{unbiased_cond_nec_proof}, one obtains \eqref{unbiased_cond_nec}.
\end{proof}
An estimator that satisfies \eqref{unbiased_cond_nec} is said to be uniformly C-unbiased. The uniform C-unbiasedness is a necessary condition for uniform Lehmann-unbiasedness w.r.t. the WSE cost function and under the constrained set in \eqref{set_def}. It can be seen that if an estimator has zero mean-bias in the constrained set, {\it i.e.} $\bvec_{\hat{\thetavecsmall}}(\thetavec)=\zerovec$, $\forall\thetavec\in \Theta_\fvec$, then it satisfies \eqref{unbiased_cond_nec} but not vice versa. Thus, the uniform C-unbiasedness condition is a weaker condition than requiring mean-unbiasedness in the constrained set.

\subsection{Local C-unbiasedness} \label{subsec:Local C-unbiasedness}
In this subsection, local C-unbiasedness conditions are derived by combining the local Lehmann-unbiasedness condition from \eqref{Loc_defdef2} w.r.t. the WSE cost function and the parametric constraints.
\begin{proposition} \label{LocCunbias_prop}
Necessary conditions for an estimator $\hat{\thetavec}:\Omega_\xvec\rightarrow{\mathbb{R}}^M$
to be a locally Lehmann-unbiased estimator in the vicinity of $\thetavec_0\in\Theta_\fvec$ w.r.t. the WSE cost function and under the constrained set in \eqref{set_def} are
\be \label{point_wise_unb}
\Umat^T(\thetavec_0)\Wmat\bvec_{\hat{\thetavecsmall}}(\thetavec_0)=\zerovec
\ee
and
\be\label{sec_localU_final}
\bvec_{\hat{\thetavecsmall}}^T(\thetavec_0)\Wmat\Vmat_m(\thetavec_0)\Umat(\thetavec_0)=-\uvec_m^T(\thetavec_0)\Wmat\Dmat_{\hat{\thetavecsmall}}(\thetavec_0)\Umat(\thetavec_0),
\ee
$\forall m=1,\ldots,M-K$, where $\Vmat_m(\thetavec)$ and $\Dmat_{\hat{\thetavecsmall}}(\thetavec_0)$ are defined in \eqref{Vmat_define} and \eqref{bias_gradient}, respectively.
\end{proposition}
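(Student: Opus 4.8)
The plan is to leverage the uniform argument already established in Proposition~\ref{Cunbias_prop}. Setting $\Omega_\thetavecsmall=\Theta_\fvec$ and substituting the WSE cost from \eqref{WSE}, the local Lehmann-unbiasedness condition \eqref{Loc_defdef2} states that for every $\thetavec$ in an infinitesimal neighborhood of $\thetavec_0$ within $\Theta_\fvec$, the point $\etavec=\thetavec$ minimizes ${\rm{E}}[(\hat{\thetavec}-\etavec)^T\Wmat(\hat{\thetavec}-\etavec);\thetavec]$ over $\etavec\in\Theta_\fvec$. Hence the first-order necessary condition obtained in the proof of Proposition~\ref{Cunbias_prop}, namely $\Umat^T(\thetavec)\Wmat\bvec_{\hat{\thetavecsmall}}(\thetavec)=\zerovec$ [see \eqref{unbiased_cond_nec_proof}], holds not merely at $\thetavec_0$ but identically throughout this neighborhood. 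Evaluating it at $\thetavec=\thetavec_0$ directly yields the point-wise condition \eqref{point_wise_unb}.

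To derive \eqref{sec_localU_final}, I would differentiate the identity $\Umat^T(\thetavec)\Wmat\bvec_{\hat{\thetavecsmall}}(\thetavec)=\zerovec$, which now holds identically in $\thetavec$ along $\Theta_\fvec$ near $\thetavec_0$. The key observation is that the admissible perturbations of $\thetavec$ must remain within $\Theta_\fvec$, so that only directional derivatives along the tangent space of the constraint manifold at $\thetavec_0$ are available. By \eqref{one} together with the full-row-rank assumption on $\Fmat$, this tangent space equals $\mathcal{N}(\Fmat(\thetavec_0))=\mathcal{R}(\Umat(\thetavec_0))$, i.e. it is spanned by the columns of $\Umat(\thetavec_0)$. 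Recognizing that this restriction to tangent directions is precisely what produces the factor $\Umat(\thetavec_0)$ on the right-hand side of \eqref{sec_localU_final} is the main conceptual step of the proof.

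Concretely, I would fix a column index $m$ and, for each $l=1,\ldots,M-K$, take a smooth curve $\thetavec(t)\in\Theta_\fvec$ with $\thetavec(0)=\thetavec_0$ and $\dot{\thetavec}(0)=\uvec_l(\thetavec_0)$---such curves exist by the constraint qualification---and differentiate the scalar identity $\uvec_m^T(\thetavec(t))\Wmat\bvec_{\hat{\thetavecsmall}}(\thetavec(t))=0$ at $t=0$. The product rule produces two terms: via \eqref{Vmat_define} the derivative of $\uvec_m$ contributes $\uvec_l^T(\thetavec_0)\Vmat_m^T(\thetavec_0)\Wmat\bvec_{\hat{\thetavecsmall}}(\thetavec_0)$, while via \eqref{bias_gradient} the derivative of the bias contributes $\uvec_m^T(\thetavec_0)\Wmat\Dmat_{\hat{\thetavecsmall}}(\thetavec_0)\uvec_l(\thetavec_0)$. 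Transposing the (scalar) first term using the symmetry of $\Wmat$ and then stacking the columns $\uvec_l(\thetavec_0)$ into $\Umat(\thetavec_0)$ gives $\bvec_{\hat{\thetavecsmall}}^T(\thetavec_0)\Wmat\Vmat_m(\thetavec_0)\Umat(\thetavec_0)+\uvec_m^T(\thetavec_0)\Wmat\Dmat_{\hat{\thetavecsmall}}(\thetavec_0)\Umat(\thetavec_0)=\zerovec$, which rearranges to \eqref{sec_localU_final}.

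The step I expect to be most delicate is this directional-derivative computation: one must correctly apply the symmetry of $\Wmat$, keep track of which factors are transposed, and justify the existence of tangent curves $\thetavec(t)$ remaining in $\Theta_\fvec$ with prescribed velocity $\uvec_l(\thetavec_0)$, which rests on the full-row-rank constraint qualification on $\Fmat$. A secondary technical point, inherited from the uniform proof, is the interchange of expectation and differentiation w.r.t. $\etavec$, which I would assume valid under the same regularity conditions already invoked there.
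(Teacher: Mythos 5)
Your proposal is correct and follows essentially the same route as the paper's Appendix~A proof: both establish that the stationarity identity $\Umat^T(\thetavec)\Wmat\bvec_{\hat{\thetavecsmall}}(\thetavec)=\zerovec$ from Proposition~\ref{Cunbias_prop} holds throughout an infinitesimal neighborhood of $\thetavec_0$ in $\Theta_\fvec$, then extract \eqref{point_wise_unb} by evaluation at $\thetavec_0$ and \eqref{sec_localU_final} by a product-rule differentiation restricted to the tangent directions spanned by the columns of $\Umat(\thetavec_0)$. The only cosmetic difference is that you differentiate along smooth curves in $\Theta_\fvec$ with prescribed velocity, whereas the paper writes the feasible perturbation as $\thetavec=\thetavec_0+\Umat(\thetavec_0)\dvec\tauvec$ and matches terms in a first-order Taylor expansion; these are equivalent at first order.
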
 
\begin{proof}
The proof is given in Appendix \ref{App_LocCunbias_prop}.
\end{proof}
\noindent In particular, in case the MSE matrix trace is of interest, we substitute $\Wmat=\Imat_M$ in \eqref{point_wise_unb}-\eqref{sec_localU_final} and the resulting local C-unbiasedness conditions are 
\be \label{point_wise_unb_I}
\Umat^T(\thetavec_0)\bvec_{\hat{\thetavecsmall}}(\thetavec_0)=\zerovec
\ee
and
\be\label{sec_localU_final_I}
\bvec_{\hat{\thetavecsmall}}^T(\thetavec_0)\Vmat_m(\thetavec_0)\Umat(\thetavec_0)=-\uvec_m^T(\thetavec_0)\Dmat_{\hat{\thetavecsmall}}(\thetavec_0)\Umat(\thetavec_0),
\ee
$\forall m=1,\ldots,M-K$.\\
\indent
For any positive semidefinite matrix $\Wmat$, it can be seen that if an estimator satisfies \eqref{point_wise_chi_unb}-\eqref{local_wise_chi_unb}, then it satisfies also \eqref{point_wise_unb}-\eqref{sec_localU_final} but not vice versa. Thus, for any positive semidefinite weighting matrix $\Wmat$, the local C-unbiasedness is a weaker restriction than the local $\mathcal{X}$-unbiasedness and therefore, lower bounds on the WMSE of locally C-unbiased estimators may be lower than the corresponding CCRB. In Section \ref{sec:Examples}, we show examples in which the CML estimator is C-unbiased and is not $\mathcal{X}$-unbiased. In case that some of the elements of $\thetavec$ are considered as nuisance parameters, we can put zero weights on these elements in the weighting matrix $\Wmat$. It can be seen that in this case, the local C-unbiasedness conditions from \eqref{point_wise_unb}-\eqref{sec_localU_final} are not affected by the bias function of a nuisance parameter estimator. 

\section{LU-CCRB} \label{sec:LU-CCRB}
In this section, we derive the LU-CCRB, which is a new Cram$\acute{\text{e}}$r-Rao-type lower bound on the WMSE of locally C-unbiased estimators, where the WMSE is defined in \eqref{WMSE}. Properties of this bound are described in Subsection \ref{subsec:Properties of LU-CCRB}.

\subsection{Derivation of LU-CCRB} \label{subsec:Derivation of LU-CCRB}
In the following theorem, we derive the LU-CCRB on the WMSE of locally C-unbiased estimators. For the derivation we define $\Cmat_{\Umat,\Wmat}(\thetavec)\in\mathbb{R}^{(M-K)^2\times (M-K)^2}$, which is a block matrix whose $(m,k)$th block is given by
\be\label{Gamma_block_define}
\Cmat_{\Umat,\Wmat}^{(m,k)}(\thetavec)\define\left(\Smat^{(m)}_{\Wmat}(\thetavec)\right)^T\Wmat\Smat^{(k)}_{\Wmat}(\thetavec),
\ee
$\forall m,k=1,\ldots,M-K$, where
\be\label{S_define}
\begin{split}
\Smat^{(m)}_{\Wmat}(\thetavec)\define\Wmat^{\dagger}\Wmat^{\frac{1}{2}}\Pmat_{\Wmat^{\frac{1}{2}}\Umat}^\bot(\thetavec)\Wmat^{\frac{1}{2}}\Vmat_m(\thetavec)\Umat(\thetavec)
\end{split}
\ee
is an $M\times(M-K)$ matrix, $\forall m=1,\ldots,M-K$. Finally, we define the matrix
\begin{multline}\label{Gammamat_define}
\Gammamat_{\Umat,\Wmat}(\thetavec)\define\Cmat_{\Umat,\Wmat}(\thetavec)\\
+\left(\Umat^T(\thetavec)\Wmat\Umat(\thetavec)\right)\otimes\left(\Umat^T(\thetavec)\Jmat(\thetavec)\Umat(\thetavec)\right).
\end{multline}
In the following theorem, we present the proposed LU-CCRB on the WMSE of locally C-unbiased estimators, where the local C-unbiasedness conditions are presented in \eqref{point_wise_unb}-\eqref{sec_localU_final}. The LU-CCRB is a family of scalar bounds, which provides a different lower bound for each choice of weighting matrix.
\begin{Theorem} \label{T3}
Let $\hat{\thetavec}$ be a locally C-unbiased estimator of $\thetavec$ in the vicinity of $\thetavec_0\in\Theta_\fvec$ for a given positive semidefinite weighting matrix $\Wmat$ and assume 
\renewcommand{\theenumi}{C.\arabic{enumi}} 
\begin{enumerate}
\setcounter{enumi}{0}
\item \label{cond1CRB} 
Integration w.r.t. $\xvec$ and differentiation w.r.t. $\thetavec$ at $\thetavec_0\in\Theta_\fvec$ can be interchanged.
\item \label{cond2CRB} 
Each element of $\Umat(\thetavec)$ is differentiable w.r.t. $\thetavec$ at $\thetavec_0\in\Theta_\fvec$.
\item \label{cond3CRB} 
$\Jmat(\thetavec_0)$ is finite.
\end{enumerate} 
\renewcommand{\theenumi}{\arabic{enumi}}
Then,
\be \label{total_CR_bound_CS_pre_W}
{\text{WMSE}}_{\hat{\thetavecsmall}}(\thetavec_0)\geq B_{\text{LU-CCRB}}(\thetavec_0,\Wmat),
\ee
where
\be \label{total_CR_bound_CS}
\begin{split}
B_{\text{LU-CCRB}}(\thetavec_0,\Wmat)&\define {\rm{vec}}^T\left(\Umat^T(\thetavec_0)\Wmat\Umat(\thetavec_0)\right)\Gammamat_{\Umat,\Wmat}^{\dagger}(\thetavec_0)\\
&~~~\times{\rm{vec}}\left(\Umat^T(\thetavec_0)\Wmat\Umat(\thetavec_0)\right).
\end{split}
\ee
Equality in \eqref{total_CR_bound_CS_pre_W} is obtained {\em iff} 
\be \label{equality_cond_Prop_W}
\begin{split}
&\Wmat^{\frac{1}{2}}(\hat{\thetavec}-\thetavec_0)\\
&=\Wmat^{\frac{1}{2}}\sum_{m=1}^{M-K}\left(\left(\Smat^{(m)}_{\Wmat}(\thetavec_0)+\right.\left.\Tmat^{(m)}_{\Wmat}(\xvec,\thetavec_0)\right)\left[\Gammamat_{\Umat,\Wmat}^{\dagger}(\thetavec_0)\right.\right.\\
~~&\left.\left.\times{\rm{vec}}\left(\Umat^T(\thetavec_0)\Wmat\Umat(\thetavec_0)\right)\right]_{((m-1)(M-K)+1):(m(M-K))}\right),
\end{split}
\ee
where
\begin{equation}\label{T_define}
\Tmat^{(m)}_{\Wmat}(\xvec,\thetavec)\define\Wmat^{\dagger}\Wmat\uvec_m(\thetavec)\upsilonvec^T(\xvec,\thetavec)\Umat(\thetavec),
\end{equation}
$\forall m=1,\ldots,M-K$.
\end{Theorem}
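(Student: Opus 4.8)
The plan is to exhibit $B_{\text{LU-CCRB}}(\thetavec_0,\Wmat)$ as a covariance (Cauchy--Schwarz) bound in the Hilbert space of square-integrable $\mathbb{R}^M$-valued random vectors with the Euclidean inner product. Setting $\rvec\define\Wmat^{\frac{1}{2}}(\hat{\thetavec}-\thetavec_0)$, so that ${\text{WMSE}}_{\hat{\thetavecsmall}}(\thetavec_0)={\rm{E}}[\rvec^T\rvec]$, I would collect the $M\times(M-K)$ random matrices $\Psimat^{(m)}(\xvec)\define\Smat^{(m)}_{\Wmat}(\thetavec_0)+\Tmat^{(m)}_{\Wmat}(\xvec,\thetavec_0)$ for $m=1,\ldots,M-K$ and stack the weighted blocks into the $M\times(M-K)^2$ matrix $\bPhi(\xvec)\define[\Wmat^{\frac{1}{2}}\Psimat^{(1)}(\xvec),\ldots,\Wmat^{\frac{1}{2}}\Psimat^{(M-K)}(\xvec)]$. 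The bound then follows once I identify the Gram matrix ${\rm{E}}[\bPhi^T\bPhi]$ with $\Gammamat_{\Umat,\Wmat}(\thetavec_0)$ and the cross-correlation ${\rm{E}}[\bPhi^T\rvec]$ with ${\rm{vec}}(\Umat^T(\thetavec_0)\Wmat\Umat(\thetavec_0))$.

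For the Gram matrix I would first note that, by assumption C.1, ${\rm{E}}[\upsilonvec(\xvec,\thetavec_0)]=\zerovec$, hence ${\rm{E}}[\Tmat^{(m)}_{\Wmat}]=\zerovec$ and the deterministic/stochastic cross blocks drop out. The deterministic block is $(\Smat^{(m)}_{\Wmat})^T\Wmat\Smat^{(k)}_{\Wmat}=\Cmat^{(m,k)}_{\Umat,\Wmat}$ by \eqref{Gamma_block_define}, while for the stochastic block the Moore--Penrose identity $\Wmat\Wmat^{\dagger}\Wmat=\Wmat$ together with ${\rm{E}}[\upsilonvec\upsilonvec^T]=\Jmat(\thetavec_0)$ and $\uvec_m^T\Wmat\uvec_k=[\Umat^T\Wmat\Umat]_{m,k}$ gives ${\rm{E}}[(\Tmat^{(m)}_{\Wmat})^T\Wmat\Tmat^{(k)}_{\Wmat}]=[\Umat^T\Wmat\Umat]_{m,k}\,\Umat^T\Jmat\Umat$. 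Summing the two reproduces the $(m,k)$ block of $\Gammamat_{\Umat,\Wmat}(\thetavec_0)$ in \eqref{Gammamat_define}, and condition C.3 guarantees finiteness.

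The crux is the cross-correlation, where I must show that the $m$th block ${\rm{E}}[(\Psimat^{(m)})^T\Wmat(\hat{\thetavec}-\thetavec_0)]$ equals the $m$th column $\Umat^T\Wmat\uvec_m$ of $\Umat^T\Wmat\Umat$. Interchanging differentiation and integration (C.1) yields ${\rm{E}}[\upsilonvec(\hat{\thetavec}-\thetavec_0)^T]=\Imat_M+\Dmat_{\hat{\thetavecsmall}}^T(\thetavec_0)$, so the stochastic part contributes $\Umat^T\Wmat\uvec_m+\Umat^T\Dmat_{\hat{\thetavecsmall}}^T\Wmat\uvec_m$ and the deterministic part contributes $(\Smat^{(m)}_{\Wmat})^T\Wmat\bvec_{\hat{\thetavecsmall}}(\thetavec_0)$. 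Here both local C-unbiasedness conditions enter: condition \eqref{point_wise_unb}, $\Umat^T\Wmat\bvec_{\hat{\thetavecsmall}}=\zerovec$, makes $\Pmat^\bot_{\Wmat^{\frac{1}{2}}\Umat}\Wmat^{\frac{1}{2}}\bvec_{\hat{\thetavecsmall}}=\Wmat^{\frac{1}{2}}\bvec_{\hat{\thetavecsmall}}$, collapsing $(\Smat^{(m)}_{\Wmat})^T\Wmat\bvec_{\hat{\thetavecsmall}}$ to $\Umat^T\Vmat_m^T\Wmat\bvec_{\hat{\thetavecsmall}}$; condition \eqref{sec_localU_final}, transposed to $\Umat^T\Vmat_m^T\Wmat\bvec_{\hat{\thetavecsmall}}=-\Umat^T\Dmat_{\hat{\thetavecsmall}}^T\Wmat\uvec_m$, then cancels the residual bias-gradient term, leaving exactly $\Umat^T\Wmat\uvec_m$. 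This exact cancellation is precisely what the projector in \eqref{S_define} and the score factor in \eqref{T_define} are engineered to produce, and I expect it to be the most delicate bookkeeping of the proof.

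Finally, for any deterministic $\cvec\in\mathbb{R}^{(M-K)^2}$, ${\rm{E}}[\|\rvec-\bPhi\cvec\|^2]\geq0$ expands to ${\rm{E}}[\rvec^T\rvec]-2\cvec^T{\rm{E}}[\bPhi^T\rvec]+\cvec^T\Gammamat_{\Umat,\Wmat}\cvec\geq0$. Testing $\cvec$ along a null direction of $\Gammamat_{\Umat,\Wmat}$ forces ${\rm{vec}}(\Umat^T\Wmat\Umat)\in\mathcal{R}(\Gammamat_{\Umat,\Wmat})$, so no extra range assumption is needed; minimizing the quadratic over $\cvec$ gives $\cvec^\star=\Gammamat_{\Umat,\Wmat}^{\dagger}{\rm{vec}}(\Umat^T\Wmat\Umat)$ and the bound \eqref{total_CR_bound_CS_pre_W}. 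Equality holds iff $\rvec=\bPhi\cvec^\star$ almost surely, and writing $\bPhi\cvec^\star$ block by block, with the $m$th block of $\cvec^\star$ equal to $[\Gammamat_{\Umat,\Wmat}^{\dagger}{\rm{vec}}(\Umat^T\Wmat\Umat)]_{((m-1)(M-K)+1):(m(M-K))}$, recovers the stated equality condition \eqref{equality_cond_Prop_W}. The remaining steps are routine covariance-inequality manipulations, so the third paragraph is where the real work lies.
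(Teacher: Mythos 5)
Your proposal is correct and follows essentially the same route as the paper's proof in Appendix B: the same auxiliary random quantity $\sum_m(\Smat^{(m)}_{\Wmat}+\Tmat^{(m)}_{\Wmat})\cvec_m$, the same identification of the Gram matrix with $\Gammamat_{\Umat,\Wmat}(\thetavec_0)$ via ${\rm{E}}[\Tmat^{(m)}_{\Wmat}]=\zerovec$, and the same cancellation in the cross-correlation term using the two local C-unbiasedness conditions, with your ``completing the square'' formulation ${\rm{E}}[\|\rvec-\bPhi\cvec\|^2]\geq 0$ being equivalent to the paper's Cauchy--Schwarz-then-optimize step (the paper handles the range/null-space issue for $\Gammamat^{\dagger}_{\Umat,\Wmat}$ in its Appendix G exactly as you do by testing null directions). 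A minor bonus of your formulation is that the equality condition $\rvec=\bPhi\cvec^\star$ falls out directly, whereas the paper must additionally argue that the Cauchy--Schwarz proportionality constant $\zeta_{\thetavecsmall_0,\Wmat}$ equals $+1$ rather than $-1$.
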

\begin{proof}
The proof is given in Appendix \ref{App_T3}.
\end{proof}
It can be seen that computation of the CCRB requires the evaluation of $\Umat(\thetavec)$ \cite{Moore_scoring}, while computation of the LU-CCRB requires the evaluation of $\Umat(\thetavec)$ and the matrices $\Vmat_m(\thetavec),~\forall m=1,\ldots,M-K$. The matrix $\Vmat_m(\thetavec)$ can be evaluated numerically by using $\Fmat(\thetavec)$ and $\Umat(\thetavec)$ and applying the product rule on \eqref{one} and \eqref{two}, $\forall m=1,\ldots,M-K$.\\
\indent		
In order to obtain a lower bound on the MSE matrix trace, we substitute $\Wmat=\Imat_M$ and \eqref{two}, in \eqref{total_CR_bound_CS} and obtain
\be \label{total_CR_bound_CS_I}
B_{\text{LU-CCRB}}(\thetavec_0,\Imat_M)\define{\rm{vec}}^T\left(\Imat_{M-K}\right)\Gammamat_{\Umat,\Imat_M}^{\dagger}(\thetavec_0){\rm{vec}}\left(\Imat_{M-K}\right),
\ee
where
\be\label{Gammamat_define_I}
\Gammamat_{\Umat,\Imat_M}(\thetavec)\define\Cmat_{\Umat,\Imat_M}(\thetavec)+\Imat_{M-K}\otimes\left(\Umat^T(\thetavec)\Jmat(\thetavec)\Umat(\thetavec)\right).
\ee
and
\be\label{Gamma_block_define_I}
\Cmat_{\Umat,\Imat_M}^{(m,k)}(\thetavec)\define\Umat^T(\thetavec)\Vmat_m^T(\thetavec)\Pmat_\Umat^\bot(\thetavec)\Vmat_k(\thetavec)\Umat(\thetavec),
\ee
$\forall m,k=1,\ldots,M-K$.

\subsection{Properties of LU-CCRB} \label{subsec:Properties of LU-CCRB}
\subsubsection{Relation to CCRB}
In the following proposition, we show the condition for $B_{\text{LU-CCRB}}(\thetavec_0,\Wmat)$ and $B_{\text{CCRB}}(\thetavec_0,\Wmat)$ to coincide.
\begin{proposition} \label{PROP_GENERAL}
Assume that $B_{\text{LU-CCRB}}(\thetavec_0,\Wmat)$ and $B_{\text{CCRB}}(\thetavec_0,\Wmat)$ exist and that
\be\label{zero_assume_W}
\Cmat_{\Umat,\Wmat}(\thetavec_0)=\zerovec,
\ee
Then,
\be\label{CCRB_zero_assume_equality_W}
B_{\text{LU-CCRB}}(\thetavec_0,\Wmat)=B_{\text{CCRB}}(\thetavec_0,\Wmat).
\ee
\end{proposition}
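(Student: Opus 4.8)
The plan is to use the hypothesis $\Cmat_{\Umat,\Wmat}(\thetavec_0)=\zerovec$ to collapse $\Gammamat_{\Umat,\Wmat}(\thetavec_0)$ into a single Kronecker product and then reduce the quadratic form in \eqref{total_CR_bound_CS} to the trace expression in \eqref{W_CCRB_W}. To lighten the notation I would set $\Amat\define\Umat^T(\thetavec_0)\Wmat\Umat(\thetavec_0)$ and $\Bmat\define\Umat^T(\thetavec_0)\Jmat(\thetavec_0)\Umat(\thetavec_0)$, both of which are symmetric positive semidefinite $(M-K)\times(M-K)$ matrices (the real symmetry of $\Wmat$ and $\Jmat(\thetavec_0)$ transfers to $\Amat$ and $\Bmat$). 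Substituting \eqref{zero_assume_W} into the definition \eqref{Gammamat_define} immediately yields $\Gammamat_{\Umat,\Wmat}(\thetavec_0)=\Amat\otimes\Bmat$, so that \eqref{total_CR_bound_CS} reads $B_{\text{LU-CCRB}}(\thetavec_0,\Wmat)=\vecc^T(\Amat)(\Amat\otimes\Bmat)^{\dagger}\vecc(\Amat)$.

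The central algebraic tool is the Kronecker pseudoinverse identity $(\Amat\otimes\Bmat)^{\dagger}=\Amat^{\dagger}\otimes\Bmat^{\dagger}$, used together with the standard vectorization identities $\vecc(\Pmat\Xmat\Qmat)=(\Qmat^T\otimes\Pmat)\vecc(\Xmat)$ and $\vecc^T(\Pmat)\vecc(\Qmat)=\Tr(\Pmat^T\Qmat)$. Applying the first vectorization identity with the roles $\Pmat=\Amat^{\dagger}$, $\Qmat=\Bmat^{\dagger}$, $\Xmat=\Amat$, and invoking the symmetry of $\Amat$ so that $(\Amat^{\dagger})^T=\Amat^{\dagger}$, I would obtain $(\Amat^{\dagger}\otimes\Bmat^{\dagger})\vecc(\Amat)=\vecc(\Bmat^{\dagger}\Amat\Amat^{\dagger})$.

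I would then convert the remaining quadratic form into a trace via the second vectorization identity, giving $B_{\text{LU-CCRB}}(\thetavec_0,\Wmat)=\Tr(\Amat\Bmat^{\dagger}\Amat\Amat^{\dagger})$, where I again use $\Amat^T=\Amat$. Recognizing $\Amat\Amat^{\dagger}=\Pmat_\Amat$ as the orthogonal projection onto $\mathcal{R}(\Amat)$, cyclically permuting the trace, and applying the projection property $\Pmat_\Amat\Amat=\Amat$, the factor $\Amat\Amat^{\dagger}$ is absorbed and the expression collapses to $\Tr(\Amat\Bmat^{\dagger})=\Tr(\Bmat^{\dagger}\Amat)$, which is exactly $B_{\text{CCRB}}(\thetavec_0,\Wmat)$ as given in \eqref{W_CCRB_W}. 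This establishes \eqref{CCRB_zero_assume_equality_W}.

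The only genuinely delicate point is the validity of $(\Amat\otimes\Bmat)^{\dagger}=\Amat^{\dagger}\otimes\Bmat^{\dagger}$ in the rank-deficient case, which is precisely the case that matters here since $\Wmat$ (and hence $\Amat$) is assumed only positive semidefinite. This identity holds unconditionally, but I would state it as a cited fact rather than lean on the naive inverse rule that fails for singular matrices. The remaining steps are routine once the symmetry of $\Amat$ and the projection identities $\Amat\Amat^{\dagger}=\Amat^{\dagger}\Amat=\Pmat_\Amat$ and $\Pmat_\Amat\Amat=\Amat$ are in hand, and the assumed existence of both $B_{\text{LU-CCRB}}(\thetavec_0,\Wmat)$ and $B_{\text{CCRB}}(\thetavec_0,\Wmat)$ guarantees that all pseudoinverses involved are well defined throughout.
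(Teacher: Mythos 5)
Your proposal is correct and follows essentially the same route as the paper's proof: under \eqref{zero_assume_W} you collapse $\Gammamat_{\Umat,\Wmat}(\thetavec_0)$ to the single Kronecker product $\Amat\otimes\Bmat$, apply $(\Amat\otimes\Bmat)^{\dagger}=\Amat^{\dagger}\otimes\Bmat^{\dagger}$, reduce the quadratic form to $\Tr\left(\Amat\Bmat^{\dagger}\Amat\Amat^{\dagger}\right)$, and absorb the trailing projector via trace cyclicity and $\Amat\Amat^{\dagger}\Amat=\Amat$, exactly as the paper does via \eqref{kronecker_pseudo} and \eqref{kronecker_prop}. The only blemish is a harmless labeling swap in your invocation of $\vecc(\Pmat\Xmat\Qmat)=(\Qmat^T\otimes\Pmat)\vecc(\Xmat)$ (matching $\Amat^{\dagger}\otimes\Bmat^{\dagger}$ requires $\Pmat=\Bmat^{\dagger}$ and $\Qmat=\Amat^{\dagger}$), which does not affect your correct intermediate formula $\vecc(\Bmat^{\dagger}\Amat\Amat^{\dagger})$ or the conclusion.
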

\begin{proof}
By substituting \eqref{zero_assume_W} in \eqref{Gammamat_define}, we obtain
\be\label{Su_Gammamat_define_assume}
\Gammamat_{\Umat,\Wmat}(\thetavec_0)=\left(\Umat^T(\thetavec_0)\Wmat\Umat(\thetavec_0)\right)\otimes\left(\Umat^T(\thetavec_0)\Jmat(\thetavec_0)\Umat(\thetavec_0)\right).
\ee 
Substituting \eqref{Su_Gammamat_define_assume} in \eqref{total_CR_bound_CS} and using the equality \cite[p. 22]{MATRIX_COOKBOOK}
\be\label{kronecker_pseudo}
(\Amat_1\otimes\Amat_2)^{\dagger}=\Amat_1^{\dagger}\otimes\Amat_2^{\dagger},
\ee
one obtains
\be \label{Su_LU_CCRB_zero_next}
\begin{split}
&B_{\text{LU-CCRB}}(\thetavec_0,\Wmat)\\
&={\rm{vec}}^T\left(\Umat^T(\thetavec_0)\Wmat\Umat(\thetavec_0)\right)\left(\left(\Umat^T(\thetavec_0)\Wmat\Umat(\thetavec_0)\right)^{\dagger}\right.\\
&~~~\left.\otimes\left(\Umat^T(\thetavec_0)\Jmat(\thetavec_0)\Umat(\thetavec_0)\right)^{\dagger}\right){\rm{vec}}\left(\Umat^T(\thetavec_0)\Wmat\Umat(\thetavec_0)\right).
\end{split}
\ee
By substituting the equality \cite[p. 60]{MATRIX_COOKBOOK}
\be\label{kronecker_prop}
{\rm{vec}}^T(\Amat_4)(\Amat_2^T\otimes\Amat_1){\rm{vec}}(\Amat_3)={\rm{Tr}}(\Amat_4^T\Amat_1\Amat_3\Amat_2)
\ee
with $\Amat_1=\left(\Umat^T(\thetavec_0)\Jmat(\thetavec_0)\Umat(\thetavec_0)\right)^{\dagger}$, $\Amat_2=\left(\Umat^T(\thetavec_0)\Wmat\Umat(\thetavec_0)\right)^{\dagger}$, and $\Amat_3=\Amat_4=\Umat^T(\thetavec_0)\Wmat\Umat(\thetavec_0)$ in \eqref{Su_LU_CCRB_zero_next}, we obtain
\be \label{Su_LU_CCRB_zero_second_step}
\begin{split}
&B_{\text{LU-CCRB}}(\thetavec_0,\Wmat)\\
&={\rm{Tr}}\left(\left(\Umat^T(\thetavec_0)\Wmat\Umat(\thetavec_0)\right)\left(\Umat^T(\thetavec_0)\Jmat(\thetavec_0)\Umat(\thetavec_0)\right)^{\dagger}\right.\\
&~~~\times\left.\left(\Umat^T(\thetavec_0)\Wmat\Umat(\thetavec_0)\right)\right.\left.\left(\Umat^T(\thetavec_0)\Wmat\Umat(\thetavec_0)\right)^{\dagger}\right)\\
&=B_{\text{CCRB}}(\thetavec_0,\Wmat),
\end{split}
\ee
where the second equality is obtained by using trace and pseudo-inverse properties and substituting \eqref{W_CCRB_W}.
\end{proof}
It can be shown that \eqref{zero_assume_W} is satisfied, for example, for linear constraints,
\be\label{linear}
\fvec(\thetavec)=\Amat\thetavec+\bvec=\zerovec,~\Amat\in\mathbb{R}^{K\times M},~\bvec\in\mathbb{R}^{K}. 
\ee
In this case, both the constraint gradient matrix, $\Fmat(\thetavec)=\Amat$, and the orthonormal null space matrix, $\Umat$, from \eqref{one}-\eqref{two}, are not functions of $\thetavec$. Therefore, the derivatives of the elements of $\Umat$ w.r.t. $\thetavec$ are zero, {\it i.e.} $\Vmat_m=\zerovec,~\forall m=1,\ldots,M-K$, and it can be verified by using \eqref{Gamma_block_define}-\eqref{S_define} that \eqref{zero_assume_W} is satisfied. Therefore, for linear constraints, the proposed LU-CCRB from \eqref{total_CR_bound_CS} coincides with the corresponding CCRB from \eqref{W_CCRB_W}. In particular, for linear Gaussian model under linear constraints, the CML is an $\mathcal{X}$-unbiased and C-unbiased estimator, and achieves the CCRB \cite{Moore_scoring} and the LU-CCRB.

\subsubsection{Order relation}
In the following proposition, we show that for the general case, the proposed LU-CCRB from \eqref{total_CR_bound_CS} is lower than or equal to the corresponding CCRB from \eqref{W_CCRB_W}.
\begin{proposition}\label{PROP_order}
Assume that $B_{\text{LU-CCRB}}(\thetavec_0,\Wmat)$ and $B_{\text{CCRB}}(\thetavec_0,\Wmat)$ exist and that \eqref{yonina_assume} holds. Then, 
\be\label{CCRB_is_higher_W}
B_{\text{CCRB}}(\thetavec_0,\Wmat)\geq B_{\text{LU-CCRB}}(\thetavec_0,\Wmat).
\ee
\end{proposition}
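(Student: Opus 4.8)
The plan is to exploit the additive structure of $\Gammamat_{\Umat,\Wmat}(\thetavec_0)$ in \eqref{Gammamat_define}. Denote by $\Kmat_0\define(\Umat^T(\thetavec_0)\Wmat\Umat(\thetavec_0))\otimes(\Umat^T(\thetavec_0)\Jmat(\thetavec_0)\Umat(\thetavec_0))$ the Kronecker-product term, so that $\Gammamat_{\Umat,\Wmat}(\thetavec_0)=\Cmat_{\Umat,\Wmat}(\thetavec_0)+\Kmat_0$. First I would show $\Cmat_{\Umat,\Wmat}(\thetavec_0)\succeq\zerovec$, which yields the L\"owner ordering $\Gammamat_{\Umat,\Wmat}(\thetavec_0)\succeq\Kmat_0\succeq\zerovec$. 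I would then transfer this ordering into the reversed inequality on the pseudoinverse quadratic forms appearing in \eqref{total_CR_bound_CS} and \eqref{W_CCRB_W}, recognizing the $\Kmat_0$-quadratic form as $B_{\text{CCRB}}(\thetavec_0,\Wmat)$ exactly as in the proof of Proposition \ref{PROP_GENERAL}.

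To obtain $\Cmat_{\Umat,\Wmat}(\thetavec_0)\succeq\zerovec$, I would stack the matrices from \eqref{S_define} into $\Zmat\define[\Smat^{(1)}_{\Wmat}(\thetavec_0),\ldots,\Smat^{(M-K)}_{\Wmat}(\thetavec_0)]$. The $(m,k)$th block of $\Zmat^T\Wmat\Zmat$ is then $(\Smat^{(m)}_{\Wmat}(\thetavec_0))^T\Wmat\Smat^{(k)}_{\Wmat}(\thetavec_0)$, which coincides with $\Cmat_{\Umat,\Wmat}^{(m,k)}(\thetavec_0)$ by \eqref{Gamma_block_define}; hence $\Cmat_{\Umat,\Wmat}(\thetavec_0)=\Zmat^T\Wmat\Zmat\succeq\zerovec$ since $\Wmat\succeq\zerovec$. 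Consequently $\Gammamat_{\Umat,\Wmat}(\thetavec_0)\succeq\Kmat_0$.

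The step I expect to be the main obstacle is deducing $\vvec^T\Gammamat_{\Umat,\Wmat}^{\dagger}(\thetavec_0)\vvec\leq\vvec^T\Kmat_0^{\dagger}\vvec$ from $\Gammamat_{\Umat,\Wmat}(\thetavec_0)\succeq\Kmat_0$, because $\Amat\succeq\Bmat$ does not in general imply $\Amat^{\dagger}\preceq\Bmat^{\dagger}$ without a range condition. I would handle this through the variational identity $\vvec^T\Amat^{\dagger}\vvec=\max_{\xvec}(2\vvec^T\xvec-\xvec^T\Amat\xvec)$, valid for any positive semidefinite $\Amat$ whenever $\vvec\in\mathcal{R}(\Amat)$. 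The needed range membership follows from \eqref{yonina_assume}: since $\Umat(\thetavec_0)\Umat^T(\thetavec_0)$ is by \eqref{two} the rank-$(M-K)$ orthogonal projector onto $\mathcal{R}(\Umat(\thetavec_0))$, and $\Umat\Umat^T\Jmat\Umat\Umat^T=\Umat(\Umat^T\Jmat\Umat)\Umat^T$ has the same rank as $\Umat^T\Jmat\Umat$, the inclusion \eqref{yonina_assume} forces $\Umat^T(\thetavec_0)\Jmat(\thetavec_0)\Umat(\thetavec_0)$ to be nonsingular. Setting $\vvec=\vecc(\Umat^T(\thetavec_0)\Wmat\Umat(\thetavec_0))$ and using the identity $\vecc((\Umat^T\Jmat\Umat)\Ymat(\Umat^T\Wmat\Umat))=\Kmat_0\vecc(\Ymat)$ with $\Ymat=(\Umat^T\Jmat\Umat)^{-1}$ then shows $\vvec\in\mathcal{R}(\Kmat_0)$; and since $\Gammamat_{\Umat,\Wmat}(\thetavec_0)=\Kmat_0+\Cmat_{\Umat,\Wmat}(\thetavec_0)$ with $\Cmat_{\Umat,\Wmat}(\thetavec_0)\succeq\zerovec$, we have $\mathcal{R}(\Kmat_0)\subseteq\mathcal{R}(\Gammamat_{\Umat,\Wmat}(\thetavec_0))$, so $\vvec$ lies in both ranges.

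Finally, for every $\xvec$ the ordering gives $2\vvec^T\xvec-\xvec^T\Gammamat_{\Umat,\Wmat}(\thetavec_0)\xvec\leq 2\vvec^T\xvec-\xvec^T\Kmat_0\xvec$; maximizing over $\xvec$ and applying the variational identity on each of the two ranges yields $\vvec^T\Gammamat_{\Umat,\Wmat}^{\dagger}(\thetavec_0)\vvec\leq\vvec^T\Kmat_0^{\dagger}\vvec$. By \eqref{total_CR_bound_CS} the left-hand side is $B_{\text{LU-CCRB}}(\thetavec_0,\Wmat)$, while the right-hand side equals $B_{\text{CCRB}}(\thetavec_0,\Wmat)$ by the same manipulation used in \eqref{Su_LU_CCRB_zero_next}--\eqref{Su_LU_CCRB_zero_second_step}, i.e. via \eqref{kronecker_pseudo} and \eqref{kronecker_prop}. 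This gives \eqref{CCRB_is_higher_W}.
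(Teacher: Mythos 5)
Your proof is correct, but it takes a genuinely different route from the paper's. The paper argues statistically: it introduces the auxiliary score-based estimator $\hat{\thetavec}_{\text{CCRB}}$ in \eqref{CCRB_efficient}, proves via two range-space lemmas that under \eqref{yonina_assume} it is locally $\mathcal{X}$-unbiased and hence locally C-unbiased, invokes Theorem \ref{T3} to lower bound its WMSE by $B_{\text{LU-CCRB}}(\thetavec_0,\Wmat)$, and then computes that WMSE to be exactly $B_{\text{CCRB}}(\thetavec_0,\Wmat)$. You instead argue purely algebraically: $\Cmat_{\Umat,\Wmat}(\thetavec_0)=\Zmat^T\Wmat\Zmat\succeq\zerovec$ gives the L\"owner ordering $\Gammamat_{\Umat,\Wmat}(\thetavec_0)\succeq\Kmat_0$, and the variational identity $\vvec^T\Amat^{\dagger}\vvec=\max_{\xvec}(2\vvec^T\xvec-\xvec^T\Amat\xvec)$ (valid for $\Amat\succeq\zerovec$, $\vvec\in\mathcal{R}(\Amat)$) transfers this to the reversed inequality on the pseudo-inverse quadratic forms; your range verifications are sound, and your observation that \eqref{yonina_assume} forces $\Umat^T(\thetavec_0)\Jmat(\thetavec_0)\Umat(\thetavec_0)$ to be nonsingular is a clean sharpening of the paper's Lemmas \ref{Su_lamma_app1}--\ref{Su_lamma_app2}. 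What each approach buys: yours needs nothing beyond existence of the two bound expressions and \eqref{yonina_assume}, and in particular does not lean on Theorem \ref{T3} or on the regularity used to establish \eqref{smoothness} and \eqref{log_expect} for the auxiliary estimator; the paper's approach costs those ingredients but delivers the interpretive payoff emphasized in the discussion following the proposition, namely that the ordering reflects the inclusion of the $\mathcal{X}$-unbiased class in the C-unbiased class, witnessed by a concrete locally C-unbiased quantity whose WMSE equals the CCRB.
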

\begin{proof}
The proof is given in Appendix \ref{App_PROP_order}.	
\end{proof}
The LU-CCRB requires local C-unbiasedness and the CCRB requires local $\mathcal{X}$-unbiasedness, as mentioned in Subsections \ref{subsec:Derivation of LU-CCRB} and \ref{subsec:WMSE and CCRB}, respectively. The local C-unbiasedness is sufficient and less restrictive than local $\mathcal{X}$-unbiasedness, as mentioned in Subsection \ref{subsec:Local C-unbiasedness}. Therefore, the set of estimators for which the LU-CCRB is a lower bound contains the set of estimators for which the CCRB is a lower bound. This result elucidates the order relation in \eqref{CCRB_is_higher_W}. In Section \ref{sec:Examples}, we show examples in which the CML estimator is C-unbiased and is not $\mathcal{X}$-unbiased. As a result, in these examples the LU-CCRB is a lower bound on the WMSE of the CML estimator, while the CCRB on the WMSE is not necessarily a lower bound in the non-asymptotic region. The considered examples indicate that C-unbiasedness and the proposed LU-CCRB are more appropriate than $\mathcal{X}$-unbiasedness and the CCRB, respectively, for constrained parameter estimation.

\subsubsection{Asymptotic properties}\label{subsubsec:Asymptotic properties}
It is well known that under some conditions \cite{Moore_scoring}, the CCRB is attained asymptotically by the CML estimator. Consequently, the WMSE of the CML estimator asymptotically coincides with $B_{\text{CCRB}}(\thetavec_0,\Wmat)$ from \eqref{W_CCRB_W}. Therefore, it is of interest to compare $B_{\text{LU-CCRB}}(\thetavec_0,\Wmat)$ and $B_{\text{CCRB}}(\thetavec_0,\Wmat)$ in the asymptotic regime, {\it i.e.} when the number of independent identically distributed (i.i.d.) observation vectors tends to infinity. In the following proposition, we show the asymptotic relation between $B_{\text{LU-CCRB}}(\thetavec_0,\Wmat)$ and $B_{\text{CCRB}}(\thetavec_0,\Wmat)$.
\begin{proposition} \label{PROP_asymptotic}
Assume that $B_{\text{LU-CCRB}}(\thetavec_0,\Wmat)$ and $B_{\text{CCRB}}(\thetavec_0,\Wmat)$ exist and are nonzero. Then, given $L$ i.i.d. observation vectors,
\be\label{CCRB_is_higher_asymptitc_W}
\underset{L\to\infty}{\lim}\frac{B_{\text{LU-CCRB}}(\thetavec_0,\Wmat)}{B_{\text{CCRB}}(\thetavec_0,\Wmat)}=1.
\ee
\end{proposition}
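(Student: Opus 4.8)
The plan is to factor out the dependence on the number of observations $L$ and thereby reduce the statement to the continuity of a single pseudo-inverse quadratic form. For $L$ i.i.d. observation vectors the log-likelihood is a sum, so the scores add and the FIM is additive, $\Jmat(\thetavec_0)=L\Jmat_1(\thetavec_0)$, with $\Jmat_1(\thetavec_0)$ the single-observation FIM. The constraint objects $\Umat(\thetavec_0)$ and $\Vmat_m(\thetavec_0)$ and the weighting matrix $\Wmat$ are all independent of $L$, so by \eqref{Gamma_block_define}--\eqref{S_define} the matrix $\Cmat_0\define\Cmat_{\Umat,\Wmat}(\thetavec_0)$ is independent of $L$ as well. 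Writing $\Wmat_U\define\Umat^T(\thetavec_0)\Wmat\Umat(\thetavec_0)$, $\Jmat_U\define\Umat^T(\thetavec_0)\Jmat_1(\thetavec_0)\Umat(\thetavec_0)$, $\Amat_0\define\Wmat_U\otimes\Jmat_U$ and $\avec\define{\rm vec}(\Wmat_U)$, the definition \eqref{Gammamat_define} becomes $\Gammamat_{\Umat,\Wmat}(\thetavec_0)=\Cmat_0+L\Amat_0$.

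Both bounds then scale like $1/L$. From \eqref{W_CCRB_W}, $B_{\text{CCRB}}(\thetavec_0,\Wmat)=\frac1L{\rm Tr}(\Jmat_U^\dagger\Wmat_U)$, and repeating the purely algebraic steps \eqref{Su_Gammamat_define_assume}--\eqref{Su_LU_CCRB_zero_second_step} of the proof of Proposition \ref{PROP_GENERAL} (with the single-observation $\Jmat_U$) yields the identity ${\rm Tr}(\Jmat_U^\dagger\Wmat_U)=\avec^T\Amat_0^\dagger\avec$. Using $(cA)^\dagger=\frac1c A^\dagger$ for $c>0$ in \eqref{total_CR_bound_CS}, $B_{\text{LU-CCRB}}(\thetavec_0,\Wmat)=\avec^T(\Cmat_0+L\Amat_0)^\dagger\avec=\frac1L\avec^T(\Amat_0+\frac1L\Cmat_0)^\dagger\avec$. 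The $1/L$ factors cancel in the ratio \eqref{CCRB_is_higher_asymptitc_W}, so it suffices to prove
\[
\lim_{L\to\infty}\avec^T\bigl(\Amat_0+\tfrac1L\Cmat_0\bigr)^\dagger\avec=\avec^T\Amat_0^\dagger\avec=:\beta ,
\]
with $\beta>0$ by the nonvanishing hypothesis.

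The main obstacle is that the pseudo-inverse is discontinuous, so one cannot simply send $\frac1L\Cmat_0\to\zerovec$. Two structural facts remove it. First, I would show $\avec\in\mathcal{R}(\Amat_0)$: condition \eqref{yonina_assume} is equivalent to $\Jmat_U$ being nonsingular (it reads $\mathcal{R}(\Umat\Umat^T)\subseteq\mathcal{R}(\Umat\Jmat_U\Umat^T)$, and since $\Umat$ has orthonormal columns this forces ${\rm rank}(\Jmat_U)=M-K$), so ${\rm vec}(\Wmat_U)={\rm vec}(\Jmat_U\Jmat_U^{-1}\Wmat_U)\in\mathcal{R}(\Wmat_U\otimes\Jmat_U)=\mathcal{R}(\Amat_0)$. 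Second, $\Amat_0\succeq\zerovec$ and $\Cmat_0\succeq\zerovec$, the latter because \eqref{Gamma_block_define} exhibits $\Cmat_0$ as the Gram matrix of $\{\Wmat^{\frac12}\Smat^{(m)}_{\Wmat}(\thetavec_0)\}_{m=1}^{M-K}$. These two facts license the variational identity $\bvec^T\Amat^\dagger\bvec=\max_{\xvec:\,\xvec^T\Amat\xvec>0}\frac{(\bvec^T\xvec)^2}{\xvec^T\Amat\xvec}$, valid for $\Amat\succeq\zerovec$ and $\bvec\in\mathcal{R}(\Amat)$, applied with $\bvec=\avec$ both to $\Amat_0$ and to $\Amat_0+\frac1L\Cmat_0$ (note $\mathcal{R}(\Amat_0)\subseteq\mathcal{R}(\Amat_0+\frac1L\Cmat_0)$).

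With these in hand the limit follows by a squeeze and is the cleanest route around the discontinuity. Since $\Cmat_0\succeq\zerovec$ gives $\xvec^T(\Amat_0+\frac1L\Cmat_0)\xvec\geq\xvec^T\Amat_0\xvec$, every ratio is bounded by its $\Amat_0$-counterpart, so $\avec^T(\Amat_0+\frac1L\Cmat_0)^\dagger\avec\leq\beta$ and is nondecreasing in $L$. For a matching lower bound I would evaluate the ratio at the fixed vector $\xvec_\star=\Amat_0^\dagger\avec$, for which $\avec^T\xvec_\star=\xvec_\star^T\Amat_0\xvec_\star=\beta$, giving $\avec^T(\Amat_0+\frac1L\Cmat_0)^\dagger\avec\geq\frac{\beta^2}{\beta+\frac1L\xvec_\star^T\Cmat_0\xvec_\star}\to\beta$. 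Hence the limit equals $\beta=\avec^T\Amat_0^\dagger\avec$; dividing by $B_{\text{CCRB}}(\thetavec_0,\Wmat)=\frac1L\beta$ yields \eqref{CCRB_is_higher_asymptitc_W}. The only delicate points are the range membership $\avec\in\mathcal{R}(\Amat_0)$ and the positive semidefiniteness of $\Cmat_0$, which together ensure $\xvec_\star$ stays an admissible test vector for all $L$.
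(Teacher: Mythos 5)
Your proof is correct, and it is worth comparing to the paper's. Both arguments begin identically: substitute $\Jmat(\thetavec_0)=L\Jmat^{(1)}(\thetavec_0)$ as in \eqref{J_N_asimptotic}, observe that $\Cmat_{\Umat,\Wmat}(\thetavec_0)$ is independent of $L$, and reduce the claim to comparing the quadratic forms \eqref{Su_weighted_optimal_MSE_asympt} and \eqref{Su_CCRB_W_new_expression_new}. The paper then stops: it asserts that since $\Cmat_{\Umat,\Wmat}(\thetavec_0)$ is bounded while $L\left(\Umat^T\Wmat\Umat\right)\otimes\left(\Umat^T\Jmat^{(1)}\Umat\right)$ grows linearly, the limit ``can be verified'' --- glossing over exactly the point you identify, namely that the Moore--Penrose pseudo-inverse is not continuous under vanishing perturbations, so one cannot simply drop the $\tfrac{1}{L}\Cmat_0$ term. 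Your argument supplies the missing step rigorously: the range condition $\avec\in\mathcal{R}(\Amat_0)$ (via \eqref{yonina_assume} forcing $\Umat^T\Jmat^{(1)}\Umat$ to be nonsingular), the positive semidefiniteness of $\Cmat_0$ read off from the Gram structure of \eqref{Gamma_block_define}, the variational characterization of $\bvec^T\Amat^{\dagger}\bvec$ (the same generalized Cauchy--Schwarz identity the paper itself uses in Appendix~G), and a squeeze between the monotone upper bound and the test vector $\xvec_\star=\Amat_0^{\dagger}\avec$. What your route buys is a complete proof together with an explicit accounting of the hypotheses actually needed: the range membership $\avec\in\mathcal{R}(\Amat_0)$ is not listed in the proposition statement (which assumes only existence and nonvanishing of the two bounds) but is a standing assumption of the paper via \eqref{yonina_assume}, and your example-free analysis makes clear that without it the asserted limit is not obviously safe. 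As a bonus, your monotonicity observation shows that $L\cdot B_{\text{LU-CCRB}}(\thetavec_0,\Wmat)$ increases to $L\cdot B_{\text{CCRB}}(\thetavec_0,\Wmat)$, which is slightly more information than \eqref{CCRB_is_higher_asymptitc_W} itself.
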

\begin{proof}
For brevity, we remove the arguments of the functions that appear in this proof. Let $\Jmat^{(1)}$ denote the FIM based on a single observation vector. Then, for $L$ i.i.d. observation vectors (see e.g. \cite[p. 213]{KAY})
\be\label{J_N_asimptotic}
\Jmat=L\Jmat^{(1)}. 
\ee
By substituting \eqref{J_N_asimptotic} in \eqref{Gammamat_define}, we obtain
\be\label{Gammamat_define_asympt}
\Gammamat_{\Umat,\Wmat}=\Cmat_{\Umat,\Wmat}+L\left(\Umat^T\Wmat\Umat\right)\otimes\left(\Umat^T\Jmat^{(1)}\Umat\right).
\ee
Then, by substituting \eqref{Gammamat_define_asympt} in \eqref{total_CR_bound_CS}, we obtain the LU-CCRB on the WMSE based on $L$ i.i.d. observation vectors, which is given by
\be \label{Su_weighted_optimal_MSE_asympt}
\begin{split}
B_{\text{LU-CCRB}}&={\rm{vec}}^T\left(\Umat^T\Wmat\Umat\right)\left(\Cmat_{\Umat,\Wmat}+L\left(\Umat^T\Wmat\Umat\right)\right.\\
&~~~\left.\otimes\left(\Umat^T\Jmat^{(1)}\Umat\right)\right)^{\dagger}{\rm{vec}}\left(\Umat^T\Wmat\Umat\right).
\end{split}
\ee
By applying \eqref{kronecker_prop} on the right hand side (r.h.s.) of \eqref{W_CCRB_W} and using the properties of the trace and pseudo-inverse, it can be verified that
\be\label{Su_CCRB_W_new_expression}
\begin{split}
B_{\text{CCRB}}&={\rm{vec}}^T\left(\Umat^T\Wmat\Umat\right)\left(\left(\Umat^T\Wmat\Umat\right)^{\dagger}\right.\\
&~~~\left.\otimes\left(\Umat^T\Jmat\Umat\right)^{\dagger}\right){\rm{vec}}\left(\Umat^T\Wmat\Umat\right).
\end{split}
\ee
By using \eqref{kronecker_pseudo} and substituting \eqref{J_N_asimptotic} in \eqref{Su_CCRB_W_new_expression}, one obtains
\be\label{Su_CCRB_W_new_expression_new}
\begin{split}
B_{\text{CCRB}}&={\rm{vec}}^T\left(\Umat^T\Wmat\Umat\right)\left(L\left(\Umat^T\Wmat\Umat\right)\right.\\
&~~~\left.\otimes\left(\Umat^T\Jmat^{(1)}\Umat\right)\right)^{\dagger}{\rm{vec}}\left(\Umat^T\Wmat\Umat\right).
\end{split}
\ee
Under the assumption that $B_{\text{LU-CCRB}}$ exists, the elements of $\Cmat_{\Umat,\Wmat}$ are bounded while the term $L\left(\Umat^T\Wmat\Umat\right)\otimes\left(\Umat^T\Jmat^{(1)}\Umat\right)$ is proportional to $L$. Therefore, for $L\to\infty$ it can be verified from \eqref{Su_weighted_optimal_MSE_asympt} and \eqref{Su_CCRB_W_new_expression_new} that \eqref{CCRB_is_higher_asymptitc_W} is satisfied.	
\end{proof}
From Proposition \ref{PROP_asymptotic} it can be seen that $B_{\text{LU-CCRB}}(\thetavec_0,\Wmat)$ and $B_{\text{CCRB}}(\thetavec_0,\Wmat)$ asymptotically coincide. Consequently, under mild assumptions and similar to CCRB, $B_{\text{LU-CCRB}}(\thetavec_0,\Wmat)$ is asymptotically attained by the CML estimator. In addition, it should be noted that Proposition \ref{PROP_asymptotic} can be generalized and the term $\frac{B_{\text{LU-CCRB}}(\thetavecsmall_0,\Wmat)}{B_{\text{CCRB}}(\thetavecsmall_0,\Wmat)}$ tends to $1$ in any case where the FIM increases (in a matrix inequality sense), for example due to increasing signal-to-noise ratio, while the elements of $\Cmat_{\Umat,\Wmat}(\thetavec_0)$ are bounded.

\section{Examples} \label{sec:Examples}
In this section, we evaluate the proposed LU-CCRB in two scenarios. In the first scenario, we consider an orthogonal linear model with norm constraint and in the second scenario we consider complex amplitude estimation with amplitude constraint and unknown frequency. For both scenarios, it is shown that the CCRB on the WMSE is not a lower bound on the WMSE of the CML estimator in the non-asymptotic region. In contrast, we show that the CML estimator is a C-unbiased estimator and thus, the proposed LU-CCRB is a lower bound on the WMSE of the CML estimator. The CML estimator performance is computed using 10,000 Monte-Carlo trials.

\subsection{Linear model with norm constraint} \label{subsec:Linear model with norm constraint}
We consider the following linear observation model:
\be \label{model1}
\xvec=\Hmat\thetavec+\nvec,
\ee
where $\xvec\in{\mathbb{R}}^N$ is an observation vector, $\Hmat\in\mathbb{R}^{N\times M}$, $N\geq M$, is a known full-rank matrix, $\thetavec\in\mathbb{R}^{M}$ is an unknown deterministic parameter vector, and $\nvec\sim N(\zerovec,\sigma^2\Imat_N)$ is a zero-mean Gaussian noise vector with known covariance matrix $\sigma^2\Imat_N$. It is assumed that $\thetavec$ satisfies the norm constraint
\be\label{constraint_examp1}
f(\thetavec)=\|\thetavec\|^2-\rho^2=0,
\ee
where $\rho$ is known. This constraint arises, for example, in regularization techniques \cite{GOLUB,CHEN_REGULAR}. The CML estimator of $\thetavec$ satisfies
\be \label{CML_examp1}
\hat{\thetavec}_{\text{CML}}=\arg\underset{\thetavecsmall}{\min}~ \|\xvec-\Hmat\thetavec\|^2~~\text{s.t.}~~f(\thetavec)=0,
\ee
where $f(\thetavec)$ is given in \eqref{constraint_examp1}. By using \eqref{constraint_examp1}, we obtain $\Fmat(\thetavec)=2\thetavec^T$ and thus, $\Umat(\thetavec)$ satisfies 
\be\label{U_sphere_equality}
\Umat^T(\thetavec)\thetavec=\zerovec,~\Umat^T(\thetavec)\Umat(\thetavec)=\Imat_{M-1},~\forall\thetavec\in\Theta_\fvec,
\ee
where \eqref{U_sphere_equality} stems from \eqref{one}-\eqref{two}. In this example, we are interested in the trace of the MSE matrix and choose $\Wmat=\Imat_M$.\\
\indent
Under the model in \eqref{model1}, it can be shown that the FIM is given by
\be\label{FIM_examp1}
\Jmat(\thetavec)=\frac{1}{\sigma^2}\Hmat^T\Hmat.
\ee
By using \eqref{U_sphere_equality} and orthogonal projection matrix properties, we obtain
\be\label{P_u_dot_examp1}
\Pmat_\Umat^\bot(\thetavec)=\frac{1}{\rho^2}\thetavec\thetavec^T.
\ee
By substituting \eqref{P_u_dot_examp1} in \eqref{Gamma_block_define_I}, one obtains
\be\label{Gamma_block_define_I_examp1}
\Cmat_{\Umat,\Imat_M}^{(m,k)}(\thetavec)=\frac{1}{\rho^2}\Umat^T(\thetavec)\Vmat_m^T(\thetavec)\thetavec\thetavec^T\Vmat_k(\thetavec)\Umat(\thetavec),
\ee
$\forall m,k=1,\ldots,M-1$. It can be seen from \eqref{U_sphere_equality} that any column of $\Umat(\thetavec)$ satisfies
\be \label{u_m_sphere_equality}
\uvec_m^T(\thetavec)\thetavec=0,\forall \thetavec\in\mathbb{R}^M,~m=1,\ldots,M-1.
\ee
Taking the gradient of \eqref{u_m_sphere_equality} and using \eqref{Vmat_define}, we obtain
\be \label{u_m_sphere_equality_gradient}
\thetavec^T\Vmat_m(\thetavec)=-\uvec_m^T(\thetavec),\forall \thetavec\in\mathbb{R}^M,~m=1,\ldots,M-1.
\ee
By substituting \eqref{u_m_sphere_equality_gradient} in \eqref{Gamma_block_define_I_examp1}, one obtains
\be\label{Gamma_block_define_I_examp1_next}
\Cmat_{\Umat,\Imat_M}^{(m,k)}(\thetavec)=\frac{1}{\rho^2}\Umat^T(\thetavec)\uvec_m(\thetavec)\uvec_k^T(\thetavec)\Umat(\thetavec),
\ee
$\forall m,k=1,\ldots,M-1$. Then, by using $\Umat^T(\thetavec)\Umat(\thetavec)=\Imat_{M-1}$ and the block structure of $\Cmat_{\Umat,\Imat_M}(\thetavec)$, it can be verified that
\be\label{Gamma_define_I_examp1}
\Cmat_{\Umat,\Imat_M}(\thetavec)=\frac{1}{\rho^2}{\rm{vec}}\left(\Imat_{M-1}\right){\rm{vec}}^T\left(\Imat_{M-1}\right).
\ee
By substituting \eqref{FIM_examp1} and \eqref{Gamma_define_I_examp1} in \eqref{Gammamat_define_I}, we obtain
\be\label{Gammamat_define_I_examp1}
\begin{split}
\Gammamat_{\Umat,\Imat_M}(\thetavec)&=\frac{1}{\rho^2}{\rm{vec}}\left(\Imat_{M-1}\right){\rm{vec}}^T\left(\Imat_{M-1}\right)\\
&~~~+\frac{1}{\sigma^2}\Imat_{M-1}\otimes\left(\Umat^T(\thetavec)\Hmat^T\Hmat\Umat(\thetavec)\right).
\end{split}
\ee
By substituting \eqref{Gammamat_define_I_examp1} in \eqref{total_CR_bound_CS_I}, using Sherman--Morrison formula for matrix inversion (see e.g. \cite[p. 18]{MATRIX_COOKBOOK}) and the identity in \eqref{kronecker_prop}, and applying simple algebraic manipulations, one obtains
\be \label{total_CR_bound_CS_I_examp1_final}
B_{\text{LU-CCRB}}(\thetavec,\Imat_M)=\left(\frac{1}{\rho^2}+\frac{1}{{\rm{Tr}}\left(\Bmat_{\text{CCRB}}(\thetavec)\right)}\right)^{-1},
\ee
where the CCRB trace is given by
\be \label{W_CCRB_examp1}
{\rm{Tr}}\left(\Bmat_{\text{CCRB}}(\thetavec)\right)=\sigma^2{\rm{Tr}}\left(\left(\Umat^T(\thetavec)\Hmat^T\Hmat\Umat(\thetavec)\right)^{-1}\right),
\ee
as can be shown by substituting $\Wmat=\Imat_M$, \eqref{two}, and \eqref{FIM_examp1} in \eqref{W_CCRB_W}. It can be seen that the bound in \eqref{total_CR_bound_CS_I_examp1_final} is always lower than or equal to the bound in \eqref{W_CCRB_examp1}, where the gap between the bounds increases as $\rho$ decreases.

\subsubsection{Case 1 - $\Hmat$ has orthogonal columns with equal norms}\label{subsubsec:Case 1 - Hmat has orthogonal columns with equal norms}
We assume that $\Hmat$ satisfies
\be\label{H_equation}
\Hmat^T\Hmat=\beta\Imat_M,~\beta>0.
\ee
This assumption appears, for example, in \cite[pp. 88-90]{KAY}, \cite{Hero_constraint}, \cite{PROTTER}. By substituting \eqref{H_equation} in \eqref{W_CCRB_examp1}, we obtain the CCRB trace for this case
\be \label{W_CCRB_examp1_c1}
{\rm{Tr}}\left(\Bmat_{\text{CCRB}}(\thetavec)\right)=\frac{(M-1)\sigma^2}{\beta}.
\ee
The LU-CCRB for this case is obtained by substituting \eqref{W_CCRB_examp1_c1} in \eqref{total_CR_bound_CS_I_examp1_final} and is given by
\be \label{total_CR_bound_CS_I_examp1_final_c1}
B_{\text{LU-CCRB}}(\thetavec,\Imat_M)=\left(\frac{1}{\rho^2}+\frac{\beta}{(M-1)\sigma^2}\right)^{-1}.
\ee
It can be seen from \eqref{W_CCRB_examp1_c1} that the CCRB trace is not a function of the constrained norm value $\rho$. As opposed to the CCRB trace, the LU-CCRB from \eqref{total_CR_bound_CS_I_examp1_final_c1} is a function of $\rho$. By substituting \eqref{H_equation} in \eqref{CML_examp1}, it can be verified that the CML estimator for this case is given by
\be\label{CML_equation1}
\hat{\thetavec}_{\text{CML}}=\rho\frac{\Hmat^T\xvec}{\|\Hmat^T\xvec\|}.
\ee
This estimator is a uniformly C-unbiased estimator, as shown in the following proposition.
\begin{proposition} \label{PROP_C_unb_Sphere}
Under the model in \eqref{model1} and \eqref{H_equation}, the CML estimator from \eqref{CML_equation1} is a uniformly C-unbiased estimator for the weighting matrix $\Wmat=\Imat_M$. 
\end{proposition}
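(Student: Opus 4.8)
The plan is to verify the uniform C-unbiasedness condition \eqref{unbiased_cond_nec} directly for the special case $\Wmat=\Imat_M$, which reads $\Umat^T(\thetavec)\bvec_{\hat{\thetavecsmall}}(\thetavec)=\zerovec$ for all $\thetavec\in\Theta_\fvec$. Since $\bvec_{\hat{\thetavecsmall}}(\thetavec)={\rm{E}}[\hat{\thetavec}_{\text{CML}}]-\thetavec$ and $\Umat^T(\thetavec)\thetavec=\zerovec$ by \eqref{U_sphere_equality}, it suffices to show that ${\rm{E}}[\hat{\thetavec}_{\text{CML}}]$ is proportional to $\thetavec$, i.e. ${\rm{E}}[\hat{\thetavec}_{\text{CML}}]=c\,\thetavec$ for some scalar $c$; this would immediately give $\Umat^T(\thetavec){\rm{E}}[\hat{\thetavec}_{\text{CML}}]=c\,\Umat^T(\thetavec)\thetavec=\zerovec$. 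Because $\|\hat{\thetavec}_{\text{CML}}\|=\rho$ is bounded, the expectation exists and no integrability issues arise.

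First I would rewrite the estimator in terms of the statistic $\zvec\define\Hmat^T\xvec$. Using the model \eqref{model1} and the assumption \eqref{H_equation}, $\zvec=\Hmat^T\Hmat\thetavec+\Hmat^T\nvec=\beta\thetavec+\Hmat^T\nvec$, and since $\nvec\sim N(\zerovec,\sigma^2\Imat_N)$ we have $\Hmat^T\nvec\sim N(\zerovec,\sigma^2\Hmat^T\Hmat)=N(\zerovec,\sigma^2\beta\Imat_M)$. Hence $\zvec\sim N(\beta\thetavec,\sigma^2\beta\Imat_M)$ is Gaussian with mean $\beta\thetavec$ and \emph{isotropic} covariance, and from \eqref{CML_equation1} the estimator is $\hat{\thetavec}_{\text{CML}}=\rho\,\zvec/\|\zvec\|$, a function of $\zvec$ alone.

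The key step is a rotational symmetry argument. I would take an arbitrary orthogonal matrix $\Qmat\in\mathbb{R}^{M\times M}$ that fixes the mean direction, $\Qmat\thetavec=\thetavec$, and observe that $\Qmat\zvec$ has mean $\Qmat\beta\thetavec=\beta\thetavec$ and covariance $\Qmat(\sigma^2\beta\Imat_M)\Qmat^T=\sigma^2\beta\Imat_M$, so $\Qmat\zvec$ has the same distribution as $\zvec$. Since $\hat{\thetavec}_{\text{CML}}(\Qmat\zvec)=\rho\,\Qmat\zvec/\|\Qmat\zvec\|=\Qmat\big(\rho\,\zvec/\|\zvec\|\big)=\Qmat\,\hat{\thetavec}_{\text{CML}}(\zvec)$ by orthogonality, taking expectations and using the equality in distribution yields ${\rm{E}}[\hat{\thetavec}_{\text{CML}}]=\Qmat\,{\rm{E}}[\hat{\thetavec}_{\text{CML}}]$ for every such $\Qmat$. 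The stabilizer of $\thetavec\neq\zerovec$ in the orthogonal group acts as the full orthogonal group on the hyperplane $\thetavec^\perp$, so the only vectors fixed by all such $\Qmat$ lie along $\thetavec$; this forces ${\rm{E}}[\hat{\thetavec}_{\text{CML}}]=c\,\thetavec$ and completes the argument.

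I expect the main obstacle to be making the symmetry/invariance step fully rigorous rather than the algebra: one must justify that $\zvec$ and $\Qmat\zvec$ are genuinely equal in distribution (clean here because the covariance is isotropic, courtesy of \eqref{H_equation}) and that invariance of ${\rm{E}}[\hat{\thetavec}_{\text{CML}}]$ under the entire stabilizer subgroup of $\thetavec$ pins the vector down to the line spanned by $\thetavec$. Everything else---reducing the condition via \eqref{U_sphere_equality} and identifying the isotropic Gaussian law of $\zvec$---is routine.
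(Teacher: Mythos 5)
Your proof is correct, and it takes a genuinely different route from the paper's. The paper conditions on the norm $\nu=\|\Hmat^T\xvec\|/(\beta^{1/2}\sigma)$, identifies the conditional law of $\hat{\thetavec}_{\text{CML}}/\rho$ as von Mises--Fisher with mean direction $\thetavec/\|\thetavec\|$, invokes the known formula for the von Mises--Fisher mean (a ratio of modified Bessel functions times the mean direction) to conclude that ${\rm{E}}[\hat{\thetavec}_{\text{CML}}\,|\,\nu;\thetavec]\propto\thetavec$, and then applies the law of total expectation. You instead bypass the explicit distribution entirely: the isotropy of the covariance of $\zvec=\Hmat^T\xvec$ (which is exactly what \eqref{H_equation} buys) plus the orthogonal equivariance of $\zvec\mapsto\rho\zvec/\|\zvec\|$ forces ${\rm{E}}[\hat{\thetavec}_{\text{CML}}]$ to be fixed by the full stabilizer of $\thetavec$ in $O(M)$, hence to lie on the line spanned by $\thetavec$, which is all that \eqref{U_sphere_equality} requires. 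Your symmetry argument is more elementary and self-contained (no appeal to the directional-statistics literature, and it makes transparent that the result is really a consequence of rotational invariance rather than of any Gaussian-specific computation -- it would extend verbatim to any spherically symmetric noise about $\beta\thetavec$); the paper's argument yields strictly more information, namely the explicit value of the proportionality constant as a Bessel-function ratio, which quantifies the actual mean-bias of the CML estimator even though that is not needed for C-unbiasedness. The only points to make fully rigorous in your write-up are the ones you already flag: $\thetavec\neq\zerovec$ on $\Theta_\fvec$ (since $\|\thetavec\|=\rho>0$), $M\geq 2$ so that the stabilizer acts as the full orthogonal group on the hyperplane $\thetavec^{\perp}$ and fixes only the zero vector there, and $\zvec\neq\zerovec$ almost surely so that the estimator is well defined.
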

\begin{proof}
The proof is given in Appendix \ref{App_PROP_C_unb_Sphere}.
\end{proof}
\indent
In this example, we consider a single observation vector according to the model in \eqref{model1} and present simulations for $M=3$. In this case, it can be verified that the vectors $[\theta_2,-\theta_1,0]^T$, $[\theta_1\theta_3,\theta_2\theta_3,-\theta_1^2-\theta_2^2]^T$, and $\thetavec$ are mutually orthogonal. Thus, with proper normalization, we obtain
\be\label{U3_define}
\Umat(\thetavec)=\frac{1}{(\theta_1^2+\theta_2^2)^{\frac{1}{2}}}\begin{bmatrix}
\theta_2 & \frac{\theta_1\theta_3}{(\theta_1^2+\theta_2^2+\theta_3^2)^{\frac{1}{2}}}\\
-\theta_1 & \frac{\theta_2\theta_3}{(\theta_1^2+\theta_2^2+\theta_3^2)^{\frac{1}{2}}}\\
0 & \frac{-\theta_1^2-\theta_2^2}{(\theta_1^2+\theta_2^2+\theta_3^2)^{\frac{1}{2}}}
\end{bmatrix}
\ee
that satisfies \eqref{U_sphere_equality}. In addition, the vector $\thetavec$ can be written in spherical coordinates as
\begin{equation*}
\thetavec=[\rho\cos(\phi_1)\sin(\phi_2),\rho\sin(\phi_1)\sin(\phi_2),\rho\cos(\phi_2)]^T,
\end{equation*}
$\forall\thetavec\in\Theta_\fvec$, $\phi_1\in[-\pi,\pi),~\phi_2\in[0,\pi]$.\\
\indent 
In Fig. \ref{examp1_bias_phi_1_phi_2}, we examine the $\mathcal{X}$-unbiasedness of the CML estimator, as defined in \eqref{point_wise_chi_unb} and \eqref{local_wise_chi_unb}. The CML bias and the CML bias gradient multiplied by $\Umat(\thetavec)$ are evaluated numerically. The bias gradient is computed by using the equality $\Dmat_{\hat{\thetavecsmall}}(\thetavec_0)={\rm{E}}\left[(\hat{\thetavec}-\thetavec_0)\upsilonvec^T(\xvec,\thetavec_0)\right]-\Imat_M$ that stems from \eqref{log_expect} in Appendix \ref{App_T3}, where the expectation is computed via Monte-Carlo simulations. The bias of $\hat{\theta}_{\text{CML},1}$ and the term $[\Dmat_{\hat{\thetavecsmall}_{\text{CML}}}(\thetavec)\Umat(\thetavec)]_{1,1}$ are presented versus $\phi_1$ for $\phi_2=0.45\pi$ (upper) and versus $\phi_2$ for $\phi_1=0.2\pi$ (lower), where $\Hmat=\Imat_3$, $\sigma^2=16$, and $\rho=1$. It can be seen that the CML estimator, which is a uniformly C-unbiased estimator as proven in Proposition \ref{PROP_C_unb_Sphere}, does not satisfy \eqref{point_wise_chi_unb} and \eqref{local_wise_chi_unb}, {\it i.e.} it is not an $\mathcal{X}$-unbiased estimator.\\
\indent
In Fig. \ref{examp1_phi_1_phi_2}, the CCRB trace and the LU-CCRB are evaluated and compared to the MSE matrix trace of the CML estimator versus $\phi_1$ for $\phi_2=0.45\pi$ (upper) and versus $\phi_2$ for $\phi_1=0.2\pi$ (lower). It can be seen that $B_{\text{LU-CCRB}}(\thetavec,\Imat_M)$ is a lower bound on the MSE matrix trace of the CML estimator, while ${\rm{Tr}}\left(\Bmat_{\text{CCRB}}(\thetavec)\right)$ is not. The reason for this phenomenon is that the CML estimator is C-unbiased, as shown in Proposition \ref{PROP_C_unb_Sphere}, but it is not $\mathcal{X}$-unbiased as shown in Fig. \ref{examp1_bias_phi_1_phi_2}.\\
\indent
In Fig. \ref{examp1_R}, the CCRB trace and the LU-CCRB are evaluated versus $\rho$ and compared to the MSE matrix trace of the CML estimator for $\phi_1=0.2\pi$ and $\phi_2=0.45\pi$. It can be verified that the maximum likelihood (ML) estimator for this case, $\hat{\thetavec}_{\text{ML}}=\frac{\Hmat^T\xvec}{\beta}$, is a uniformly mean-unbiased estimator of $\thetavec$ but may not satisfy the constraint in \eqref{constraint_examp1} due to the additive noise in \eqref{model1}. The CML estimator from \eqref{CML_equation1} is obtained by normalizing this ML estimator, which forces it to satisfy the constraint. This normalization causes the CML estimator to be a mean-biased and an $\mathcal{X}$-biased estimator, but preserves its uniform C-unbiasedness, as proven in Proposition \ref{PROP_C_unb_Sphere}. Thus, in this case, C-unbiasedness is preserved under the normalization. It can be seen that the CCRB trace does not take the norm value into account even though the CML error is affected by the norm value, as manifested by the LU-CCRB. In addition, it can be seen that the CCRB trace is higher than the CML error for $\rho<9$. For sufficiently large values of $\rho$ ($\rho>40$), the LU-CCRB and the CCRB trace coincide, as can be observed from comparison of \eqref{W_CCRB_examp1_c1} and \eqref{total_CR_bound_CS_I_examp1_final_c1}.

\begin{figure}[h!]
\centering\includegraphics[width=7cm]{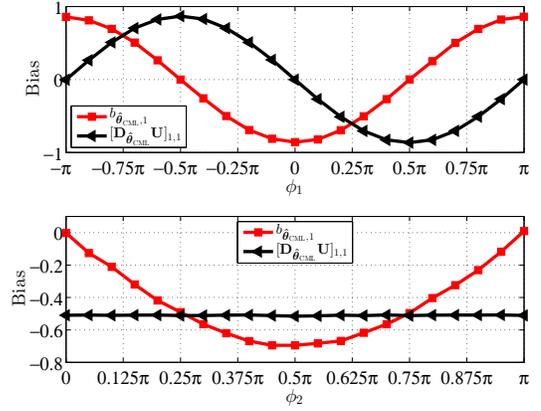}
\caption {Linear model with norm constraint, Case 1: The bias of $\hat{\theta}_{\text{CML},1}$, and the term $[\Dmat_{\hat{\thetavecsmall}_{\text{CML}}}(\thetavec)\Umat(\thetavec)]_{1,1}$ versus $\phi_1$ for $\phi_2=0.45\pi$ (top) and versus $\phi_2$ for $\phi_1=0.2\pi$ (bottom).
}\label{examp1_bias_phi_1_phi_2}
\end{figure}
\begin{figure}[h!]
\centering\includegraphics[width=7cm]{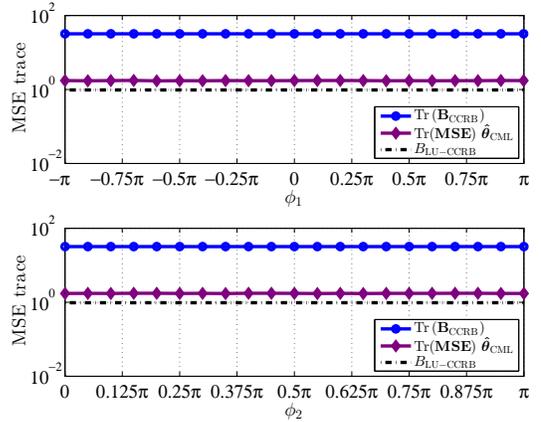}
\caption {Linear model with norm constraint, Case 1: MSE matrix trace of CML estimator, $B_{\text{LU-CCRB}}$, and ${\rm{Tr}}\left(\Bmat_{\text{CCRB}}\right)$ versus $\phi_1$ for $\phi_2=0.45\pi$ (top) and versus $\phi_2$ for $\phi_1=0.2\pi$ (bottom).
}\label{examp1_phi_1_phi_2}
\end{figure}
\begin{figure}[h!]
\centering\includegraphics[width=7cm]{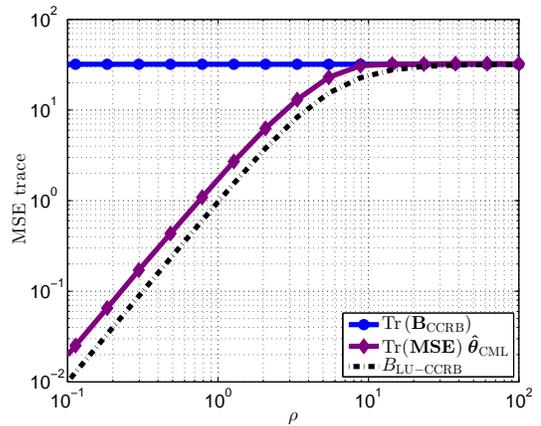}
\caption {Linear model with norm constraint, Case 1: MSE matrix trace of CML estimator, $B_{\text{LU-CCRB}}$, and ${\rm{Tr}}\left(\Bmat_{\text{CCRB}}\right)$ versus norm value $\rho$ for $\phi_1=0.2\pi$ and $\phi_2=0.45\pi$.
}\label{examp1_R}
\end{figure}

\subsubsection{Case 2 - General $\Hmat$}\label{subsubsec:Case 2 - General Hmat}
For the general case, {\it i.e.} when \eqref{H_equation} does not hold, the CML estimator from \eqref{CML_examp1} is evaluated numerically (see e.g. \cite{STOICA_LINEAR}, \cite[pp. 765-766]{MOON}, \cite{WIESEL_BECK}). In this case, we examine the asymptotic properties of the LU-CCRB and consider $L$ i.i.d. observation vectors following the model in \eqref{model1}, {\it i.e.}
\be \label{model1_2}
\xvec_l=\Hmat\thetavec+\nvec_l,~l=1,\ldots,L.
\ee
The model in \eqref{model1_2} can be presented as the single observation model in \eqref{model1} by replacing $\Hmat$ with
\be\label{H_case2}
\bar\Hmat=[\Hmat^T,\ldots,\Hmat^T]^T,
\ee
where $\bar\Hmat\in\mathbb{R}^{LN\times M}$. In the following simulation, we set $M=3$ and the LU-CCRB and CCRB trace for this case are obtained by substituting \eqref{U3_define} and \eqref{H_case2} in \eqref{total_CR_bound_CS_I_examp1_final} and \eqref{W_CCRB_examp1}, respectively. \\
\indent
In Fig. \ref{examp1_K}, the LU-CCRB and the CCRB trace are evaluated versus $L$ for $\Hmat=[\Imat_3,\hvec_4]^T$, $\hvec_4=[0.9,0.9,0.6]^T$, $\sigma^2=16$, $\phi_1=0.2\pi$, $\phi_2=0.45\pi$, and $\rho=1$. The considered bounds are compared to the MSE matrix trace of the CML estimator. It can be observed that although ${\rm{Tr}}\left(\Bmat_{\text{CCRB}}(\thetavec)\right)$ is asymptotically achieved by the CML estimator, it is greater than the MSE matrix trace of the CML estimator and does not provide a lower bound in the non-asymptotic region. In addition, it can be seen that for sufficiently large $L$, $B_{\text{LU-CCRB}}(\thetavec,\Imat_M)$ and ${\rm{Tr}}\left(\Bmat_{\text{CCRB}}(\thetavec)\right)$ coincide, which is in accordance with Proposition \ref{PROP_asymptotic}. Finally, it can be seen that $B_{\text{LU-CCRB}}(\thetavec,\Imat_M)$ and ${\rm{Tr}}\left(\Bmat_{\text{CCRB}}(\thetavec)\right)$ are attained by the CML estimator for $L>16$ and for $L>250$, respectively. Thus, this figure illustrates the distinction between the asymptotic region ($L>250$) and the non-asymptotic region ($L<250$) in terms of the relation between the CML estimator and the CCRB trace. In addition, this figure shows the inappropriateness of CCRB trace for predicting the behavior of the CML error.

\begin{figure}[h!]
\centering\includegraphics[width=7cm]{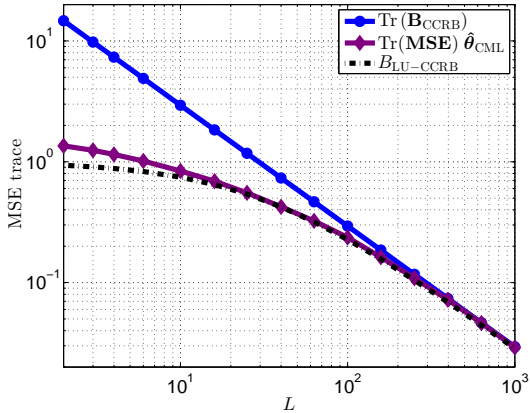}
\caption {Linear model with norm constraint, Case 2: MSE matrix trace of CML estimator, $B_{\text{LU-CCRB}}$, and ${\rm{Tr}}\left(\Bmat_{\text{CCRB}}\right)$ versus number of i.i.d. observations $L$.
}\label{examp1_K}
\end{figure}

\subsection{Complex amplitude estimation with amplitude constraint and unknown frequency} \label{subsec:Complex amplitude estimation with amplitude constraint and unknown frequency}
We consider complex amplitude estimation problem of a complex sinusoid embedded in noise, according to the following observation model \cite{RIFE,CYCLIC}:
\be \label{model2}
x_l=Ae^{jl\omega}+n_l,~l=l_1,l_1+1,\ldots,l_1+L-1,
\ee
where $l_1$ is the initial observation index, $L$ is the number of observations, the frequency, $\omega\in[-\pi,\pi)$, and the amplitude, $A\in{\mathbb{C}}$, are unknown, and $\left\{n_l\right\}_{l=l_1}^{l_1+L-1}$ is a complex circularly symmetric white Gaussian noise sequence with known variance $\sigma^2$. This example demonstrates a case in which $B_{\text{CCRB}}(\thetavec,\Wmat)$ is significantly affected by the choice of $l_1$ and diverges as $|l_1|\to\infty$, while the WMSE of the CML estimator is bounded. As opposed to $B_{\text{CCRB}}(\thetavec,\Wmat)$, it is shown that the corresponding LU-CCRB is a lower bound for the CML estimator and properly captures the CML error behavior. The unknown parameter vector for this model is $\thetavec=[\theta_1,\theta_2,\theta_3]^T=\left[{\text{Re}}\{A\},{\text{Im}}\{A\},\omega\right]^T$, where it is known that the amplitude $A$ satisfies the constraint 
\be\label{amplitude_constraint}
|A|^2={\text{Re}}^2\{A\}+{\text{Im}}^2\{A\}=c^2
\ee
or equivalently, $f(\thetavec)=\theta_1^2+\theta_2^2-c^2=0$. This constraint can be viewed as a known radar budget in a Doppler estimation problem \cite{CCRB_hybrid}. It should be noted that we can reparameterize this problem and remove the constraint by directly estimating $\angle{A}$. However, there is no uniformly mean-unbiased estimator of this phase \cite{TODROS_WINNIK}, \cite{PHASE_KAY}. In this case, $\Fmat(\thetavec)=2[\theta_1,\theta_2,0]$ and it can be verified that the vectors $[\theta_2,-\theta_1,0]^T$, $[0,0,1]^T$, and $2[\theta_1,\theta_2,0]^T$ are mutually orthogonal. Thus, with proper normalization, we obtain
\be\label{U_freq_define}
\Umat(\thetavec)=\begin{bmatrix}
\frac{\theta_2}{(\theta_1^2+\theta_2^2)^{\frac{1}{2}}} & 0  \\
-\frac{\theta_1}{(\theta_1^2+\theta_2^2)^{\frac{1}{2}}} & 0 \\
0 & 1
\end{bmatrix}
\ee
that satisfies \eqref{one}-\eqref{two}. The CML estimator of $\thetavec$ under the constraint in \eqref{amplitude_constraint} is given by \cite{RIFE}
\begin{equation}\label{CML_freq}
\hat\thetavec_{\text{CML}}=\left[\frac{c{\text{Re}}\{Y(\xvec,\hat{\omega}_{\text{CML}})\}}{|Y(\xvec,\hat{\omega}_{\text{CML}})|},\frac{c{\text{Im}}\{Y(\xvec,\hat{\omega}_{\text{CML}})\}}{|Y(\xvec,\hat{\omega}_{\text{CML}})|},\hat{\omega}_{\text{CML}}\right]^T,
\end{equation}
where
$\hat{\omega}_{\text{CML}}=\arg\underset{\omega\in[-\pi,\pi)}{\max}\left|Y(\xvec,\omega)\right|^2$ and $Y(\xvec,\omega)\define\frac{1}{L}\sum_{l=l_1}^{l_1+L-1}x_l e^{-jl\omega}$. In this example, we focus on the estimation of the amplitude that is affected by the constraint in \eqref{amplitude_constraint}. Thus, we evaluate the WMSE from \eqref{WMSE} with weighting matrix
\be\label{W_I2}
\Wmat=\begin{bmatrix}
1 & 0 & 0\\
0 & 1 & 0\\
0 & 0 & 0
\end{bmatrix}
\ee
and the unknown frequency can be considered as a nuisance parameter that may affect the amplitude estimation performance. Let $\Umat^T(\thetavec)\Wmat\bvec_{\hat{\thetavecsmall}}(\thetavec)$ denote the C-bias of an estimator. Since analytic computation of the CML C-bias is intractable in this case, we evaluate its C-bias norm numerically. Similarly, we examine the $\mathcal{X}$-unbiasedness of the CML estimator by numerical evaluation of the CML bias and the CML bias gradient multiplied by $\Umat(\thetavec)$, as explained in Subsection \ref{subsubsec:Case 1 - Hmat has orthogonal columns with equal norms}.\\
\indent
The FIM in this case is given by
\be\label{FIM_examp2}
\Jmat(\thetavec)=\frac{2L}{\sigma^2}\begin{bmatrix}
1 & 0 & -\frac{\theta_2 (2l_1+L-1)}{2}\\
0 & 1 & \frac{\theta_1 (2l_1+L-1)}{2}\\
-\frac{\theta_2 (2l_1+L-1)}{2} & \frac{\theta_1 (2l_1+L-1)}{2} & \frac{c^2 \sum_{l=l_1}^{l_1+L-1}l^2}{L}
\end{bmatrix},
\ee
$\forall\thetavec\in\Theta_\fvec$. By substituting \eqref{U_freq_define}, \eqref{W_I2}, and \eqref{FIM_examp2} in \eqref{W_CCRB_W}, we obtain
\be \label{CCRB_examp2}
B_{\text{CCRB}}(\thetavec,\Wmat)=\sigma^2\frac{6l_1^2+6(L-1)l_1+(2L-1)(L-1)}{L(L-1)(L+1)}.
\ee
It is shown in Appendix \ref{App_LU_CCRB_examp2} that the LU-CCRB from \eqref{total_CR_bound_CS} for this case, is given by
\be \label{LU_CCRB_examp2}
B_{\text{LU-CCRB}}(\thetavec,\Wmat)=\left(\frac{1}{c^2}+\frac{1}{B_{\text{CCRB}}(\thetavec,\Wmat)}\right)^{-1}.
\ee
It can be seen that the CCRB in \eqref{CCRB_examp2} is not a function of the constraint parameter, $c$, and therefore, the insight it provides on the system may be inadequate.\\
\indent
In the upper plot of Fig. \ref{examp2_phi_bias_mse}, the C-bias norm of $\hat\thetavec_{\text{CML}}$, the bias of $\hat{\theta}_{\text{CML},1}$, and the term $[\Dmat_{\hat{\thetavecsmall}_{\text{CML}}}(\thetavec)\Umat(\thetavec)]_{1,1}$ are presented versus $\angle{A}$ for $c=0.2$, where $l_1=1$, $L=15$, $\omega=0.9\pi$, and $\sigma^2=16$. It can be seen that the C-bias norm of $\hat\thetavec_{\text{CML}}$ is approximately zero. In addition, we observed in the simulations that in this scenario the CML C-bias norm is approximately zero in the {\em entire} parameter space, {\it i.e.} $\forall\thetavec\in\Theta_\fvec$. Thus, $\hat\thetavec_{\text{CML}}$ can be considered as a C-unbiased estimator. In addition, it can be observed that $\hat\thetavec_{\text{CML}}$ is not an $\mathcal{X}$-unbiased estimator, since it does not satisfy \eqref{point_wise_chi_unb} and \eqref{local_wise_chi_unb}. The LU-CCRB and the CCRB from \eqref{LU_CCRB_examp2} and \eqref{CCRB_examp2}, respectively, are evaluated in the lower plot of Fig. \ref{examp2_phi_bias_mse} versus $\angle{A}$ and compared to the WMSE of the CML estimator. It can be seen that the CML estimator has uniformly lower WMSE than the corresponding CCRB, while the LU-CCRB is a lower bound on its WMSE.\\
\indent
For a fixed number of observations, $L$, it can be seen from \eqref{CCRB_examp2} that for $|l_1|\to\infty$ the CCRB tends to infinity, while the LU-CCRB does not. By using \eqref{CML_freq} and \eqref{W_I2}, it can be verified that for any $l_1\in\mathbb{Z}$ the WMSE of the CML estimator is bounded. In Fig. \ref{examp2_l1}, the LU-CCRB from \eqref{LU_CCRB_examp2} and the CCRB from \eqref{CCRB_examp2} are evaluated versus $l_1$ and compared to the WMSE of the CML estimator, where $\angle{A}=0.3\pi$. It can be seen that $B_{\text{CCRB}}(\thetavec,\Wmat)$ is not a lower bound for the CML performance and that the behavior of this bound is misleading, as $|l_1|\to\infty$, since estimators with finite WMSE for any $l_1\in\mathbb{Z}$ can be found. As opposed to $B_{\text{CCRB}}(\thetavec,\Wmat)$, the LU-CCRB is a lower bound on the WMSE of the CML estimator in this case.\\
\indent
In Fig. \ref{examp2_SNR}, the LU-CCRB and the corresponding CCRB from \eqref{LU_CCRB_examp2} and \eqref{CCRB_examp2}, respectively, are evaluated versus $\frac{1}{\sigma^2}$ and compared to the WMSE of the CML estimator for $c=0.2$ and $c=0.5$, where $\angle{A}=0.3\pi$. In this case, the FIM from \eqref{FIM_examp2} increases (in a matrix inequality sense) as $\frac{1}{\sigma^2}$ increases and the elements of $\Cmat_{\Umat,\Wmat}(\thetavec)$ from \eqref{Gamma_block_define_examp2} in Appendix \ref{App_LU_CCRB_examp2} are bounded. Thus, it can be seen that $B_{\text{LU-CCRB}}(\thetavec,\Wmat)$ and $B_{\text{CCRB}}(\thetavec,\Wmat)$ coincide for sufficiently large values of $\frac{1}{\sigma^2}$, as explained in Subsection \ref{subsubsec:Asymptotic properties}. However, it can be seen that unlike $B_{\text{CCRB}}(\thetavec,\Wmat)$, the proposed $B_{\text{LU-CCRB}}(\thetavec,\Wmat)$ provides a lower bound on the WMSE of the CML estimator for any value of $\frac{1}{\sigma^2}$. In addition, it can be seen that both the LU-CCRB and the CML error are affected by the constraint parameter, $c$, which is not taken into account by the CCRB from \eqref{CCRB_examp2}. Finally, this figure shows that even though $B_{\text{CCRB}}(\thetavec,\Wmat)$ is attained by the CML estimator for sufficiently high values of $\frac{1}{\sigma^2}$ ($\frac{1}{\sigma^2}>10^{0.75}$ for $c=0.5$ and $\frac{1}{\sigma^2}>10^{1.5}$ for $c=0.2$), it is not informative and is not appropriate for performance analysis of the CML estimator in lower values of $\frac{1}{\sigma^2}$, while the LU-CCRB is informative for any value of $\frac{1}{\sigma^2}$.\\
\indent
In Fig. \ref{examp2_N}, $B_{\text{LU-CCRB}}(\thetavec,\Wmat)$ and $B_{\text{CCRB}}(\thetavec,\Wmat)$ from \eqref{LU_CCRB_examp2} and \eqref{CCRB_examp2}, respectively, are evaluated versus the number of observations $L$ and compared to the WMSE of the CML estimator for $l_1=1$, $c=1$, and $\sigma^2=16$. It can be seen that the LU-CCRB is an informative lower bound on the WMSE of the CML estimator for any value of $L$, while $B_{\text{CCRB}}(\thetavec,\Wmat)$ is not a lower bound for $L<20$. In addition, it can be seen that for $L>450$, the bounds coincide and are attained by the CML estimator.

\begin{figure}[h!]
\centering\includegraphics[width=7cm]{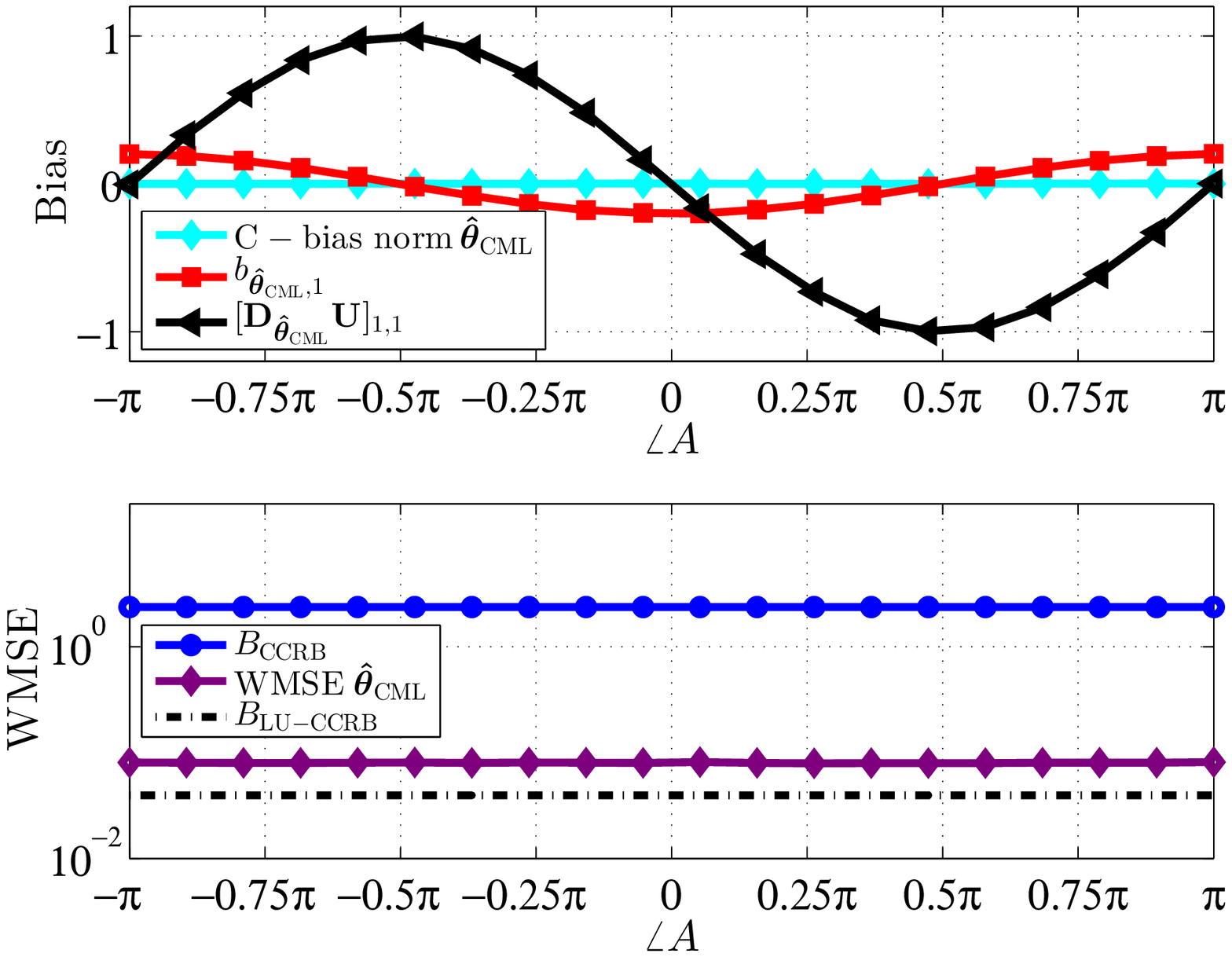}
\caption {Complex amplitude estimation with amplitude constraint and unknown frequency: The C-bias norm of $\hat{\thetavec}_{\text{CML}}$, the bias of $\hat{\theta}_{\text{CML},1}$, and the term $[\Dmat_{\hat{\thetavecsmall}_{\text{CML}}}(\thetavec)\Umat(\thetavec)]_{1,1}$ versus $\angle{A}$ (top). WMSE of CML estimator, $B_{\text{LU-CCRB}}$, and $B_{\text{CCRB}}$ versus $\angle{A}$ (bottom).
}\label{examp2_phi_bias_mse}
\end{figure}

\begin{figure}[h!]
\centering\includegraphics[width=7cm]{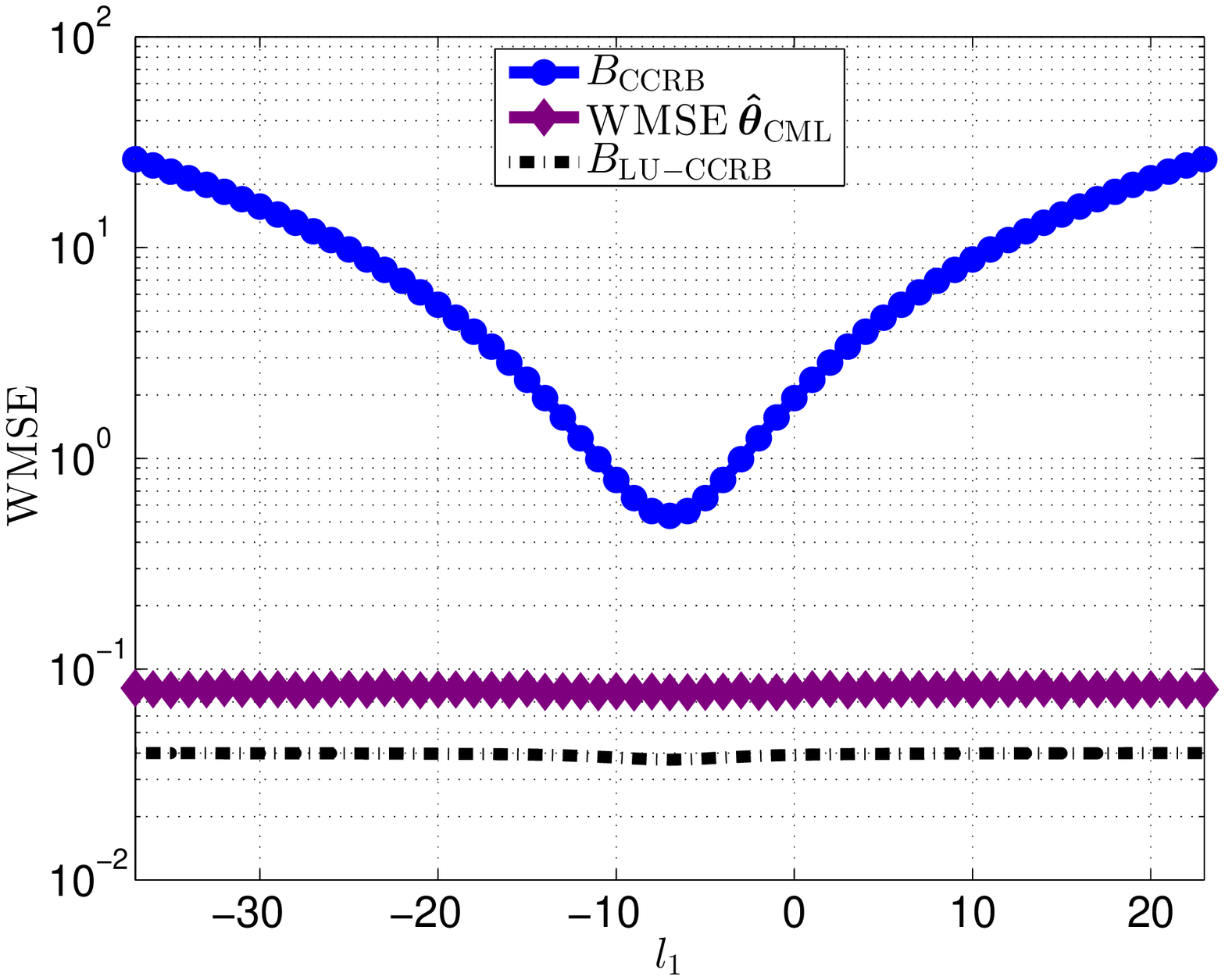}
\caption {Complex amplitude estimation with amplitude constraint and unknown frequency: WMSE of CML estimator, $B_{\text{LU-CCRB}}$, and $B_{\text{CCRB}}$ versus $l_1$.
}\label{examp2_l1}
\end{figure}

\begin{figure}[h!]
\centering\includegraphics[width=7cm]{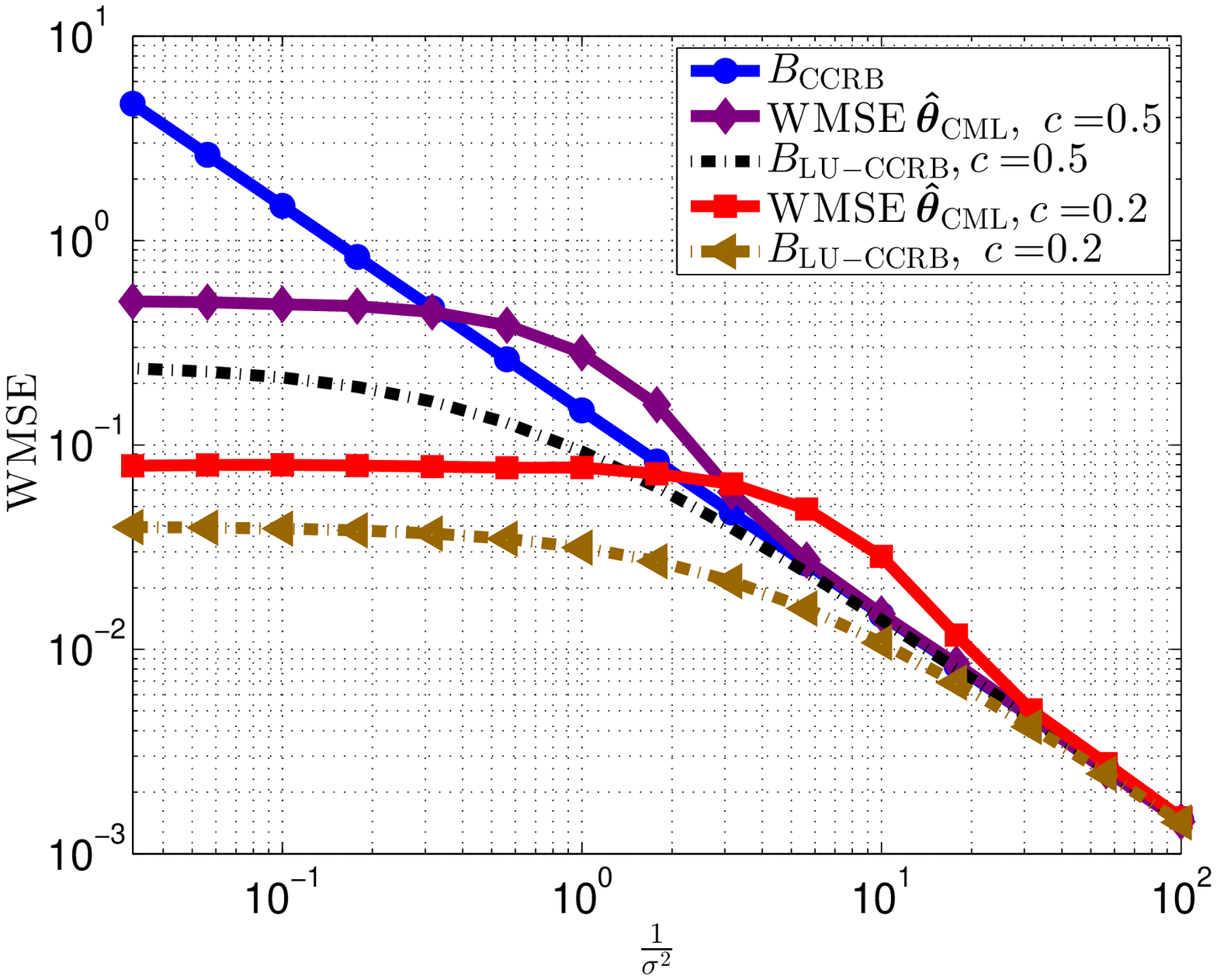}
\caption {Complex amplitude estimation with amplitude constraint and unknown frequency: WMSE of CML estimator, $B_{\text{LU-CCRB}}$, and $B_{\text{CCRB}}$ versus $\frac{1}{\sigma^2}$.
}\label{examp2_SNR}
\end{figure}

\begin{figure}[h!]
\centering\includegraphics[width=7cm]{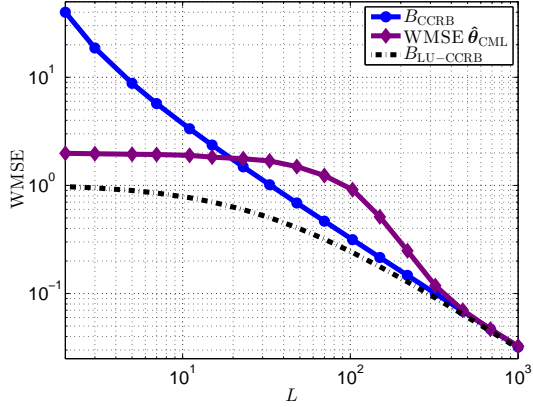}
\caption {Complex amplitude estimation with amplitude constraint and unknown frequency: WMSE of CML estimator, $B_{\text{LU-CCRB}}$, and $B_{\text{CCRB}}$ versus $L$.
}\label{examp2_N}
\end{figure}

\section{Conclusion} \label{sec:Conclusion}
In this paper, we consider non-Bayesian parameter estimation under parametric constraints. First, a novel unbiasedness restriction, denoted by C-unbiasedness, is proposed. The C-unbiasedness is based on Lehmann's concept of unbiasedness that takes into account the chosen cost function and the relevant parameter space. In addition, we propose a novel Cram$\acute{\text{e}}$r-Rao-type bound, denoted by LU-CCRB, which is a lower bound on the WMSE of locally C-unbiased estimators. The properties of LU-CCRB and its relation to the well-known CCRB were examined both analytically and via simulations. Simulations with two examples show that the CML estimator is C-unbiased but does not satisfy the restrictive CCRB unbiasedness conditions. As a result, in these examples the corresponding CCRB is not a lower bound on the WMSE of the CML estimator in the non-asymptotic region, while the proposed LU-CCRB is an informative lower bound.

\appendices

\section{Proof of Proposition \ref{LocCunbias_prop}}\label{App_LocCunbias_prop}
Let 
\be \label{gvec_define_loc}
\gvec(\thetavec)\define\Umat^T(\thetavec)\Wmat\bvec_{\hat{\thetavecsmall}}(\thetavec).
\ee
Based on the proof of Proposition \ref{Cunbias_prop} and using Definition \ref{Def_local_Cunb} with $\Omega_\thetavecsmall=\Theta_\fvec$, it can be verified that a necessary condition for local Lehmann-unbiasedness in the vicinity of $\thetavec_0\in\Theta_\fvec$ w.r.t. the WSE cost function is
\be \label{Loc_defdef2_WSE_g}
\gvec(\thetavec)=\zerovec,
\ee
for any $\thetavec\in\Theta_\fvec$, s.t. $|\theta_m-\theta_{0,m}|<\varepsilon_m,~\varepsilon_m\to0,~\forall m=1,\ldots,M$. In the vicinity of $\thetavec_0$, by using the notion of feasible directions under equality constraints \cite{BenHaim,sparse_con,normC} and using \eqref{one}-\eqref{two}, we can write
\be\label{theta_lims}
\thetavec=\thetavec_0+\Umat(\thetavec_0)\dvec\tauvec,~\dvec\tauvec\in{\mathbb{R}}^{M-K},~\|\dvec\tauvec\|<\varepsilon,~\varepsilon\to0.
\ee
Then, by substituting \eqref{theta_lims} in the left hand side (l.h.s.) of \eqref{Loc_defdef2_WSE_g} and rewriting in terms of Taylor expansion, one obtains
\be \label{sec_localU_taylor_g}
\gvec(\thetavec_0)+\left.\nabla_{\thetavecsmall}\gvec(\thetavec)\right|_{\thetavecsmall=\thetavecsmall_0}\Umat(\thetavec_0)\dvec\tauvec+{\bf{o}}(\|\dvec\tauvec\|)=\zerovec,
\ee
$\dvec\tauvec\in{\mathbb{R}}^{M-K},~\|\dvec\tauvec\|<\varepsilon,~\varepsilon\to0$. From the uniqueness of Taylor expansion, one obtains the equalities
\be \label{point_wise_unb_g}
\gvec(\thetavec_0)=\zerovec
\ee
and
\be \label{local_wise_unb_g}
\left.\nabla_{\thetavecsmall}\gvec(\thetavec)\right|_{\thetavecsmall=\thetavecsmall_0}\Umat(\thetavec_0)=\zerovec.
\ee
By substituting \eqref{gvec_define_loc} in \eqref{point_wise_unb_g} and \eqref{local_wise_unb_g}, we obtain the first local C-unbiasedness condition from \eqref{point_wise_unb} and
\be \label{sec_localU_final_temp}
\left.\nabla_{\thetavecsmall}\left(\Umat^T(\thetavec)\Wmat\bvec_{\hat{\thetavecsmall}}(\thetavec)\right)\right|_{\thetavecsmall=\thetavecsmall_0}\Umat(\thetavec_0)=\zerovec,
\ee
respectively. The equality in \eqref{sec_localU_final_temp} can be rewritten in terms of the following $M-K$ equalities
\be \label{sec_localU_final_temp_M_minus_K}
\left.\nabla_{\thetavecsmall}\left(\uvec_m^T(\thetavec)\Wmat\bvec_{\hat{\thetavecsmall}}(\thetavec)\right)\right|_{\thetavecsmall=\thetavecsmall_0}\Umat(\thetavec_0)=\zerovec,
\ee
$\forall m=1,\ldots,M-K$. By using the product rule for derivatives we can rewrite \eqref{sec_localU_final_temp_M_minus_K} as
\begin{multline} \label{sec_localU_final_temp_M_minus_K_next}
\bvec_{\hat{\thetavecsmall}}^T(\thetavec_0)\Wmat\left(\left.\nabla_{\thetavecsmall}\uvec_m(\thetavec)\right|_{\thetavecsmall_0}\right)\Umat(\thetavec_0)\\
+\uvec_m^T(\thetavec_0)\Wmat\left(\left.\nabla_{\thetavecsmall}\bvec_{\hat{\thetavecsmall}}(\thetavec)\right|_{\thetavecsmall_0}\right)\Umat(\thetavec_0)=\zerovec,
\end{multline}
$\forall m=1,\ldots,M-K$. Finally, by substituting \eqref{Vmat_define} and \eqref{bias_gradient} in \eqref{sec_localU_final_temp_M_minus_K_next} and reordering, we obtain the second local C-unbiasedness condition from \eqref{sec_localU_final}.

\section{Proof of Theorem \ref{T3}}\label{App_T3}
Under Conditions \ref{cond1CRB}-\ref{cond3CRB}, by using Cauchy–-Schwarz inequality, it can be verified that
\begin{multline}\label{Su_substitution_CS}
{\rm{E}}\left[(\hat{\thetavec}-\thetavec_0)^T\Wmat(\hat{\thetavec}-\thetavec_0)\right]{\rm{E}}\left[\xivec_{\Wmat}^T(\xvec,\thetavec_0)\Wmat\xivec_{\Wmat}(\xvec,\thetavec_0)\right]\\
\geq\left({\rm{E}}\left[\xivec_{\Wmat}^T(\xvec,\thetavec_0)\Wmat(\hat{\thetavec}-\thetavec_0)\right]\right)^2
\end{multline}
for any positive semidefinite weighting matrix, $\Wmat$, where
\be\label{Su_b_aux_W}
\xivec_{\Wmat}(\xvec,\thetavec_0)\define\sum_{m=1}^{M-K}\left(\Smat^{(m)}_{\Wmat}(\thetavec_0)+\right.\left.\Tmat^{(m)}_{\Wmat}(\xvec,\thetavec_0)\right)\cvec_{m}
\ee
is an auxiliary function, $\Smat^{(m)}_{\Wmat}(\thetavec_0)$ and $\Tmat^{(m)}_{\Wmat}(\xvec,\thetavec_0)$ are defined in \eqref{S_define} and \eqref{T_define}, respectively, and $\cvec_{m}\in\mathbb{R}^{M-K}$ is an arbitrary vector, $\forall m=1,\ldots,M-K$. This auxiliary function allows the derivation of an estimator-independent lower bound.\footnote{It is shown in Appendix \ref{App_Alternative derivation of LU-CCRB} that the LU-CCRB from \eqref{total_CR_bound_CS_I} can also be derived by minimization of the MSE matrix trace under the local C-unbiasedness constraints stated in \eqref{point_wise_unb_I}-\eqref{sec_localU_final_I}. This alternative derivation elucidates the choice of the auxiliary function from \eqref{Su_b_aux_W} in the proof of Theorem \ref{T3}.} By using \eqref{Su_b_aux_W}, we obtain
\be\label{Su_DENOM_CS}
\begin{split}
&{\rm{E}}\left[\xivec_{\Wmat}^T(\xvec,\thetavec_0)\Wmat\xivec_{\Wmat}(\xvec,\thetavec_0)\right]\\
&=\sum_{m=1}^{M-K}\sum_{k=1}^{M-K}\cvec_{m}^T{\rm{E}}\left[\left(\Smat^{(m)}_{\Wmat}(\thetavec_0)+\Tmat^{(m)}_{\Wmat}(\xvec,\thetavec_0)\right)^T\right.\\
&~~~\left.\times\Wmat\left(\Smat^{(k)}_{\Wmat}(\thetavec_0)+\Tmat^{(k)}_{\Wmat}(\xvec,\thetavec_0)\right)\right]\cvec_{k}.
\end{split}
\ee 
Under regularity condition \ref{cond1CRB}, it can be shown that \cite[p. 67]{KAY} 
\be\label{smoothness}
{\rm{E}}\left[\upsilonvec(\xvec,\thetavec_0)\right]=\zerovec,
\ee
where $\upsilonvec(\xvec,\thetavec)$ is the log-likelihood derivative, defined in \eqref{l_define}. By using \eqref{smoothness} and \eqref{T_define}, we obtain
\be\label{Su_E_T_ZERO_W}
{\rm{E}}\left[\Tmat^{(m)}_{\Wmat}(\xvec,\thetavec_0)\right]=\zerovec,~\forall m=1,\ldots,M-K. 
\ee
Thus,
\be\label{Su_firstStepAppA}
\begin{split} 
&{\rm{E}}\left[\left(\Smat^{(m)}_{\Wmat}(\thetavec_0)+\Tmat^{(m)}_{\Wmat}(\xvec,\thetavec_0)\right)^T\right.\\
&~~~\left.\times\Wmat\left(\Smat^{(k)}_{\Wmat}(\thetavec_0)+\Tmat^{(k)}_{\Wmat}(\xvec,\thetavec_0)\right)\right]\\
&=\left(\Smat^{(m)}_{\Wmat}(\thetavec_0)\right)^T\Wmat\Smat^{(k)}_{\Wmat}(\thetavec_0)\\
&~~~+{\rm{E}}\left[\left(\Tmat^{(m)}_{\Wmat}(\xvec,\thetavec_0)\right)^T\Wmat\Tmat^{(k)}_{\Wmat}(\xvec,\thetavec_0)\right]\\
&=\Cmat_{\Umat,\Wmat}^{(m,k)}(\thetavec_0)+\left(\uvec_m^T(\thetavec_0)\Wmat\uvec_k(\thetavec_0)\right)\Umat^T(\thetavec_0)\Jmat(\thetavec_0)\Umat(\thetavec_0),
\end{split}
\ee
$\forall m,k=1,\ldots,M-K$, where the first equality is obtained by substituting \eqref{Su_E_T_ZERO_W}. The second equality is obtained by substituting \eqref{Gamma_block_define} and \eqref{T_define}, using pseudo-inverse properties, applying simple algebraic manipulations, and substituting \eqref{FIM}. Then, by substituting \eqref{Su_firstStepAppA} in \eqref{Su_DENOM_CS} and using Kronecker product definition and \eqref{Gammamat_define}, we obtain 
\be\label{Su_DENOM_CS_next}
{\rm{E}}\left[\xivec_{\Wmat}^T(\xvec,\thetavec_0)\Wmat\xivec_{\Wmat}(\xvec,\thetavec_0)\right]=\cvec^T\Gammamat_{\Umat,\Wmat}(\thetavec_0)\cvec,
\ee
where
\be\label{Su_cvec_define_W}
\cvec\define[\cvec_1^T,\ldots,\cvec_{M-K}^T]^T. 
\ee
Next, by using \eqref{Su_b_aux_W} and \eqref{bias_definition}, one obtains
\begin{multline}\label{Su_numer_CS_tir_next}
{\rm{E}}\left[\xivec_{\Wmat}^T(\xvec,\thetavec_0)\Wmat(\hat{\thetavec}-\thetavec_0)\right]=\sum_{m=1}^{M-K}\bigg(\cvec_m^T\left(\Smat^{(m)}_{\Wmat}(\thetavec_0)\right)^T\\
\times\Wmat\bvec_{\hat{\thetavecsmall}}(\thetavec_0)\bigg)+\sum_{m=1}^{M-K}\cvec_m^T{\rm{E}}\left[\left(\Tmat^{(m)}_{\Wmat}(\xvec,\thetavec_0)\right)^T\Wmat(\hat{\thetavec}-\thetavec_0)\right].
\end{multline}
By substituting \eqref{S_define} and \eqref{T_define} in \eqref{Su_numer_CS_tir_next} and using pseudo-inverse and trace properties, we obtain
\be\label{Su_numer_CS_tir_next4}
\begin{split}
&{\rm{E}}\left[\xivec_{\Wmat}^T(\xvec,\thetavec_0)\Wmat(\hat{\thetavec}-\thetavec_0)\right]\\
&=\sum_{m=1}^{M-K}\cvec_m^T\Umat^T(\thetavec_0)\Vmat_m^T(\thetavec_0)\Wmat^{\frac{1}{2}}\\
&~~~\times\Pmat_{\Wmat^{\frac{1}{2}}\Umat}^\bot(\thetavec_0)\Wmat^{\frac{1}{2}}\Wmat^{\dagger}\Wmat\bvec_{\hat{\thetavecsmall}}(\thetavec_0)\\
&+\sum_{m=1}^{M-K}{\rm{Tr}}\left({\rm{E}}\left[(\hat{\thetavec}-\thetavec_0)\upsilonvec^T(\xvec,\thetavec_0)\right]\Umat(\thetavec_0)\cvec_m\uvec_m^T(\thetavec_0)\Wmat\right),
\end{split}
\ee
By using the definition of orthogonal projection matrix and pseudo-inverse properties, it can be verified that for any locally C-unbiased estimator
\be\label{Su_numer_express_II}
\begin{split}
&\Wmat^{\frac{1}{2}}\Pmat_{\Wmat^{\frac{1}{2}}\Umat}^\bot(\thetavec_0)\Wmat^{\frac{1}{2}}\Wmat^{\dagger}\Wmat\bvec_{\hat{\thetavecsmall}}(\thetavec_0)\\
&=\left(\Wmat-\Wmat\Umat(\thetavec_0)(\Umat^T(\thetavec_0)\Wmat\Umat(\thetavec_0))^{\dagger}\Umat^T(\thetavec_0)\Wmat\right)\bvec_{\hat{\thetavecsmall}}(\thetavec_0)\\
&=\Wmat\bvec_{\hat{\thetavecsmall}}(\thetavec_0).
\end{split}
\ee
where the second equality is obtained by substituting the first condition for local C-unbiasedness from \eqref{point_wise_unb}. From Condition \ref{cond1CRB}, it can be verified that \cite{BenHaim}
\be \label{log_expect}
{\rm{E}}\left[(\hat{\thetavec}-\thetavec_0)\upsilonvec^T(\xvec,\thetavec_0)\right]=\Imat_M+\Dmat_{\hat{\thetavecsmall}}(\thetavec_0).
\ee
By substituting \eqref{log_expect} and \eqref{Su_numer_express_II} in \eqref{Su_numer_CS_tir_next4} and using trace properties and some algebraic manipulations, one obtains
\be\label{Su_numer_CS_tir_next5}
\begin{split}
&{\rm{E}}\left[\xivec_{\Wmat}^T(\xvec,\thetavec_0)\Wmat(\hat{\thetavec}-\thetavec_0)\right]\\
&=\sum_{m=1}^{M-K}\left(\bvec_{\hat{\thetavecsmall}}^T(\thetavec_0)\Wmat\Vmat_m(\thetavec_0)\Umat(\thetavec_0)+\uvec_m^T(\thetavec_0)\Wmat\Umat(\thetavec_0)\right.\\
&~~~\left.+\uvec_m^T(\thetavec_0)\Wmat\Dmat_{\hat{\thetavecsmall}}(\thetavec_0)\Umat(\thetavec_0)\right)\cvec_m=\psivec_{\Wmat}^T(\thetavec_0)\cvec,
\end{split}
\ee
where the second equality is obtained by substituting the second condition for local C-unbiasedness from \eqref{sec_localU_final}, \eqref{Su_cvec_define_W}, and 
\be\label{evec_define}
\psivec_{\Wmat}(\thetavec)\define{\rm{vec}}\left(\Umat^T(\thetavec)\Wmat\Umat(\thetavec)\right)\in\mathbb{R}^{(M-K)^2}.
\ee
By substituting \eqref{Su_DENOM_CS_next} and \eqref{Su_numer_CS_tir_next5} in \eqref{Su_substitution_CS}, one obtains
\be\label{Su_substitution_CS_newer}
{\rm{E}}\left[(\hat{\thetavec}-\thetavec_0)^T\Wmat(\hat{\thetavec}-\thetavec_0)\right]\cvec^T\Gammamat_{\Umat,\Wmat}(\thetavec_0)\cvec\geq\left(\psivec_{\Wmat}^T(\thetavec_0)\cvec\right)^2.
\ee
Finally, by substituting 
\be\label{Su_c_opt}
\cvec=\Gammamat_{\Umat,\Wmat}^{\dagger}(\thetavec_0)\psivec_{\Wmat}(\thetavec_0)
\ee
in \eqref{Su_substitution_CS_newer}, reordering, using pseudo-inverse matrix property \cite[p. 21]{MATRIX_COOKBOOK}
\begin{equation}\label{pinv_A_dagger}
\Amat^{\dagger}\Amat\Amat^{\dagger}=\Amat^{\dagger},
\end{equation}
and substituting \eqref{evec_define}, we obtain the LU-CCRB from \eqref{total_CR_bound_CS}. It is shown in Appendix \ref{App_Derivation of Su_c_opt} that the choice of $\cvec$ from \eqref{Su_c_opt} results in the tightest WMSE lower bound that can be obtained from \eqref{Su_substitution_CS_newer}.\\
\indent
From the equality condition of Cauchy–-Schwarz, equality in \eqref{Su_substitution_CS} is obtained {\em iff}
\be \label{Su_equality_cond_preview_W}
\Wmat^{\frac{1}{2}}(\hat{\thetavec}-\thetavec_0)=\zeta_{\thetavecsmall_0,\Wmat}\Wmat^{\frac{1}{2}}\xivec_{\Wmat}(\xvec,\thetavec_0),
\ee
where $\zeta_{\thetavecsmall_0,\Wmat}$ is a scalar that may be dependent of $\thetavec_0$ and $\Wmat$. Computing the expected squared norm of each side of \eqref{Su_equality_cond_preview_W}, substituting \eqref{Su_DENOM_CS_next} and \eqref{Su_c_opt}, and using \eqref{WMSE} and \eqref{pinv_A_dagger}, we obtain
\be\label{Su_equality_cond_preview_W11}
\begin{split}
{\text{WMSE}}_{\hat{\thetavecsmall}}(\thetavec_0)&=\zeta_{\thetavecsmall_0,\Wmat}^2\psivec_{\Wmat}^T(\thetavec_0)\Gammamat_{\Umat,\Wmat}^{\dagger}(\thetavec_0)\psivec_{\Wmat}(\thetavec_0)\\
&=\zeta_{\thetavecsmall_0,\Wmat}^2B_{\text{LU-CCRB}}(\thetavec_0,\Wmat),
\end{split}
\ee
where the last equality is obtained by substituting \eqref{evec_define} and using \eqref{total_CR_bound_CS}. Thus, for obtaining equality in \eqref{total_CR_bound_CS_pre_W}, we require $\zeta_{\thetavecsmall_0,\Wmat}=\pm 1$. It can be verified that in order for $\hat{\thetavec}$ from \eqref{Su_equality_cond_preview_W} to satisfy \eqref{sec_localU_final}, we must require 
\be\label{Su_zeta_solution}
\zeta_{\thetavecsmall_0,\Wmat}=1.
\ee
Finally, by substituting \eqref{Su_b_aux_W} and \eqref{Su_zeta_solution} in \eqref{Su_equality_cond_preview_W} and then, substituting \eqref{Su_cvec_define_W}, \eqref{evec_define}, and \eqref{Su_c_opt}, we obtain \eqref{equality_cond_Prop_W}.

\section{Proof of Proposition \ref{PROP_order}}\label{App_PROP_order}
Consider the estimator
\begin{multline}\label{CCRB_efficient}
\hat{\thetavec}_{\text{CCRB}}\\
=\thetavec_0+\Umat(\thetavec_0)\left(\Umat^T(\thetavec_0)\Jmat(\thetavec_0)\Umat(\thetavec_0)\right)^{\dagger}\Umat^T(\thetavec_0)\upsilonvec(\xvec,\thetavec_0).
\end{multline}	
From the definition of $\hat{\thetavec}_{\text{CCRB}}$ in \eqref{CCRB_efficient} and by using \eqref{smoothness}, it can be verified that
\be \label{Su_point_wise_chi_unb_proof}
\bvec_{\hat{\thetavecsmall}_{\text{CCRB}}}(\thetavec_0)=\zerovec,
\ee
{\it i.e.} $\hat{\thetavec}_{\text{CCRB}}$ satisfies the first condition for local $\mathcal{X}$-unbiasedness from \eqref{point_wise_chi_unb}.
In order to show that $\hat{\thetavec}_{\text{CCRB}}$ satisfies the second condition for local $\mathcal{X}$-unbiasedness from \eqref{local_wise_chi_unb} under the assumption in \eqref{yonina_assume}, we prove the following Lemmas.
\begin{lemma} \label{Su_lamma_app1}
Eq. \eqref{yonina_assume} is satisfied {\em iff} 
\be\label{Su_yonina_assume_new}
\mathcal{R}\left(\Umat^T(\thetavec_0)\right)\subseteq\mathcal{R}\left(\Umat^T(\thetavec_0)\Jmat(\thetavec_0)\Umat(\thetavec_0)\right).
\ee
\end{lemma}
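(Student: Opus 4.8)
The plan is to show that each of the two inclusions is equivalent to the single algebraic statement that $\Umat^T(\thetavec_0)\Jmat(\thetavec_0)\Umat(\thetavec_0)$ is nonsingular, and then to chain the two equivalences. For brevity I write $\Umat=\Umat(\thetavec_0)$ and $\Jmat=\Jmat(\thetavec_0)$. The only structural facts I would invoke come from \eqref{two}, namely $\Umat^T\Umat=\Imat_{M-K}$: this makes $\Umat$ of full column rank and $\Umat^T$ surjective onto $\mathbb{R}^{M-K}$ (so $\mathcal{R}(\Umat^T)=\mathbb{R}^{M-K}$), and it makes $\Umat\Umat^T$ the orthogonal projection onto $\mathcal{R}(\Umat)$, so that $\mathcal{R}(\Umat\Umat^T)=\mathcal{R}(\Umat)$.

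First I would prove the forward direction \eqref{yonina_assume}$\Rightarrow$\eqref{Su_yonina_assume_new}. Since $\mathcal{R}(\Umat^T)=\mathbb{R}^{M-K}$, it suffices to show that every $\yvec\in\mathbb{R}^{M-K}$ lies in $\mathcal{R}(\Umat^T\Jmat\Umat)$. The vector $\Umat\yvec$ belongs to $\mathcal{R}(\Umat)=\mathcal{R}(\Umat\Umat^T)$, so \eqref{yonina_assume} provides a $\zvec$ with $\Umat\yvec=\Umat\Umat^T\Jmat\Umat\Umat^T\zvec$. Left-multiplying by $\Umat^T$ and repeatedly collapsing $\Umat^T\Umat=\Imat_{M-K}$ reduces the left side to $\yvec$ and the right side to $\Umat^T\Jmat\Umat(\Umat^T\zvec)$, whence $\yvec\in\mathcal{R}(\Umat^T\Jmat\Umat)$, as claimed.

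For the reverse direction \eqref{Su_yonina_assume_new}$\Rightarrow$\eqref{yonina_assume}, I would first note that \eqref{Su_yonina_assume_new} forces $\mathcal{R}(\Umat^T\Jmat\Umat)=\mathbb{R}^{M-K}$. Take any $\vvec\in\mathcal{R}(\Umat\Umat^T)=\mathcal{R}(\Umat)$ and write $\vvec=\Umat\wvec$. By surjectivity of $\Umat^T\Jmat\Umat$ there is $\pvec$ with $\wvec=\Umat^T\Jmat\Umat\pvec$, and by surjectivity of $\Umat^T$ there is $\qvec$ with $\pvec=\Umat^T\qvec$. Substituting back gives $\vvec=\Umat\Umat^T\Jmat\Umat\Umat^T\qvec\in\mathcal{R}(\Umat\Umat^T\Jmat\Umat\Umat^T)$, which is precisely the inclusion \eqref{yonina_assume}.

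The routine part is the index-free bookkeeping that collapses $\Umat^T\Umat$ to the identity. The step needing the most care is the reverse direction, where surjectivity must be used twice—first of $\Umat^T\Jmat\Umat$, then of $\Umat^T$—in the correct order to reconstruct an element of $\mathcal{R}(\Umat\Umat^T\Jmat\Umat\Umat^T)$ from an arbitrary element of $\mathcal{R}(\Umat)$. I expect this alignment between the dimension-$(M-K)$ and dimension-$M$ representations, mediated by the projection $\Umat\Umat^T$, to be the main (if modest) obstacle.
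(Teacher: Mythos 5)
Your proof is correct and follows essentially the same route as the paper's: both directions are element-chasing arguments that pass between $\mathcal{R}\left(\Umat(\thetavec_0)\Umat^T(\thetavec_0)\right)$ and $\mathcal{R}\left(\Umat^T(\thetavec_0)\right)$ by left-multiplying by $\Umat^T(\thetavec_0)$ or $\Umat(\thetavec_0)$ and collapsing $\Umat^T(\thetavec_0)\Umat(\thetavec_0)=\Imat_{M-K}$. Your explicit appeal to the surjectivity of $\Umat^T(\thetavec_0)$ and of $\Umat^T(\thetavec_0)\Jmat(\thetavec_0)\Umat(\thetavec_0)$ (equivalently, the nonsingularity of the latter) is only a cosmetic repackaging of the paper's construction of explicit preimages.
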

\begin{proof}
Assume that \eqref{yonina_assume} is satisfied and let $\avec_1\in\mathcal{R}\left(\Umat^T(\thetavec_0)\right)$. Thus, there exists a vector $\bvec_1$ s.t. $\avec_1=\Umat^T(\thetavec_0)\bvec_1$. Let $\tilde{\avec}_1\define\Umat(\thetavec_0)\avec_1=\Umat(\thetavec_0)\Umat^T(\thetavec_0)\bvec_1$. Thus, $\tilde{\avec}_1\in\mathcal{R}\left(\Umat(\thetavec_0)\Umat^T(\thetavec_0)\right)$ and from \eqref{yonina_assume}, $\tilde{\avec}_1\in\mathcal{R}\left(\Umat(\thetavec_0)\Umat^T(\thetavec_0)\Jmat(\thetavec_0)\Umat(\thetavec_0)\Umat^T(\thetavec_0)\right)$. Therefore, there exists $\tilde{\bvec}_1$ s.t.
\be\label{Su_Eq4_lemma1_s1}
\tilde{\avec}_1=\Umat(\thetavec_0)\Umat^T(\thetavec_0)\Jmat(\thetavec_0)\Umat(\thetavec_0)\Umat^T(\thetavec_0)\tilde{\bvec}_1
\ee
and consequently,
\be\label{Su_Eq5_lemma1_s1}
\avec_1=\Umat^T(\thetavec_0)\Jmat(\thetavec_0)\Umat(\thetavec_0)\left(\Umat^T(\thetavec_0)\tilde{\bvec}_1\right).
\ee
From \eqref{Su_Eq5_lemma1_s1} it can be seen that $\avec_1\in\mathcal{R}\left(\Umat^T(\thetavec_0)\Jmat(\thetavec_0)\Umat(\thetavec_0)\right)$ and \eqref{Su_yonina_assume_new} is obtained.\\
\indent
Next, we assume that \eqref{Su_yonina_assume_new} is satisfied and let $\avec_2\in\mathcal{R}\left(\Umat(\thetavec_0)\Umat^T(\thetavec_0)\right)$. Thus, there exists a vector $\bvec_2$ s.t. 
\be\label{Su_Eq1_lemma1_s2}
\avec_2=\Umat(\thetavec_0)\Umat^T(\thetavec_0)\bvec_2.
\ee
Left multiplying \eqref{Su_Eq1_lemma1_s2} by $\Umat^T(\thetavec_0)$ and substituting \eqref{two}, one obtains
\be\label{Su_Eq2_lemma1_s2}
\Umat^T(\thetavec_0)\avec_2=\Umat^T(\thetavec_0)\bvec_2.
\ee
Let $\tilde{\bvec}_2\define\Umat^T(\thetavec_0)\bvec_2$, $\tilde{\bvec}_2\in\mathcal{R}\left(\Umat^T(\thetavec_0)\right)$. Then, from \eqref{Su_Eq1_lemma1_s2} 
\be\label{Su_Eq4_new_lemma1_s2}
\Umat(\thetavec_0)\tilde{\bvec}_2=\avec_2.
\ee
From \eqref{Su_yonina_assume_new}, $\tilde{\bvec}_2\in\mathcal{R}\left(\Umat^T(\thetavec_0)\Jmat(\thetavec_0)\Umat(\thetavec_0)\right)$. Thus, there exists a vector $\tilde{\avec}_2$ s.t.
\be\label{Su_Eq4_lemma1_s2}
\tilde{\bvec}_2=\Umat^T(\thetavec_0)\Jmat(\thetavec_0)\Umat(\thetavec_0)\tilde{\avec}_2.
\ee
Left multiplying \eqref{Su_Eq4_lemma1_s2} by $\Umat(\thetavec_0)$, substituting \eqref{Su_Eq4_new_lemma1_s2}, and using \eqref{two}, one obtains
\be\label{Su_Eq6_lemma1_s2}
\avec_2=\Umat(\thetavec_0)\Umat^T(\thetavec_0)\Jmat(\thetavec_0)\Umat(\thetavec_0)\Umat^T(\thetavec_0)\left(\Umat(\thetavec_0)\tilde{\avec}_2\right).
\ee
From \eqref{Su_Eq6_lemma1_s2} it can be seen that $\avec_2\in\mathcal{R}\left(\Umat(\thetavec_0)\Umat^T(\thetavec_0)\Jmat(\thetavec_0)\Umat(\thetavec_0)\Umat^T(\thetavec_0)\right)$ and \eqref{yonina_assume} is obtained.
\end{proof}
\begin{lemma}\label{Su_lamma_app2}
Assume that \eqref{yonina_assume} holds. Then,
\be\label{Su_U_equal_appendix}
\begin{split}
&\Umat(\thetavec_0)\\
&=\Umat(\thetavec_0)\left(\Umat^T(\thetavec_0)\Jmat(\thetavec_0)\Umat(\thetavec_0)\right)^{\dagger}\left(\Umat^T(\thetavec_0)\Jmat(\thetavec_0)\Umat(\thetavec_0)\right).
\end{split}
\ee
\end{lemma}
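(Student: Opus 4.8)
The plan is to reduce the matrix identity \eqref{Su_U_equal_appendix} to the range inclusion already established in Lemma \ref{Su_lamma_app1}. First I would abbreviate $\Amat\define\Umat^T(\thetavec_0)\Jmat(\thetavec_0)\Umat(\thetavec_0)$ and record the key structural fact that $\Amat$ is symmetric, since $\Jmat(\thetavec_0)$ is a Fisher information matrix, $\Jmat(\thetavec_0)={\rm{E}}[\upsilonvec\upsilonvec^T]$, and hence symmetric. Symmetry is what makes the pseudo-inverse well behaved here: for a symmetric matrix one has $\Amat^{\dagger}\Amat=\Amat\Amat^{\dagger}=\Pmat_{\Amat}$, the orthogonal projection onto $\mathcal{R}(\Amat)$, and $\Amat^{\dagger}$ is itself symmetric. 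With this notation, the claim \eqref{Su_U_equal_appendix} reads exactly $\Umat(\thetavec_0)\Amat^{\dagger}\Amat=\Umat(\thetavec_0)$.

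Next I would transpose this desired identity. Using the symmetry of both $\Amat$ and $\Amat^{\dagger}$, the transpose of $\Umat(\thetavec_0)\Amat^{\dagger}\Amat$ is $\Amat\Amat^{\dagger}\Umat^T(\thetavec_0)=\Pmat_{\Amat}\Umat^T(\thetavec_0)$. Hence \eqref{Su_U_equal_appendix} is equivalent to $\Pmat_{\Amat}\Umat^T(\thetavec_0)=\Umat^T(\thetavec_0)$, i.e. to the statement that every column of $\Umat^T(\thetavec_0)$ is fixed by the projection onto $\mathcal{R}(\Amat)$. Since $\Pmat_{\Amat}$ fixes precisely the vectors of $\mathcal{R}(\Amat)$, this holds if and only if $\mathcal{R}(\Umat^T(\thetavec_0))\subseteq\mathcal{R}(\Amat)=\mathcal{R}(\Umat^T(\thetavec_0)\Jmat(\thetavec_0)\Umat(\thetavec_0))$, which is precisely \eqref{Su_yonina_assume_new}. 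By Lemma \ref{Su_lamma_app1}, \eqref{Su_yonina_assume_new} is equivalent to the standing assumption \eqref{yonina_assume}, so the proof closes.

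As an even shorter alternative route, I could observe that because $\Umat(\thetavec_0)$ has orthonormal columns by \eqref{two} (so $\Umat^T(\thetavec_0)$ has full row rank $M-K$), one has $\mathcal{R}(\Umat^T(\thetavec_0))=\mathbb{R}^{M-K}$. Then \eqref{Su_yonina_assume_new} forces $\mathcal{R}(\Amat)=\mathbb{R}^{M-K}$, so the $(M-K)\times(M-K)$ matrix $\Amat$ is nonsingular, $\Amat^{\dagger}=\Amat^{-1}$, and therefore $\Amat^{\dagger}\Amat=\Imat_{M-K}$, which yields \eqref{Su_U_equal_appendix} at once.

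I expect the only delicate point to be the bookkeeping that justifies passing between the matrix identity and the column-space inclusion: one must use symmetry of $\Amat$ (to identify $\Amat^{\dagger}\Amat$ with the orthogonal projector $\Pmat_{\Amat}$ and to transpose cleanly) together with the self-adjointness and idempotence of $\Pmat_{\Amat}$. Everything else is a direct appeal to Lemma \ref{Su_lamma_app1} and to the full-rank structure of $\Umat(\thetavec_0)$ from \eqref{two}; no further regularity conditions beyond the finiteness of $\Jmat(\thetavec_0)$ are needed.
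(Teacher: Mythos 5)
Your main argument is correct and is essentially the paper's proof run in reverse order: the paper takes each column of $\Umat^T(\thetavec_0)$, uses Lemma \ref{Su_lamma_app1} to place it in $\mathcal{R}\left(\Umat^T(\thetavec_0)\Jmat(\thetavec_0)\Umat(\thetavec_0)\right)$, applies the orthogonal projector $\Pmat_{\Umat^T\Jmat\Umat}(\thetavec_0)$ to fix it, and transposes at the end, whereas you transpose first and then apply the same projector identity $\Amat\Amat^{\dagger}=\Pmat_{\Amat}$ to the whole matrix; the symmetry bookkeeping you flag is exactly what the paper relies on implicitly. Your alternative route is a genuinely different and in fact sharper observation: since $\Umat^T(\thetavec_0)$ has full row rank by \eqref{two}, the inclusion \eqref{Su_yonina_assume_new} forces $\mathcal{R}\left(\Umat^T(\thetavec_0)\Jmat(\thetavec_0)\Umat(\thetavec_0)\right)=\mathbb{R}^{M-K}$, so the regularity assumption \eqref{yonina_assume} is equivalent to nonsingularity of $\Umat^T(\thetavec_0)\Jmat(\thetavec_0)\Umat(\thetavec_0)$ and the pseudo-inverse in \eqref{Su_U_equal_appendix} is a true inverse. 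The paper never makes this explicit, and it renders the projector argument unnecessary; the only cost is that it leans on the orthonormality convention \eqref{two}, whereas the projector argument would survive a non-orthonormal choice of null-space basis.
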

\begin{proof}
In Lemma \ref{Su_lamma_app1}, we showed that \eqref{yonina_assume} is satisfied {\em iff} \eqref{Su_yonina_assume_new} is satisfied. Let $\deltavec_m(\thetavec_0)$ denote the $m$th column of $\Umat^{T}(\thetavec_0)$, $\forall m=1,\ldots,M$. Then, $\deltavec_m(\thetavec_0)\in\mathcal{R}\left(\Umat^T(\thetavec_0)\right)$ and under the assumption in \eqref{Su_yonina_assume_new},
$\deltavec_m(\thetavec_0)\in\mathcal{R}\left(\Umat^T(\thetavec_0)\Jmat(\thetavec_0)\Umat(\thetavec_0)\right)$. Thus,
\be\label{Su_m_upsilon_equality}
\begin{split}
&\deltavec_m(\thetavec_0)\\
&=\Pmat_{\Umat^T\Jmat\Umat}(\thetavec_0)\deltavec_m(\thetavec_0)\\
&=\left(\Umat^T(\thetavec_0)\Jmat(\thetavec_0)\Umat(\thetavec_0)\right)\left(\Umat^T(\thetavec_0)\Jmat(\thetavec_0)\Umat(\thetavec_0)\right)^{\dagger}\deltavec_m(\thetavec_0),
\end{split}
\ee
where the last equality stems from the definition of orthogonal projection matrix and the r.h.s. of \eqref{Su_m_upsilon_equality} is the $m$th column of the matrix $\left(\Umat^T(\thetavec_0)\Jmat(\thetavec_0)\Umat(\thetavec_0)\right)\left(\Umat^T(\thetavec_0)\Jmat(\thetavec_0)\Umat(\thetavec_0)\right)^{\dagger}\Umat^T(\thetavec_0)$, $\forall m=1,\ldots,M$. Therefore, we obtain the equality
\be\label{Su_U_upsilon_equality}
\begin{split}
&\Umat^T(\thetavec_0)\\
&=\left(\Umat^T(\thetavec_0)\Jmat(\thetavec_0)\Umat(\thetavec_0)\right)\left(\Umat^T(\thetavec_0)\Jmat(\thetavec_0)\Umat(\thetavec_0)\right)^{\dagger}\Umat^T(\thetavec_0).
\end{split}
\ee
Taking the transpose of \eqref{Su_U_upsilon_equality}, we obtain \eqref{Su_U_equal_appendix}.
\end{proof}
The bias gradient of $\hat{\thetavec}_{\text{CCRB}}$ at $\thetavec_0$ multiplied by $\Umat(\thetavec_0)$ is given by
\be\label{Su_bias_gradient_CCRB}
\begin{split}
&\nabla_{\thetavecsmall}\left.\bvec_{\hat{\thetavecsmall}_{\text{CCRB}}}(\thetavec)\right|_{\thetavecsmall_0}\Umat(\thetavec_0)\\
&=-\Umat(\thetavec_0)+\Umat(\thetavec_0)\left(\Umat^T(\thetavec_0)\Jmat(\thetavec_0)\Umat(\thetavec_0)\right)^{\dagger}\\
&~~~\times\left(\Umat^T(\thetavec_0)\Jmat(\thetavec_0)\Umat(\thetavec_0)\right).
\end{split}
\ee
From Lemma \ref{Su_lamma_app2} under the assumption in \eqref{yonina_assume}, $\hat{\thetavec}_{\text{CCRB}}$ satisfies \eqref{local_wise_chi_unb}. Thus, $\hat{\thetavec}_{\text{CCRB}}$ from \eqref{CCRB_efficient} is a locally $\mathcal{X}$-unbiased estimator in the vicinity of $\thetavec_0$. It is mentioned in Subsection \ref{subsec:Local C-unbiasedness} that a locally $\mathcal{X}$-unbiased estimator is also a locally C-unbiased estimator, in the vicinity of $\thetavec_0$, for any positive semidefinite matrix $\Wmat$. Thus, $\hat{\thetavec}_{\text{CCRB}}$ is a locally C-unbiased estimator in the vicinity of $\thetavec_0\in\Theta_\fvec$ and consequently, from Theorem \ref{T3}, the LU-CCRB is a lower bound on the WMSE of $\hat{\thetavec}_{\text{CCRB}}$, {\it i.e.}
\be \label{Su_total_CR_bound_CS_pre_order_proof}
{\rm{E}}\left[(\hat{\thetavec}_{\text{CCRB}}-\thetavec_0)^T\Wmat(\hat{\thetavec}_{\text{CCRB}}-\thetavec_0)\right]\geq B_{\text{LU-CCRB}}(\thetavec_0,\Wmat).
\ee
By computing the WMSE of the estimator from \eqref{CCRB_efficient}, one obtains
\be \label{Su_total_CR_bound_CS_pre_order}
\begin{split}
&{\rm{E}}\left[(\hat{\thetavec}_{\text{CCRB}}-\thetavec_0)^T\Wmat(\hat{\thetavec}_{\text{CCRB}}-\thetavec_0)\right]\\
&={\rm{E}}\left[\upsilonvec^T(\xvec,\thetavec_0)\Umat(\thetavec_0)\left(\Umat^T(\thetavec_0)\Jmat(\thetavec_0)\Umat(\thetavec_0)\right)^{\dagger}\Umat^T(\thetavec_0)\right.\\
&~~~\left.\times\Wmat\Umat(\thetavec_0)\left(\Umat^T(\thetavec_0)\Jmat(\thetavec_0)\Umat(\thetavec_0)\right)^{\dagger}\Umat^T(\thetavec_0)\upsilonvec(\xvec,\thetavec_0)\right].
\end{split}
\ee
By using the linearity of the trace and expectation operators and the trace operator's properties, we can rewrite \eqref{Su_total_CR_bound_CS_pre_order} as
\be \label{Su_W_CCRB_next}
\begin{split}
&{\rm{E}}\left[(\hat{\thetavec}_{\text{CCRB}}-\thetavec_0)^T\Wmat(\hat{\thetavec}_{\text{CCRB}}-\thetavec_0)\right]\\
&={\rm{Tr}}\left(\Umat^T(\thetavec_0){\rm{E}}\left[\upsilonvec(\xvec,\thetavec_0)\upsilonvec^T(\xvec,\thetavec_0)\right]\Umat(\thetavec_0)\right.\\
&~~~\left.\times\left(\Umat^T(\thetavec_0)\Jmat(\thetavec_0)\Umat(\thetavec_0)\right)^{\dagger}\left(\Umat^T(\thetavec_0)\Wmat\Umat(\thetavec_0)\right)\right.\\
&~~~\left.\times\left(\Umat^T(\thetavec_0)\Jmat(\thetavec_0)\Umat(\thetavec_0)\right)^{\dagger}\right)\\
&={\rm{Tr}}\left(\left(\Umat^T(\thetavec_0)\Jmat(\thetavec_0)\Umat(\thetavec_0)\right)^{\dagger}\left(\Umat^T(\thetavec_0)\Wmat\Umat(\thetavec_0)\right)\right)\\
&=B_{\text{CCRB}}(\thetavec_0,\Wmat),
\end{split}
\ee
where the second equality is obtained by substituting \eqref{FIM} and using trace properties and \eqref{pinv_A_dagger}. The last equality stems from \eqref{W_CCRB_W}. By substituting \eqref{Su_W_CCRB_next} in the l.h.s. of \eqref{Su_total_CR_bound_CS_pre_order_proof}, we obtain \eqref{CCRB_is_higher_W}.

\section{Proof of Proposition \ref{PROP_C_unb_Sphere}}\label{App_PROP_C_unb_Sphere}
Under the model in \eqref{model1} and \eqref{H_equation}, $\Hmat^T\xvec\sim N(\beta\thetavec,\beta\sigma^2\Imat_M)$. Consequently, it can be verified from \cite{PUKKILA} that the conditional pdf of the random vector $\frac{\hat{\thetavec}_{\text{CML}}}{\rho}$ given the random variable $\nu\define{\frac{\|\Hmat^T\xvec\|}{\beta^{\frac{1}{2}}\sigma}}$ is von Mises-Fisher (see e.g. \cite{PUKKILA,DIRECTIONAL}) with mean direction $\frac{\thetavecsmall}{\|\thetavecsmall\|}$ and concentration parameter $\frac{\nu \beta^{\frac{1}{2}}\|\thetavecsmall\|}{\sigma}$. Therefore, by using the properties of von Mises-Fisher distribution, as appear in \cite[pp. 168-169]{DIRECTIONAL}, it can be verified that the conditional expectation of $\frac{\hat{\thetavec}_{\text{CML}}}{\rho}$ given $\nu$ is
\be\label{CML_Cond_expect1}
\frac{1}{\rho}{\rm{E}}\left[\hat{\thetavec}_{\text{CML}}|\nu;\thetavec\right]=\frac{I_{\frac{M}{2}}\left(\frac{\nu \beta^{\frac{1}{2}}\|\thetavecsmall\|}{\sigma}\right)}{I_{\frac{M}{2}-1}\left(\frac{\nu \beta^{\frac{1}{2}}\|\thetavecsmall\|}{\sigma}\right)}\frac{\thetavec}{\|\thetavec\|},~\forall\thetavec\in\mathbb{R}^M,
\ee
where $I_m$ is the modified Bessel function of order $m$. By left multiplying \eqref{CML_Cond_expect1} by $\rho\Umat^T(\thetavec)$ and substituting \eqref{U_sphere_equality}, one obtains 
\be\label{CML_Cond_expect1_next}
\Umat^T(\thetavec){\rm{E}}\left[\hat{\thetavec}_{\text{CML}}|\nu;\thetavec\right]=\zerovec,~\forall\thetavec\in\mathbb{R}^M.
\ee
Consequently, for $\Wmat=\Imat_M$
\be \label{unbiased_cond_nec_exam1}
\begin{split}
\Umat^T(\thetavec)\Wmat\bvec_{\hat{\thetavecsmall}_{\text{CML}}}(\thetavec)&=\Umat^T(\thetavec){\rm{E}}[\hat{\thetavec}_{\text{CML}};\thetavec]\\
&={\rm{E}}\left[\Umat^T(\thetavec){\rm{E}}[\hat{\thetavec}_{\text{CML}}|\nu;\thetavec];\thetavec\right]=\zerovec,
\end{split}
\ee
$\forall \thetavec\in \Theta_\fvec$, where the first equality is obtained by substituting \eqref{bias_definition} and \eqref{U_sphere_equality}. The second equality is obtained by using the law of total expectation \cite{PAPOULIS} and the last equality is obtained by substituting \eqref{CML_Cond_expect1_next}. Thus, for $\Wmat=\Imat_M$, the CML estimator from \eqref{CML_equation1} satisfies \eqref{unbiased_cond_nec}, {\it i.e.} it is uniformly C-unbiased. 

\section{Derivation of \eqref{LU_CCRB_examp2}}\label{App_LU_CCRB_examp2}
By using \eqref{U_freq_define} and \eqref{W_I2}, we obtain
\be\label{evec_define_examp2}
{\rm{vec}}\left(\Umat^T(\thetavec)\Wmat\Umat(\thetavec)\right)=[1,0,0,0]^T,~\forall\thetavec\in\Theta_\fvec.
\ee
By taking the gradient of each column of $\Umat(\thetavec)$ from \eqref{U_freq_define} and using \eqref{amplitude_constraint}, we obtain
\be\label{V12_examp2}
\Vmat_1(\thetavec)=\frac{1}{c^3}\begin{bmatrix}
-\theta_1\theta_2 & \theta_1^2 & 0\\
-\theta_2^2 & \theta_1\theta_2 & 0\\
0 & 0 & 0
\end{bmatrix},~\Vmat_2(\thetavec)=\zerovec,~\forall\thetavec\in\Theta_\fvec.
\ee
By substituting \eqref{U_freq_define}, \eqref{W_I2}, and \eqref{V12_examp2} in \eqref{S_define} and using \eqref{amplitude_constraint}, we obtain
\be\label{S_define12_examp2}
\Smat^{(1)}_{\Wmat}(\thetavec)=\frac{1}{c^2}\begin{bmatrix}
-\theta_1 & 0\\
-\theta_2 & 0\\
0 & 0
\end{bmatrix},~\Smat^{(2)}_{\Wmat}(\thetavec)=\zerovec,~\forall\thetavec\in\Theta_\fvec.
\ee
Then, by substituting \eqref{W_I2} and \eqref{S_define12_examp2} in \eqref{Gamma_block_define} and using \eqref{amplitude_constraint} and the block structure of $\Cmat_{\Umat,\Wmat}(\thetavec)$, one obtains
\be\label{Gamma_block_define_examp2}
\Cmat_{\Umat,\Wmat}(\thetavec)=\begin{bmatrix}
\frac{1}{c^2} & 0 & 0 & 0\\
0 & 0 & 0 & 0\\
0 & 0 & 0 & 0\\
0 & 0 & 0 & 0\\
\end{bmatrix},~\forall\thetavec\in\Theta_\fvec.
\ee
By substituting \eqref{Gamma_block_define_examp2}, \eqref{U_freq_define}, \eqref{W_I2}, and \eqref{FIM_examp2} in \eqref{Gammamat_define} and using \eqref{amplitude_constraint}, we obtain
\be \label{Gammamat_define_examp2}
\Gammamat_{\Umat,\Wmat}(\thetavec)=\begin{bmatrix}
\frac{1}{c^2}+\frac{2L}{\sigma^2} & -\frac{cL(2l_1+L-1)}{\sigma^2} & 0 & 0\\
-\frac{cL(2l_1+L-1)}{\sigma^2} & \frac{2c^2 \sum_{l=l_1}^{l_1+L-1}l^2}{\sigma^2} & 0 & 0\\
0 & 0 & 0 & 0\\
0 & 0 & 0 & 0\\
\end{bmatrix},
\ee
$\forall\thetavec\in\Theta_\fvec$. By substituting \eqref{evec_define_examp2} and \eqref{Gammamat_define_examp2} in \eqref{total_CR_bound_CS}, one obtains
\be \label{LU_CCRB_examp2_supp1}
\begin{split}
&B_{\text{LU-CCRB}}(\thetavec,\Wmat)=\left(\frac{1}{c^2}+\frac{2L}{\sigma^2}-\frac{1}{\sigma^2}\frac{L^2(2l_1+L-1)^2}{2\sum_{l=l_1}^{l_1+L-1}l^2}\right)^{-1}\\
&=\left(\frac{1}{c^2}+\frac{2L}{\sigma^2}-\frac{1}{\sigma^2}\frac{L^2(2l_1+L-1)^2}{2Ll_1(l_1+L-1)+\frac{L(2L-1)(L-1)}{3}}\right)^{-1},
\end{split}
\ee
$\forall\thetavec\in\Theta_\fvec$, where the second equality is obtained by substituting $\sum_{l=l_1}^{l_1+L-1}l^2=Ll_1(l_1+L-1)+\frac{L(2L-1)(L-1)}{6}$. By applying simple algebraic manipulations on \eqref{LU_CCRB_examp2_supp1}, we obtain
\be \label{LU_CCRB_examp2_supp_major}
\begin{split}
&B_{\text{LU-CCRB}}(\thetavec,\Wmat)\\
&=\left(\frac{1}{c^2}+\frac{1}{\sigma^2}\frac{L(L-1)(L+1)}{6l_1^2+6(L-1)l_1+(2L-1)(L-1)}\right)^{-1}.
\end{split}
\ee
Finally, \eqref{LU_CCRB_examp2} is obtained by substituting \eqref{CCRB_examp2} into \eqref{LU_CCRB_examp2_supp_major}.

\section{Alternative derivation of LU-CCRB}\label{App_Alternative derivation of LU-CCRB}
In this section, we derive the LU-CCRB from \eqref{total_CR_bound_CS} under Conditions \ref{cond1CRB}-\ref{cond3CRB}, by solving a constrained minimization problem. For simplicity of derivation, we assume that $\Wmat=\Imat_M$ and that $\Gammamat_{\Umat,\Imat_M}(\thetavec_0)$ is a nonsingular matrix. The alternative derivation of LU-CCRB for a general weighting matrix, $\Wmat$, can be obtained in a similar manner. We minimize the MSE matrix trace at $\thetavec_0\in\Theta_\fvec$ w.r.t. the estimator $\hat{\thetavec}$ under the constraint that $\hat{\thetavec}$ is locally C-unbiased in the vicinity of $\thetavec_0$ for $\Wmat=\Imat_M$, {\it i.e.} satisfies \eqref{point_wise_unb_I} and \eqref{sec_localU_final_I}. Thus, the following constrained minimization problem is defined
\be \label{Su_CR_min3}
\begin{aligned}
&\underset{\hat{\thetavecsmall}}{\min}\left\{{\rm{E}}\left[(\hat{\thetavec}-\thetavec_0)^T(\hat{\thetavec}-\thetavec_0)\right]\right\}~~\text{s.t.}\\
&1)~\Umat^T(\thetavec_0)\bvec_{\hat{\thetavecsmall}}(\thetavec_0)=\zerovec,\\
&2)~\bvec_{\hat{\thetavecsmall}}^T(\thetavec_0)\Vmat_m(\thetavec_0)\Umat(\thetavec_0)+\uvec_m^T(\thetavec_0)\Dmat_{\hat{\thetavecsmall}}(\thetavec_0)\Umat(\thetavec_0)=\zerovec,\\
&~~~\forall m=1,\ldots,M-K.
\end{aligned}
\ee
Then, by substituting \eqref{bias_definition} and \eqref{log_expect} in \eqref{Su_CR_min3} and applying simple algebraic manipulations, the Lagrangian of the optimization problem from \eqref{Su_CR_min3} can be written as
\be\label{Su_Lagrange_CR}
\begin{split}
&L(\hat\thetavec,\{\lambdavec_k\}_{k=0}^{M-K})\\
&={\rm{E}}\bigg[(\hat{\thetavec}-\thetavec_0)^T(\hat{\thetavec}-\thetavec_0)-2\lambdavec_0^T\Umat^T(\thetavec_0)(\hat{\thetavec}-\thetavec_0)\\
&~~~-2\sum_{k=1}^{M-K}\lambdavec_k^T\bigg(\Umat^T(\thetavec_0)\Vmat_k^T(\thetavec_0)(\hat{\thetavec}-\thetavec_0)-\Umat^T(\thetavec_0)\uvec_k(\thetavec_0)\\
&~~~+\Umat^T(\thetavec_0)\upsilonvec(\xvec,\thetavec_0)\uvec_k^T(\thetavec_0)(\hat{\thetavec}-\thetavec_0)\bigg)\bigg],
\end{split}
\ee
where $\lambdavec_k\in\mathbb{R}^{M-K},~k=0,1,\ldots,M-K$, are the Lagrange multipliers. By completing the square of \eqref{Su_Lagrange_CR}, one obtains
\be \label{Su_Lagrange_CR_complete}
\begin{split}
&L(\hat\thetavec,\{\lambdavec_k\}_{k=0}^{M-K})={\rm{E}}\bigg[\left(\hat{\thetavec}-\thetavec_0-\dvec_1(\xvec,\thetavec_0)\right)^T\\
&\times\left(\hat{\thetavec}-\thetavec_0-\dvec_1(\xvec,\thetavec_0)\right)-\dvec_1^T(\xvec,\thetavec_0)\dvec_1(\xvec,\thetavec_0)+d_2(\thetavec_0)\bigg],
\end{split}
\ee
where\\ $\dvec_1(\xvec,\thetavec_0)\define\Umat(\thetavec_0)\lambdavec_0+\sum_{k=1}^{M-K}\Vmat_k(\thetavec_0)\Umat(\thetavec_0)\lambdavec_k+\sum_{k=1}^{M-K}\uvec_k(\thetavec_0)\upsilonvec^T(\xvec,\thetavec_0)\Umat(\thetavec_0)\lambdavec_k$
and\\
$d_2(\thetavec_0)\define 2\sum_{k=1}^{M-K}\lambdavec_k^T\Umat^T(\thetavec_0)\uvec_k(\thetavec_0)$. The minimization of \eqref{Su_Lagrange_CR_complete} w.r.t. $\hat{\thetavec}$ yields
\be \label{Su_Lagrange_CR_after_derivW}
\begin{split}
\hat{\thetavec}_{\text{opt}}-\thetavec_0&=\Umat(\thetavec_0)\lambdavec_0+\sum_{k=1}^{M-K}\Vmat_k(\thetavec_0)\Umat(\thetavec_0)\lambdavec_k\\
&~~~+\sum_{k=1}^{M-K}\uvec_k(\thetavec_0)\upsilonvec^T(\xvec,\thetavec_0)\Umat(\thetavec_0)\lambdavec_k.
\end{split}
\ee
Left multiplying \eqref{Su_Lagrange_CR_after_derivW} by $\Umat^{T}(\thetavec_0)$, taking expectation at $\thetavec_0$, substituting \eqref{two}, \eqref{point_wise_unb_I}, and \eqref{smoothness}, and reordering, one obtains
\begin{equation} \label{Su_finding_eta_0}
\lambdavec_0=-\sum_{k=1}^{M-K}\Umat^T(\thetavec_0)\Vmat_k(\thetavec_0)\Umat(\thetavec_0)\lambdavec_k,
\end{equation}
where \eqref{point_wise_unb_I} is the first constraint in \eqref{Su_CR_min3}. By substituting \eqref{Su_finding_eta_0} in \eqref{Su_Lagrange_CR_after_derivW} and reordering, one obtains
\be \label{Su_Lagrange_CR_after_deriv_and_eta_0_3}
\begin{split}
\hat{\thetavec}_{\text{opt}}-\thetavec_0&=\sum_{k=1}^{M-K}\bigg(\Pmat_\Umat^\bot(\thetavec_0)\Vmat_k(\thetavec_0)\Umat(\thetavec_0)\\
&~~~+\uvec_k(\thetavec_0)\upsilonvec^T(\xvec,\thetavec_0)\Umat(\thetavec_0)\bigg)\lambdavec_k.
\end{split}
\ee
The auxiliary function from \eqref{Su_b_aux_W}, which is used in the proof of Therorem \ref{T3}, is chosen based on \eqref{Su_Lagrange_CR_after_deriv_and_eta_0_3} for the case $\Wmat=\Imat_M$.\\
\indent
In order to find $\lambdavec_1,\ldots,\lambdavec_{M-K}$, we can rewrite \eqref{Su_Lagrange_CR_after_deriv_and_eta_0_3} as
\be \label{Su_Lagrange_finding_lambdavec}
\begin{split}
\hat{\thetavec}_{\text{opt}}-\thetavec&=\thetavec_0-\thetavec+\sum_{k=1}^{M-K}\Pmat_{\Umat}^\bot(\thetavec_0)\Vmat_k(\thetavec_0)\Umat(\thetavec_0)\lambdavec_k\\
&~~~+\sum_{k=1}^{M-K}\uvec_k(\thetavec_0)\lambdavec_k^T\Umat^T(\thetavec_0)\upsilonvec(\xvec,\thetavec_0).
\end{split}
\ee
Left multiplying \eqref{Su_Lagrange_finding_lambdavec} by $\uvec_m^T(\thetavec)$ and applying expectation at $\thetavec$, we obtain
\be \label{Su_Lagrange_finding_lambdavec11}
\begin{split}
&\uvec_m^T(\thetavec)\bvec_{\hat{\thetavecsmall}_{\text{opt}}}(\thetavec)\\
&=\uvec_m^T(\thetavec)(\thetavec_0-\thetavec)+\sum_{k=1}^{M-K}\uvec_m^T(\thetavec)\Pmat_{\Umat}^\bot(\thetavec_0)\Vmat_k(\thetavec_0)\Umat(\thetavec_0)\lambdavec_k\\
&~~~+\sum_{k=1}^{M-K}\uvec_m^T(\thetavec)\uvec_k(\thetavec_0)\lambdavec_k^T\Umat^T(\thetavec_0){\rm{E}}\left[\upsilonvec(\xvec,\thetavec_0);\thetavec\right],
\end{split}
\ee
$\forall m=1,\ldots,M-K$. Taking the derivative of \eqref{Su_Lagrange_finding_lambdavec11} w.r.t. $\thetavec$ at $\thetavec=\thetavec_0$, right multiplying by $\Umat(\thetavec_0)$, using \eqref{sec_localU_final_I} and \eqref{smoothness}, and applying simple algebraic manipulations, one obtains
\be \label{Su_Lagrange_finding_lambdavec_second}
\begin{split}
&\Umat^T(\thetavec_0)\uvec_m(\thetavec_0)\\
&=\sum_{k=1}^{M-K}\Umat^T(\thetavec_0)\Vmat_m^T(\thetavec_0)\Pmat_{\Umat}^\bot(\thetavec_0)\Vmat_k(\thetavec_0)\Umat(\thetavec_0)\lambdavec_k\\
&~~~+\sum_{k=1}^{M-K}\uvec_m^T(\thetavec_0)\uvec_k(\thetavec_0)\Umat^T(\thetavec_0)\Jmat(\thetavec_0)\Umat(\thetavec_0)\lambdavec_k,
\end{split}
\ee
$\forall m=1,\ldots,M-K$. Let $\lambdavec\define[\lambdavec_{1}^T,\ldots,\lambdavec_{M-K}^T]^T$. By stacking the vector equalities in \eqref{Su_Lagrange_finding_lambdavec_second}, $\forall m=1,\ldots,M-K$, and using \eqref{two}, \eqref{Gamma_block_define} with $\Wmat=\Imat_M$, Kronecker product definition, and \eqref{Gammamat_define} with $\Wmat=\Imat_M$, we obtain
\be\label{Su_Lagrange_finding_lambdavec_third}
{\rm{vec}}\left(\Imat_{M-K}\right)=\Gammamat_{\Umat,\Imat_M}(\thetavec_0)\lambdavec.
\ee
Under the assumption that $\Gammamat_{\Umat,\Imat_M}(\thetavec_0)$ is a nonsingular matrix, we can write
\be\label{Su_Lagrange_finding_lambdavec_final}
\lambdavec=\Gammamat_{\Umat,\Imat_M}^{-1}(\thetavec_0){\rm{vec}}\left(\Imat_{M-K}\right)
\ee
and consequently
\be\label{Su_Lagrange_finding_lambdavec_final_k}
\lambdavec_k=\left[\Gammamat_{\Umat,\Imat_M}^{-1}(\thetavec_0){\rm{vec}}\left(\Imat_{M-K}\right)\right]_{((k-1)(M-K)+1):(k(M-K))},
\ee
$\forall k=1,\ldots,M-K$. Substituting \eqref{Su_Lagrange_finding_lambdavec_final_k} in \eqref{Su_Lagrange_CR_after_deriv_and_eta_0_3}, one obtains \eqref{equality_cond_Prop_W} for the case $\Wmat=\Imat_M$. Thus, the minimizer of \eqref{Su_CR_min3} is the estimator from \eqref{equality_cond_Prop_W}, for the case $\Wmat=\Imat_M$, whose MSE matrix trace is the minimum of \eqref{Su_CR_min3}, given by the LU-CCRB from \eqref{total_CR_bound_CS} for $\Wmat=\Imat_M$.

\section{Derivation of \eqref{Su_c_opt}}\label{App_Derivation of Su_c_opt}
First, we prove that
\be\label{Su_Gamma_U_positive}
\Gammamat_{\Umat,\Wmat}^{\dagger}(\thetavec_0)\succeq\zerovec
\ee
and that \eqref{Su_substitution_CS_newer} can be considered only for
\be\label{Su_c_restrict}
\cvec\in{\mathcal{R}}(\Gammamat_{\Umat,\Wmat}^{\dagger}(\thetavec_0)),~\cvec\neq\zerovec.
\ee
Let
\begin{equation}\label{Su_Smat_define}
\Smat_{\Wmat}(\thetavec)\define\left[\Smat^{(1)}_{\Wmat}(\thetavec),\ldots,\Smat^{(M-K)}_{\Wmat}(\thetavec)\right]
\end{equation}
and
\begin{equation}\label{Su_Tmat_define}
\Tmat_{\Wmat}(\xvec,\thetavec)\define\left[\Tmat^{(1)}_{\Wmat}(\xvec,\thetavec),\ldots,\Tmat^{(M-K)}_{\Wmat}(\xvec,\thetavec)\right].
\end{equation}
From the definitions in \eqref{Su_Smat_define}-\eqref{Su_Tmat_define} and using \eqref{Su_firstStepAppA}, it can be verified that
\be\label{Su_Cu_matEqual_W}
\Smat_{\Wmat}^T(\thetavec_0)\Wmat\Smat_{\Wmat}(\thetavec_0)=\Cmat_{\Umat,\Wmat}(\thetavec_0)
\ee
and
\be\label{Su_Ju_matEqual_W}
\begin{split}
&{\rm{E}}\left[\Tmat_{\Wmat}^T(\xvec,\thetavec_0)\Wmat\Tmat_{\Wmat}(\xvec,\thetavec_0)\right]\\
&=\left(\Umat^T(\thetavec_0)\Wmat\Umat(\thetavec_0)\right)\otimes\left(\Umat^T(\thetavec_0)\Jmat(\thetavec_0)\Umat(\thetavec_0)\right).
\end{split}
\ee
From \eqref{Su_Cu_matEqual_W}-\eqref{Su_Ju_matEqual_W}, it can be seen that $\Cmat_{\Umat,\Wmat}(\thetavec_0)$ and $\left(\Umat^T(\thetavec_0)\Wmat\Umat(\thetavec_0)\right)\otimes\left(\Umat^T(\thetavec_0)\Jmat(\thetavec_0)\Umat(\thetavec_0)\right)$ are positive semidefinite matrices. Therefore, $\Gammamat_{\Umat,\Wmat}(\thetavec_0)$, defined in \eqref{Gammamat_define}, is also a positive semidefinite matrix. In addition, from pseudo-inverse matrix properties \cite[p. 23]{MATRIX_COOKBOOK}, symmetric matrix properties \cite[p. 31]{MATRIX_COOKBOOK}, and positive semidefinite matrix properties \cite[p. 51]{MATRIX_COOKBOOK}, \eqref{Su_Gamma_U_positive} holds as well. The vector $\cvec$ can be expressed as
\be\label{c_partition_W}
\cvec=\cvec_{\mathcal{R}}+\cvec_{\mathcal{N}},\cvec_{\mathcal{R}}\in{\mathcal{R}}(\Gammamat_{\Umat,\Wmat}^{\dagger}(\thetavec_0)),\cvec_{\mathcal{N}}\in{\mathcal{N}}((\Gammamat_{\Umat,\Wmat}^{\dagger}(\thetavec_0))^T).
\ee
As can be seen from \eqref{Su_Gamma_U_positive}, $\Gammamat_{\Umat,\Wmat}^{\dagger}(\thetavec_0)$ is a symmetric matrix and therefore, $\cvec_{\mathcal{N}}\in{\mathcal{N}}(\Gammamat_{\Umat,\Wmat}^{\dagger}(\thetavec_0))$. In case $\cvec=\cvec_{\mathcal{N}}$, then from \eqref{Su_Gamma_U_positive}, pseudo-inverse matrix properties \cite[p. 23]{MATRIX_COOKBOOK}, and symmetric matrix properties \cite[p. 31]{MATRIX_COOKBOOK}, $\cvec=\cvec_{\mathcal{N}}\in{\mathcal{N}}(\Gammamat_{\Umat,\Wmat}(\thetavec_0)$. 
Therefore, in this case,
\be\label{zero_c_Gamma_W}
\cvec^T\Gammamat_{\Umat,\Wmat}(\thetavec_0)\cvec=\cvec_{\mathcal{N}}^T\Gammamat_{\Umat,\Wmat}(\thetavec_0)\cvec_{\mathcal{N}}=0
\ee
and consequently from \eqref{Su_substitution_CS_newer}
\be\label{Su_zero_c_psi}
\psivec_{\Wmat}^T(\thetavec_0)\cvec=\psivec_{\Wmat}^T(\thetavec_0)\cvec_{\mathcal{N}}=0.
\ee
In the general case, $\cvec=\cvec_{\mathcal{R}}+\cvec_{\mathcal{N}}$, by substituting \eqref{c_partition_W} in \eqref{Su_substitution_CS_newer} and using \eqref{zero_c_Gamma_W}-\eqref{Su_zero_c_psi}, we obtain
\be\label{Su_substitution_CS_newer_app}
\begin{split}
&{\rm{E}}\left[(\hat{\thetavec}-\thetavec_0)^T\Wmat(\hat{\thetavec}-\thetavec_0)\right]\left(\cvec_{\mathcal{R}}^T\Gammamat_{\Umat,\Wmat}(\thetavec_0)\cvec_{\mathcal{R}}\right)\\
&\geq\left(\psivec_{\Wmat}^T(\thetavec_0)\cvec_{\mathcal{R}}\right)^2.
\end{split}
\ee
Thus, it suffices to consider in \eqref{Su_substitution_CS_newer} only $\cvec$ from \eqref{Su_c_restrict}. By reordering \eqref{Su_substitution_CS_newer}, we obtain
\be\label{Su_substitution_CS_new}
{\rm{E}}\left[(\hat{\thetavec}-\thetavec_0)^T\Wmat(\hat{\thetavec}-\thetavec_0)\right]\geq\frac{\left(\psivec_{\Wmat}^T(\thetavec_0)\cvec\right)^2}{\cvec^T\Gammamat_{\Umat,\Wmat}(\thetavec_0)\cvec},
\ee
for any $\cvec\in{\mathcal{R}}(\Gammamat_{\Umat,\Wmat}^{\dagger}(\thetavec_0)),~\cvec\neq\zerovec$. By using an extension of Cauchy–-Schwarz inequality \cite[Eq. (2.37)]{PECARIC}, \eqref{evec_define}, and \eqref{Su_Gamma_U_positive}, it can be verified that the tightest bound in the form of the r.h.s. of \eqref{Su_substitution_CS_new} is obtained for $\cvec$ from \eqref{Su_c_opt} and is given by the LU-CCRB from \eqref{total_CR_bound_CS}.

\bibliographystyle{IEEEtran}
\bibliography{constraint}

\begin{thebibliography}{10}
\providecommand{\url}[1]{#1}
\csname url@samestyle\endcsname
\providecommand{\newblock}{\relax}
\providecommand{\bibinfo}[2]{#2}
\providecommand{\BIBentrySTDinterwordspacing}{\spaceskip=0pt\relax}
\providecommand{\BIBentryALTinterwordstretchfactor}{4}
\providecommand{\BIBentryALTinterwordspacing}{\spaceskip=\fontdimen2\font plus
\BIBentryALTinterwordstretchfactor\fontdimen3\font minus
  \fontdimen4\font\relax}
\providecommand{\BIBforeignlanguage}[2]{{%
\expandafter\ifx\csname l@#1\endcsname\relax
\typeout{** WARNING: IEEEtran.bst: No hyphenation pattern has been}%
\typeout{** loaded for the language `#1'. Using the pattern for}%
\typeout{** the default language instead.}%
\else
\language=\csname l@#1\endcsname
\fi
#2}}
\providecommand{\BIBdecl}{\relax}
\BIBdecl

\bibitem{Rao_paper}
C.~R. Rao, ``Information and accuracy attainable in the estimation of
  statistical parameters,'' \emph{Bull. Calcutta Math. Soc.}, vol.~37, pp.
  81--91, 1945.

\bibitem{crlb}
H.~Cram$\acute{\text{e}}$r, ``\BIBforeignlanguage{English}{{A contribution to
  the theory of statistical estimation.}}''
  \emph{\BIBforeignlanguage{English}{Skand. Aktuarie Tidskr.}}, vol.~29, pp.
  85--94, 1946.

\bibitem{KAY}
S.~M. Kay, \emph{Fundamentals of {S}tatistical {S}ignal {P}rocessing:
  {E}stimation {T}heory}.\hskip 1em plus 0.5em minus 0.4em\relax Prentice Hall,
  1993.

\bibitem{HERO_UNIFORM_CRB}
A.~O. Hero, J.~A. Fessler, and M.~Usman, ``Exploring estimator bias-variance
  tradeoffs using the uniform {CR} bound,'' \emph{IEEE Trans. Signal Process.},
  vol.~44, no.~8, pp. 2026--2041, Aug. 1996.

\bibitem{UNIFORM_ELDAR}
Y.~C. Eldar, ``Uniformly improving the {C}ram$\acute{\text{e}}$r-{R}ao bound
  and maximum-likelihood estimation,'' \emph{IEEE Trans. Signal Process.},
  vol.~54, no.~8, pp. 2943--2956, Aug. 2006.

\bibitem{YONINA}
------, ``{MSE} bounds with affine bias dominating the
  {C}ram$\acute{\text{e}}$r-{R}ao bound,'' \emph{IEEE Trans. Signal Process.},
  vol.~56, no.~8, pp. 3824--3836, Aug. 2008.

\bibitem{Hero_constraint}
J.~D. Gorman and A.~O. Hero, ``Lower bounds for parametric estimation with
  constraints,'' \emph{IEEE Trans. Inf. Theory}, vol.~36, no.~6, pp.
  1285--1301, Nov. 1990.

\bibitem{Stoica_Ng}
P.~Stoica and B.~C. Ng, ``On the {C}ram$\acute{\text{e}}$r-{R}ao bound under
  parametric constraints,'' \emph{IEEE Signal Process. Lett.}, vol.~5, no.~7,
  pp. 177--179, July 1998.

\bibitem{TODROS_WINNIK}
K.~Todros, R.~Winik, and J.~Tabrikian, ``On the limitations of {B}arankin type
  bounds for {MLE} threshold prediction,'' \emph{Signal Processing}, vol. 108,
  pp. 622--627, 2015.

\bibitem{PHASE_KAY}
K.~Peters and S.~Kay, ``Unbiased estimation of the phase of a sinusoid,'' in
  \emph{Proc. of the IEEE International Conference on Acoustics, Speech, and
  Signal Processing (ICASSP)}, vol.~2, May 2004, pp. 493--496.

\bibitem{Marzetta}
T.~L. Marzetta, ``A simple derivation of the constrained multiple parameter
  {C}ram$\acute{\text{e}}$r-{R}ao bound,'' \emph{IEEE Trans. Signal Process.},
  vol.~41, no.~6, pp. 2247--2249, June 1993.

\bibitem{CCRB_complex}
A.~K. Jagannatham and B.~D. Rao, ``Cram$\acute{\text{e}}$r-{R}ao lower bound
  for constrained complex parameters,'' \emph{IEEE Signal Process. Lett.},
  vol.~11, no.~11, pp. 875--878, Nov. 2004.

\bibitem{BenHaim}
Z.~Ben-Haim and Y.~C. Eldar, ``On the constrained
  {C}ram$\acute{\text{e}}$r-{R}ao bound with a singular {F}isher information
  matrix,'' \emph{IEEE Signal Process. Lett.}, vol.~16, no.~6, pp. 453--456,
  June 2009.

\bibitem{sparse_con}
------, ``The {C}ram$\acute{\text{e}}$r-{R}ao bound for estimating a sparse
  parameter vector,'' \emph{IEEE Trans. Signal Process.}, vol.~58, no.~6, pp.
  3384--3389, June 2010.

\bibitem{Moore_fitting}
T.~Moore, R.~Kozick, and B.~Sadler, ``The constrained
  {C}ram$\acute{\text{e}}$r-{R}ao bound from the perspective of fitting a
  model,'' \emph{IEEE Signal Process. Lett.}, vol.~14, no.~8, pp. 564--567,
  Aug. 2007.

\bibitem{normC}
S.~Zhiguang, Z.~Jianxiong, H.~Lei, and L.~Jicheng, ``A new derivation of
  constrained {C}ram$\acute{\text{e}}$r-{R}ao bound via norm minimization,''
  \emph{IEEE Trans. Signal Process.}, vol.~59, no.~4, pp. 1879--1882, Apr.
  2011.

\bibitem{CCRB_hybrid}
C.~Ren, J.~L. Kernec, J.~Galy, E.~Chaumette, P.~Larzabal, and A.~Renaux, ``A
  constrained hybrid {C}ram$\acute{\text{e}}$r-{R}ao bound for parameter
  estimation,'' in \emph{Proc. of the IEEE International Conference on
  Acoustics, Speech, and Signal Processing (ICASSP)}, Apr. 2015, pp.
  3472--3476.

\bibitem{CCRB_misspecified}
S.~Fortunati, F.~Gini, and M.~S. Greco, ``The constrained misspecified
  {C}ram$\acute{\text{e}}$r-{R}ao bound,'' \emph{IEEE Signal Process. Lett.},
  vol.~23, no.~5, pp. 718--721, May 2016.

\bibitem{HERO_EMISSION}
A.~O. Hero, R.~Piramuthu, J.~A. Fessler, and S.~R. Titus, ``Minimax emission
  computed tomography using high-resolution anatomical side information and
  {B}-spline models,'' \emph{IEEE Trans. Inf. Theory}, vol.~45, no.~3, pp.
  920--938, Apr. 1999.

\bibitem{SADLER_sim}
B.~M. Sadler, R.~J. Kozick, and T.~Moore, ``Bounds on bearing and symbol
  estimation with side information,'' \emph{IEEE Trans. Signal Process.},
  vol.~49, no.~4, pp. 822--834, Apr. 2001.

\bibitem{WIJNHOLDS_sim}
S.~J. Wijnholds and A.~J. van~der Veen, ``Effects of parametric constraints on
  the {CRLB} in gain and phase estimation problems,'' \emph{IEEE Signal
  Process. Lett.}, vol.~13, no.~10, pp. 620--623, Oct. 2006.

\bibitem{ROUTTENBERG_sim}
T.~Routtenberg and L.~Tong, ``Joint frequency and phasor estimation under the
  {KCL} constraint,'' \emph{IEEE Signal Process. Lett.}, vol.~20, no.~6, pp.
  575--578, June 2013.

\bibitem{HERO_HYPER}
N.~Dobigeon, J.~Y. Tourneret, C.~Richard, J.~C.~M. Bermudez, S.~McLaughlin, and
  A.~O. Hero, ``Nonlinear unmixing of hyperspectral images: Models and
  algorithms,'' \emph{IEEE Signal Process. Mag.}, vol.~31, no.~1, pp. 82--94,
  Jan. 2014.

\bibitem{MENNI_sim}
T.~Menni, J.~Galy, E.~Chaumette, and P.~Larzabal, ``Versatility of constrained
  {CRB} for system analysis and design,'' \emph{IEEE Trans. Aerosp. Electron.
  Syst.}, vol.~50, no.~3, pp. 1841--1863, July 2014.

\bibitem{HENDRIKS}
H.~Hendriks, ``A {C}ram{\'e}r-{R}ao type lower bound for estimators with values
  in a manifold,'' \emph{Journal of Multivariate Analysis}, vol.~38, no.~2, pp.
  245--261, 1991.

\bibitem{XAVIER_BARROSO}
J.~Xavier and V.~Barroso, ``Intrinsic variance lower bound ({IVLB}): an
  extension of the {C}ram$\acute{\text{e}}$r-{R}ao bound to {R}iemannian
  manifolds,'' in \emph{Proc. of the IEEE International Conference on
  Acoustics, Speech, and Signal Processing (ICASSP)}, vol.~5, Mar. 2005, pp.
  1033--1036.

\bibitem{SMITH}
S.~T. Smith, ``Covariance, subspace, and intrinsic
  {C}ram$\acute{\text{e}}$r-{R}ao bounds,'' \emph{IEEE Trans. Signal Process.},
  vol.~53, no.~5, pp. 1610--1630, May 2005.

\bibitem{BOUMAL}
N.~Boumal, ``On intrinsic {C}ram$\acute{\text{e}}$r-{R}ao bounds for
  {R}iemannian submanifolds and quotient manifolds,'' \emph{IEEE Trans. Signal
  Process.}, vol.~61, no.~7, pp. 1809--1821, Apr. 2013.

\bibitem{HERO_ICASSP}
J.~D. Gorman and A.~O. Hero, ``On the application of {C}ram{\'e}r-{R}ao type
  lower bounds for constrained estimation,'' in \emph{Proc. of the IEEE
  International Conference on Acoustics, Speech, and Signal Processing
  (ICASSP)}, vol.~2, Apr. 1991, pp. 1333--1336.

\bibitem{AITCHISON_SILVEY}
J.~Aitchison and S.~D. Silvey, ``Maximum-likelihood estimation of parameters
  subject to restraints,'' \emph{The Annals of Mathematical Statistics}, pp.
  813--828, 1958.

\bibitem{SILVEY_LAGRANGE}
S.~D. Silvey, ``The {L}agrangian multiplier test,'' \emph{The Annals of
  Mathematical Statistics}, vol.~30, no.~2, pp. 389--407, 1959.

\bibitem{CROWDER}
M.~Crowder, ``On constrained maximum likelihood estimation with non--i.i.d.
  observations,'' \emph{Annals of the Institute of Statistical Mathematics},
  vol.~36, no.~1, pp. 239--249, 1984.

\bibitem{OSBORNE}
M.~R. Osborne, ``Scoring with constraints,'' \emph{ANZIAM Journal}, vol.~42,
  no.~1, pp. 9--25, 2000.

\bibitem{LUO_BOUCHARD}
H.~Luo, A.~Bouchard-C{\^o}t{\'e}, G.~C. Freue, and P.~Gustafson, ``The
  constrained maximum likelihood estimation for parameters arising from
  partially identified models,'' \emph{arXiv preprint arXiv:1607.08826}, 2016.

\bibitem{Moore_phd}
T.~J. Moore, ``A theory of {C}ram$\acute{\text{e}}$r-{R}ao bounds for
  constrained parametric models,'' Ph.D. dissertation, University of Maryland,
  College Park, 2010.

\bibitem{Moore_scoring}
T.~J. Moore, B.~M. Sadler, and R.~J. Kozick, ``Maximum-likelihood estimation,
  the {C}ram$\acute{\text{e}}$r-{R}ao bound, and the method of scoring with
  parameter constraints,'' \emph{IEEE Trans. Signal Process.}, vol.~56, no.~3,
  pp. 895--908, Mar. 2008.

\bibitem{SOMEKH_LESHEM}
A.~Somekh-Baruch, A.~Leshem, and V.~Saligrama, ``On the non-existence of
  unbiased estimators in constrained estimation problems,'' \emph{IEEE Trans.
  Inf. Theory}, vol.~64, no.~8, pp. 5549--5554, Aug. 2018.

\bibitem{SSP_EYAL}
E.~Nitzan, T.~Routtenberg, and J.~Tabrikian, ``Limitations of constrained {CRB}
  and an alternative bound,'' in \emph{Proc. of the IEEE Statistical Signal
  Processing Workshop (SSP)}, June 2018, pp. 673--677.

\bibitem{PILZ}
J.~Pilz, ``Minimax linear regression estimation with symmetric parameter
  restrictions,'' \emph{Journal of Statistical Planning and Inference},
  vol.~13, pp. 297--318, 1986.

\bibitem{ELDAR_ROBUST}
Y.~C. Eldar, A.~Ben-Tal, and A.~Nemirovski, ``Robust mean-squared error
  estimation in the presence of model uncertainties,'' \emph{IEEE Trans. Signal
  Process.}, vol.~53, no.~1, pp. 168--181, Jan. 2005.

\bibitem{STOICA_LINEAR}
P.~Stoica and G.~Ganesan, ``Linear regression constrained to a ball,''
  \emph{Digital Signal Processing}, vol.~11, no.~1, pp. 80--90, 2001.

\bibitem{UHLICH}
S.~Uhlich and B.~Yang, ``{MMSE} estimation in a linear signal model with
  ellipsoidal constraints,'' in \emph{Proc. of the IEEE International
  Conference on Acoustics, Speech, and Signal Processing (ICASSP)}, Apr. 2009,
  pp. 3249--3252.

\bibitem{point_est}
E.~L. Lehmann and G.~Casella, \emph{Theory of Point Estimation (Springer Texts
  in Statistics)}, 2nd~ed.\hskip 1em plus 0.5em minus 0.4em\relax Springer,
  1998.

\bibitem{TTB1}
K.~Todros and J.~Tabrikian, ``General classes of performance lower bounds for
  parameter estimation part {I}: Non-{B}ayesian bounds for unbiased
  estimators,'' \emph{IEEE Trans. Inf. Theory}, vol.~56, no.~10, pp.
  5045--5063, Oct. 2010.

\bibitem{LEHMANN_CONCEPT}
E.~L. Lehmann, ``A general concept of unbiasedness,'' \emph{The Annals of
  Mathematical Statistics}, vol.~22, no.~4, pp. 587--592, Dec. 1951.

\bibitem{PCRB}
T.~Routtenberg and J.~Tabrikian, ``Non-{B}ayesian periodic
  {C}ram$\acute{\text{e}}$r-{R}ao bound,'' \emph{IEEE Trans. Signal Process.},
  vol.~61, no.~4, pp. 1019--1032, Feb. 2013.

\bibitem{BAR1}
S.~Bar and J.~Tabrikian, ``Bayesian estimation in the presence of deterministic
  nuisance parameters--part {I}: Performance bounds,'' \emph{IEEE Trans. Signal
  Process.}, vol.~63, no.~24, pp. 6632--6646, Dec. 2015.

\bibitem{CYCLIC}
T.~Routtenberg and J.~Tabrikian, ``Cyclic {B}arankin-type bounds for
  non-{B}ayesian periodic parameter estimation,'' \emph{IEEE Trans. Signal
  Process.}, vol.~62, no.~13, pp. 3321--3336, July 2014.

\bibitem{SELECTION}
T.~Routtenberg and L.~Tong, ``Estimation after parameter selection: Performance
  analysis and estimation methods,'' \emph{IEEE Trans. Signal Process.},
  vol.~64, no.~20, pp. 5268--5281, Oct. 2016.

\bibitem{EXTENDED_ZZ}
K.~L. Bell, Y.~Steinberg, Y.~Ephraim, and H.~L. Van~Trees, ``Extended
  {Z}iv-{Z}akai lower bound for vector parameter estimation,'' \emph{IEEE
  Trans. Inf. Theory}, vol.~43, no.~2, pp. 624--637, Mar. 1997.

\bibitem{ELDAR_WEIGHTED_MSE}
Y.~C. Eldar, ``Universal weighted {MSE} improvement of the least-squares
  estimator,'' \emph{IEEE Trans. Signal Process.}, vol.~56, no.~5, pp.
  1788--1800, May 2008.

\bibitem{GRUBER}
M.~H.~J. Gruber, \emph{Improving Efficiency By Shrinkage}.\hskip 1em plus 0.5em
  minus 0.4em\relax New York: Marcel Dekker, Inc., 1998.

\bibitem{SAM2012constraint}
T.~Routtenberg and J.~Tabrikian, ``Performance bounds for constrained parameter
  estimation,'' in \emph{Proc. of the 7th IEEE Sensor Array and Multichannel
  Signal Processing Workshop (SAM)}, June 2012, pp. 513--516.

\bibitem{CAMPBELL}
S.~L. Campbell and C.~D. Meyer, \emph{Generalized inverses of linear
  transformations}.\hskip 1em plus 0.5em minus 0.4em\relax SIAM, 2009.

\bibitem{localU}
T.~Menni, E.~Chaumette, P.~Larzabal, and J.~P. Barbot, ``New results on
  deterministic {C}ram$\acute{\text{e}}$r-{R}ao bounds for real and complex
  parameters,'' \emph{IEEE Trans. Signal Processing}, vol.~60, no.~3, pp.
  1032--1049, Mar. 2012.

\bibitem{BETTS}
J.~T. Betts, \emph{Practical Methods for Optimal Control Using Nonlinear
  Programming}.\hskip 1em plus 0.5em minus 0.4em\relax SIAM, 2001.

\bibitem{MATRIX_COOKBOOK}
K.~B. Petersen and M.~S. Pedersen, ``The matrix cookbook. version: November 15,
  2012,'' 2012.

\bibitem{GOLUB}
G.~H. Golub and U.~Von~Matt, ``Quadratically constrained least squares and
  quadratic problems,'' \emph{Numerische Mathematik}, vol.~59, no.~1, pp.
  561--580, 1991.

\bibitem{CHEN_REGULAR}
S.~Chen, E.~S. Chng, and K.~Alkadhimi, ``Regularized orthogonal least squares
  algorithm for constructing radial basis function networks,''
  \emph{International Journal of Control}, vol.~64, no.~5, pp. 829--837, 1996.

\bibitem{PROTTER}
M.~Protter, I.~Yavneh, and M.~Elad, ``Closed-form {MMSE} estimation for signal
  denoising under sparse representation modeling over a unitary dictionary,''
  \emph{IEEE Trans. Signal Process.}, vol.~58, no.~7, pp. 3471--3484, July
  2010.

\bibitem{MOON}
T.~K. Moon and W.~C. Stirling, \emph{Mathematical methods and algorithms for
  signal processing}.\hskip 1em plus 0.5em minus 0.4em\relax Prentice-Hall,
  Upper Saddle River, NJ, 2000.

\bibitem{WIESEL_BECK}
A.~Wiesel, Y.~C. Eldar, and A.~Beck, ``Maximum likelihood estimation in linear
  models with a {G}aussian model matrix,'' \emph{IEEE Signal Process. Lett.},
  vol.~13, no.~5, pp. 292--295, May 2006.

\bibitem{RIFE}
D.~Rife and R.~Boorstyn, ``Single tone parameter estimation from discrete-time
  observations,'' \emph{IEEE Trans. Inf. Theory}, vol.~20, no.~5, pp. 591--598,
  Sep. 1974.

\bibitem{PUKKILA}
T.~M. Pukkila and C.~R. Rao, ``Pattern recognition based on scale invariant
  discriminant functions,'' \emph{Information sciences}, vol.~45, no.~3, pp.
  379--389, 1988.

\bibitem{DIRECTIONAL}
K.~V. Mardia and P.~E. Jupp, \emph{Directional Statistics}, ser. Wiley Series
  in Probability and Statistics.\hskip 1em plus 0.5em minus 0.4em\relax
  Chichester: Wiley, 1999.

\bibitem{PAPOULIS}
A.~Papoulis, \emph{Probability, Random Variables, and Stochastic Processes},
  3rd~ed.\hskip 1em plus 0.5em minus 0.4em\relax McGraw Hill, 1991.

\bibitem{PECARIC}
J.~E. Pec̆ari{\'c}, S.~Puntanen, and G.~P.~H. Styan, ``Some further matrix
  extensions of the {C}auchy-{S}chwarz and {K}antorovich inequalities, with
  some statistical applications,'' \emph{Linear algebra and its applications},
  vol. 237, pp. 455--476, 1996.

\end{thebibliography}
\end{document}